\newcommand{\G}{\ensuremath{\mathbb{G}}}
\newcommand{\N}{\ensuremath{\mathbb{N}}}
\newcommand{\Q}{\ensuremath{\mathbb{Q}}}
\newcommand{\R}{\ensuremath{\mathbb{R}}}
\newcommand{\Z}{\ensuremath{\mathbb{Z}}}
\newcommand{\Zp}{\ensuremath{\Z_p}}
\DeclarePairedDelimiter\inner{\langle}{\rangle}
\DeclarePairedDelimiter\abs{\lvert}{\rvert}
\DeclarePairedDelimiter\set{\{}{\}}
\DeclarePairedDelimiter\floor{\lfloor}{\rfloor}
\DeclarePairedDelimiter\ceil{\lceil}{\rceil}
\DeclarePairedDelimiter\norm{\lVert}{\rVert}
 \newcommand{\matA}{\ensuremath{\mathbf{A}}}
\newcommand{\matB}{\ensuremath{\mathbf{B}}}
\newcommand{\matD}{\ensuremath{\mathbf{D}}}
\newcommand{\matE}{\ensuremath{\mathbf{E}}}
\newcommand{\matG}{\ensuremath{\mathbf{G}}}
\newcommand{\matI}{\ensuremath{\mathbf{I}}}
\newcommand{\matU}{\ensuremath{\mathbf{U}}}
\newcommand{\matV}{\ensuremath{\mathbf{V}}}
\newcommand{\matW}{\ensuremath{\mathbf{W}}}
\newcommand{\matzero}{\ensuremath{\mathbf{0}}}
\newcommand{\veca}{\ensuremath{\mathbf{a}}}
\newcommand{\vecb}{\ensuremath{\mathbf{b}}}
\newcommand{\vecc}{\ensuremath{\mathbf{c}}}
\newcommand{\vecg}{\ensuremath{\mathbf{g}}}
\newcommand{\vecr}{\ensuremath{\mathbf{r}}}
\newcommand{\vecs}{\ensuremath{\mathbf{s}}}
\newcommand{\vect}{\ensuremath{\mathbf{t}}}
\newcommand{\vecu}{\ensuremath{\mathbf{u}}}
\newcommand{\vecv}{\ensuremath{\mathbf{v}}}
\newcommand{\vecx}{\ensuremath{\mathbf{x}}}
\newcommand{\vecy}{\ensuremath{\mathbf{y}}}
\newcommand{\veczero}{\ensuremath{\mathbf{0}}}
 \theoremstyle{plain}
\newtheorem{theorem}{Theorem}[section]
\newtheorem{lemma}[theorem]{Lemma}
\newtheorem{corollary}[theorem]{Corollary}
\newtheorem{proposition}[theorem]{Proposition}
\newtheorem{claim}[theorem]{Claim}
\theoremstyle{definition}
\newtheorem{definition}[theorem]{Definition}
\newtheorem{openProblem}[theorem]{Open Problem}
\theoremstyle{remark}
\newtheorem{remark}[theorem]{Remark}
\numberwithin{equation}{section}
\newenvironment{proofb}[1]{\noindent \textit{Proof #1.}}{\hfill \rule{1.5ex}{1.5ex}}
\newenvironment{nproblem}[1][\unskip]{
\medskip
\begin{mdframed}
\noindent
\contour{black}{\underline{$#1$ Problem.}} \\
\noindent}
{\end{mdframed}}
\newcommand{\bit}{\ensuremath{\set{0,1}}}
\DeclareMathOperator{\poly}{poly}
\DeclareMathOperator{\polylog}{polylog}
\DeclareMathOperator{\negl}{negl}
\newcommand{\calH}{\ensuremath{\mathcal{H}}}
\newcommand{\algo}[1]{\ensuremath{\mathsf{#1}}\xspace}
\newcommand{\problem}[1]{\ensuremath{\mathsf{#1}}\xspace}
\newcommand{\class}[1]{\ensuremath{\mathsf{#1}}\xspace}
\renewcommand{\P}{\class{P}}
\newcommand{\FP}{\class{FP}}
\newcommand{\NP}{\class{NP}}
\newcommand{\FNP}{\class{FNP}}
\newcommand{\TFNP}{\class{TFNP}}
\newcommand{\coNP}{\class{coNP}}
\newcommand{\PPP}{\class{PPP}}
\newcommand{\wPPP}{\class{PWPP}}
\newcommand{\PWPP}{\wPPP}
\newcommand{\PPA}{\class{PPA}}
\newcommand{\PLS}{\class{PLS}}
\newcommand{\PPAD}{\class{PPAD}}
\newcommand{\PPADS}{\class{PPADS}}
\newcommand{\CLS}{\class{CLS}}
\newcommand{\Ppoly}{\class{P/poly}}
\newcommand{\lamperp}{\Lambda^{\perp}}
\newcommand{\lam}{\Lambda}
\newcommand{\lat}{\mathcal{L}}
\newcommand{\dual}{\mathcal{L}^{*}}
\newcommand{\piped}{\mathcal{P}}
\DeclareMathOperator{\vol}{vol}
\DeclareMathOperator{\dist}{dist}
\DeclareMathOperator{\spn}{span}
\DeclareMathOperator{\ord}{ord}
\newcommand{\svp}{\problem{SVP}}
\newcommand{\sivp}{\problem{SIVP}}
\newcommand{\cvp}{\problem{CVP}}
\newcommand{\bdd}{\problem{BDD}}
\newcommand{\sis}{\problem{SIS}}
\newcommand{\attacker}[1]{\ensuremath{\mathcal{#1}}}
\newcommand{\Adv}{\attacker{A}}
\newcommand{\scheme}[1]{\ensuremath{\text{#1}}}
\newcommand{\crhf}{\scheme{H}}
\newcommand{\crhfgen}{\algo{Gen}}
\newif\ifnotes\notestrue
\definecolor{mygrey}{gray}{0.50}
\newcommand{\notename}[2]{{\textcolor{red}{\footnotesize{\bf (#1:} {#2}{\bf
) }}}}
\newcommand{\notename}[2]{{}}
\mathchardef\mdash="2D
\renewcommand{\epsilon}{\varepsilon}
\def\binset{\{0,1\}}
\def\compactify{\itemsep=0pt \topsep=0pt \partopsep=0pt \parsep=0pt}
\let\latexusecounter=\usecounter
\newtheorem*{rep@theorem}{\rep@title}
\newcommand{\newreptheorem}[2]{
	\newenvironment{rep#1}[1]{
		\def\rep@title{#2 \ref{##1}}
		\begin{rep@theorem}}
		{\end{rep@theorem}}}
\newenvironment{Enumerate}
  {\def\usecounter{\compactify\latexusecounter}
   \begin{enumerate}}
  {\end{enumerate}\let\usecounter=\latexusecounter}
\newenvironment{Itemize}
{\begin{itemize}
\setlength{\itemsep}{0pt}
\setlength{\topsep}{0pt}
\setlength{\partopsep}{0 in}
\setlength{\parskip}{0 pt}}
{\end{itemize}}
\newcommand{\paragr}[1]{\noindent \textbf{#1}}
\newcommand{\reals}{\mathbb{R}}
\newcommand{\integer}{\mathbb{Z}}
\newcommand{\ubar}[1]{\underaccent{\bar}{#1}}
\renewcommand{\vec}{\boldsymbol}
\newcommand{\circuit}[1]{\mathcal{#1}}
\newcommand{\bvec}[1]{\vec{\ubar{#1}}}
\newcommand{\idlabel}{\text{id}}
\newcommand{\gnand}{~\bar{\wedge}~}
\newcommand{\gnor}{~\bar{\vee}~}
\newcommand{\gxor}{\oplus}
\newcommand{\gand}{\wedge}
\newcommand{\gor}{\vee}
\newcommand{\pigeon}{\problem{PIGEONHOLE~CIRCUIT}}
\newcommand{\collision}{\problem{COLLISION}}
\newcommand{\Blichfeldt}{\problem{BLICHFELDT}}
\newcommand{\cSIS}{\problem{cSIS}}
\newcommand{\Minkowski}{\problem{MINKOWSKI}}
\newcommand{\h}{h}
\newcommand{\weakcSIS}{\problem{weak \mdash cSIS}}
\newcommand{\gone}{1}
\newcommand{\Cindex}{\mathcal{I}}
\newcommand{\Cvalue}{\mathcal{V}}
\newcommand{\Cchara}{\mathcal{CH}}
\renewcommand{\pmod}[1]{~(\bmod~#1)}
\newcommand{\coset}{\textsc{Co}}
\newcommand{\BitDecomp}{\ensuremath{\mathsf{bd}}\xspace}
\newcommand{\BitComp}{\ensuremath{\mathsf{bc}}\xspace}
\newcommand{\graph}{\mathcal{G}}
\newcommand\Tstrut{\rule{0pt}{3.2ex}}
\newcommand\Tstruts{\rule{0pt}{2.6ex}}
\newcommand{\nand}{\mathsf{NAND}}
\newcommand{\HcSIS}{\mathcal{H}_\cSIS}
\begin{document}
\title{$\PPP$-Completeness with Connections to Cryptography}
\date{}
\author{
  Katerina Sotiraki \thanks{email: katesot@mit.edu. The author was partly
  supported by NSF grants CNS-1350619, CNS-1718161, CNS-1414119 and by the
  Chateaubriand Fellowship of the Office for Science and Technology of the
  Embassy of France in the United States.} \\ MIT
  \and Manolis Zampetakis \thanks{email: mzampet@mit.edu. The
  author was supported by NSF grants CCF-1551875, CCF-1617730, CCF-1650733.} \\
  MIT
  \and Giorgos Zirdelis \thanks{email: zirdelis.g@husky.neu.edu. The author
  was supported by NSF grants CNS-1314722, CNS-1413964, CNS-1750795.} \\
  Northeastern University
}
\clearpage
\maketitle

\begin{abstract}
  Polynomial Pigeonhole Principle ($\PPP$) is an important subclass of $\TFNP$
  with profound connections to the complexity of the fundamental cryptographic
  primitives:
   \textit{collision-resistant hash functions} and \textit{one-way
     permutations}. In contrast to most of the other subclasses of $\TFNP$, no
   complete problem is known for $\PPP$. Our work identifies the first
  $\PPP$-complete problem without any circuit or Turing Machine given
  explicitly in the input: $\problem{constrained\mdash SIS}$, and thus we
  answer a longstanding open question from~\cite{Papadimitriou1994}.

  \begin{quote}
    $\problem{constrained\mdash SIS}$: a generalized version of the well-known
    Short Integer Solution problem ($\sis$) from lattice-based cryptography.
  \end{quote}

  \noindent In order to give some intuition behind our reduction for
  $\problem{constrained\mdash SIS}$, we identify another $\PPP$-complete problem
  with a circuit in the input but closely related to lattice problems:
  $\Blichfeldt$.

  \begin{quote}
    $\Blichfeldt$: the computational problem associated with Blichfeldt's
    fundamental theorem in the theory of lattices.
  \end{quote}

  Building on the inherent connection of $\PPP$ with collision-resistant hash
  functions, we use our completeness result to construct the first natural hash
  function family that captures the hardness of all collision-resistant hash
  functions in a worst-case sense, i.e. it is natural and \textit{universal in
  the worst-case}. The close resemblance of our hash function family with
  $\sis$, leads us to the first candidate collision-resistant hash function that
  is both natural and universal \textit{in an average-case sense}.

  Finally, our results enrich our understanding of the connections between $\PPP$,
  lattice problems and other concrete cryptographic assumptions, such as the
  discrete logarithm problem over general groups.
\end{abstract}
\thispagestyle{empty}

\addtocounter{page}{-1}\newpage

\section{Introduction} \label{sec:intro}

The fundamental task of \textit{Computational Complexity} theory is to classify
computational problems according to their inherent computational difficulty.
This led to the definition of \textit{complexity classes} such as $\NP$ which
contains the \textit{decision} problems with \textit{efficiently} verifiable
proofs in the ``yes'' instances. The \textit{search} analog of the class $\NP$,
called $\FNP$, is defined as the class of \textit{search} problems whose
decision version is in $\NP$. The same definition extends to the class $\FP$, as
the search analog of $\P$. The seminal works of \cite{JohnsonPY1988,
Papadimitriou1994} considered search problems in $\FNP$ that are \textit{total},
i.e. their decision version is always affirmative and thus a solution must
always exist. This totality property makes the definition of $\FNP$ inadequate
to capture the intrinsic complexity of total problems in the appropriate way as
it was first shown in \cite{JohnsonPY1988}. Moreover, there were evidences for
the hardness of total search problems e.g. in~\cite{HirschPV1989}. Megiddo and
Papadimitriou~\cite{MeggidoP1989} defined the class $\boldsymbol{\TFNP}$ that
contains the total search problems of $\FNP$, and Papadimitriou
\cite{Papadimitriou1994} proposed the following classification rule of problems
in $\TFNP$:

\begin{quote}
  Total search problems should be classified in terms of the profound
  mathematical principles that are invoked to establish their totality.
\end{quote}

\noindent
Along these lines, many subclasses for $\TFNP$ have been defined.
Johnson, Papadimitriou and Yannakakis \cite{JohnsonPY1988} defined the
class $\boldsymbol{\PLS}$. A few years later, Papadimitriou
\cite{Papadimitriou1994} defined the complexity classes $\boldsymbol{\PPA}$,
$\boldsymbol{\PPAD}$, $\boldsymbol{\PPADS}$ and $\boldsymbol{\PPP}$, each one
associated with a profound mathematical principle in accordance with the
above classification rule. More recently, the classes $\boldsymbol{\CLS}$ and
$\boldsymbol{\wPPP}$ were defined in \cite{DaskalakisP11} and \cite{Jerabek16},
respectively. In Section \ref{sec:relatedWork} we give a high-level description
of all these classes.

Finding complete problems for the above classes is important as it enhances our
understanding of the underlying mathematical principles. In turn, such
completeness results reveal equivalences between total search problems, that
seemed impossible to discover without invoking the definition of these classes.
Since the definition of these classes in \cite{JohnsonPY1988, Papadimitriou1994}
it was clear that the completeness results about problems that don't have
explicitly a Turing machine or a circuit as a part of their input are of
particular importance. For this reason it has been established to call such
problems \textit{natural} in the context of the complexity of total search
problems (see \cite{FilosG18}).

Many natural complete problems are known for $\PLS$ and $\PPAD$, and recently
natural complete problems for $\PPA$ were identified too (see Section
\ref{sec:relatedWork}). However, no natural complete problems are known for the
classes $\PPP$, $\wPPP$ that have profound connections with the hardness of
important cryptographic primitives, as we explain later in detail.

\medskip

\paragr{Our Contributions.} Our main contribution is to provide the first
natural complete problems for $\PPP$ and $\PWPP$, and thus solve a longstanding
open problem from~\cite{Papadimitriou1994}. Beyond that, our $\PPP$ completeness
results lead the way towards answering important questions in cryptography and
lattice theory as we highlight below.
\medskip

\noindent \textsc{Universal Collision-Resistant Hash Function.} Building on the
inherent connection of $\PWPP$ with \textit{collision-resistant hash functions},
we construct a natural hash function family $\HcSIS$ with the following
properties:
\begin{Enumerate}
  \item[-] \textbf{Worst-Case Universality.} No efficient algorithm can find a
           collision in every function of the family $\HcSIS$, unless worst-case
           collision-resistant hash functions do not exist.
           \smallskip

           Moreover, if an (average-case hard)
           collision-resistant hash function family exists, then there exists an
           efficiently samplable distribution $\mathcal{D}$ over $\HcSIS$, such
           that $(\mathcal{D}, \HcSIS)$ is an (average-case hard)
           collision-resistant hash function family.

  \item[-] \textbf{Average-Case Hardness.} No efficient algorithm can find a
    collision in a function chosen \textit{uniformly at random}
    from $\HcSIS$, unless we can efficiently find short lattice vectors in
    any (worst-case) lattice.
\end{Enumerate}

\noindent The first property of $\HcSIS$ is reminiscent of the existence of
\emph{worst-case} one-way functions from the assumption that $\P \neq \NP$
\cite{Selman1992}. The corresponding assumption for the existence of worst-case
collision-resistance hash functions is assuming $\FP \neq \PWPP$,
but our hash function family $\HcSIS$ is the first natural definition that does
not involve circuits, and admits this strong completeness guarantee in the
worst-case.

The construction and properties of $\HcSIS$ lead us to the first candidate of a
\textit{natural} and \textit{universal collision-resistant hash function
family}. The idea of universal constructions of cryptographic primitives was
initiated by Levin in \cite{Levin1987}, who constructed the first universal
one-way function and followed up by \cite{Levin2003, Kozhevnikov2009}. Using the
same ideas we can also construct collision a universal collision-resistant hash
function family as we describe in Appendix \ref{sec:app:universalHash}. The
constructed hash function though invokes in the input an explicit description of
a Turing machine and hence it fails to be \textit{natural}, with the definition
of naturality that we described before. In contrast, our candidate construction
is natural, simple, and could have practical applications.
\medskip

\noindent \textsc{Complexity of Lattice Problems in $\PPP$.}
The hardness of lattice problems in $\NP \cap \coNP$~\cite{FOCS:AhaReg04} has
served as the foundation for numerous cryptographic constructions in the past
two decades. This line of work was initiated by the breakthrough work of
Ajtai~\cite{Ajtai1996}, and later developed in a long series of works (e.g.
\cite{AjtaiD97, MicRegev07, Regev09, STOC:GenPeiVai08, STOC:Peikert09,
STOC:GorVaiWee13, STOC:BLPRS13, BrakerskiV14, C:GenSahWat13, STOC:GorVaiWic15,
FOCS:GoyKopWat17, WichsZ17, STOC:PeiRegSte17}). This wide use of search
(approximation) lattice problems further motivates their study.

We make progress in understanding this important research front by showing that:
\begin{Enumerate}
\item the computational problem $\Blichfeldt$ associated with Blichfeldt's
theorem, which can be viewed as a generalization of Minkowski's theorem, is
$\PPP$-complete,
\item the $\cSIS$ problem, a constrained version of the Short Integer Solution
($\sis$), is $\PPP$-complete,
\item we combine known results and techniques from lattice theory to show
that most approximation lattice problems are reducible to $\Blichfeldt$ and $\cSIS$.
\end{Enumerate}

These results create a new path towards a better understanding of lattice
problems in terms of complexity classes.
\medskip

\noindent \textsc{Complexity of Other Cryptographic Assumptions.} Besides
lattice problems, we discuss the relationship of other well-studied
cryptographic assumptions and $\PPP$. Additionally, we formulate a white-box
variation of the \textit{generic group model} for the discrete logarithm
problem~\cite{Shoup1997}; we observe that this problem is in $\PPP$ and is
another natural candidate for being $\PPP$-complete.
\smallskip

\subsection{Related Work} \label{sec:relatedWork}
  In this section we discuss the previous work on the complexity of total search
problems, that has drawn attention from the theoretical computer science
community over the past decades. We start with a high-level description of the
total complexity classes and then discuss the known results for each one of
them.

\begin{description}
  \item[$\boldsymbol{\PLS}$.] The class of problems whose totality is
                              established using a potential function argument.\\
                              \textit{Every finite directed acyclic graph
                              has a sink.}
  \item[$\boldsymbol{\PPA}$.] The class of problems whose totality is proved
                              through a parity argument.\\
                              \textit{Any finite graph has an even number of
                              odd-degree nodes.}
  \item[$\boldsymbol{\PPAD}$.] The class of problems whose totality is proved
                              through a directed parity argument.\\
                              \textit{All directed graphs of degree two or less
                              have an even number of degree one nodes.}
  \item[$\boldsymbol{\PPP}$.] The class of problems whose totality is proved
                              through a pigeonhole principle argument.\\
                              \textit{Any map from a set $S$ to itself either is
                              onto or has a collision.}
\end{description}

\noindent Using the same spirit two more classes were defined after
\cite{Papadimitriou1994}, in \cite{DaskalakisP11} and \cite{Jerabek16}.
\begin{description}
  \item[$\boldsymbol{\CLS}$.] The class of problems whose totality is
                              established using both a potential function
                              argument and a parity argument.
  \item[$\boldsymbol{\wPPP}$.] The class of problems whose totality is proved
                              through a weak pigeonhole principle. \\
                              \textit{Any map from a set $S$ to a strict subset
                              of $S$ has a collision.}
\end{description}

\begin{wrapfigure}{r}{0.34\textwidth}
  \begin{center}
    \includegraphics[width=0.30\textwidth]{./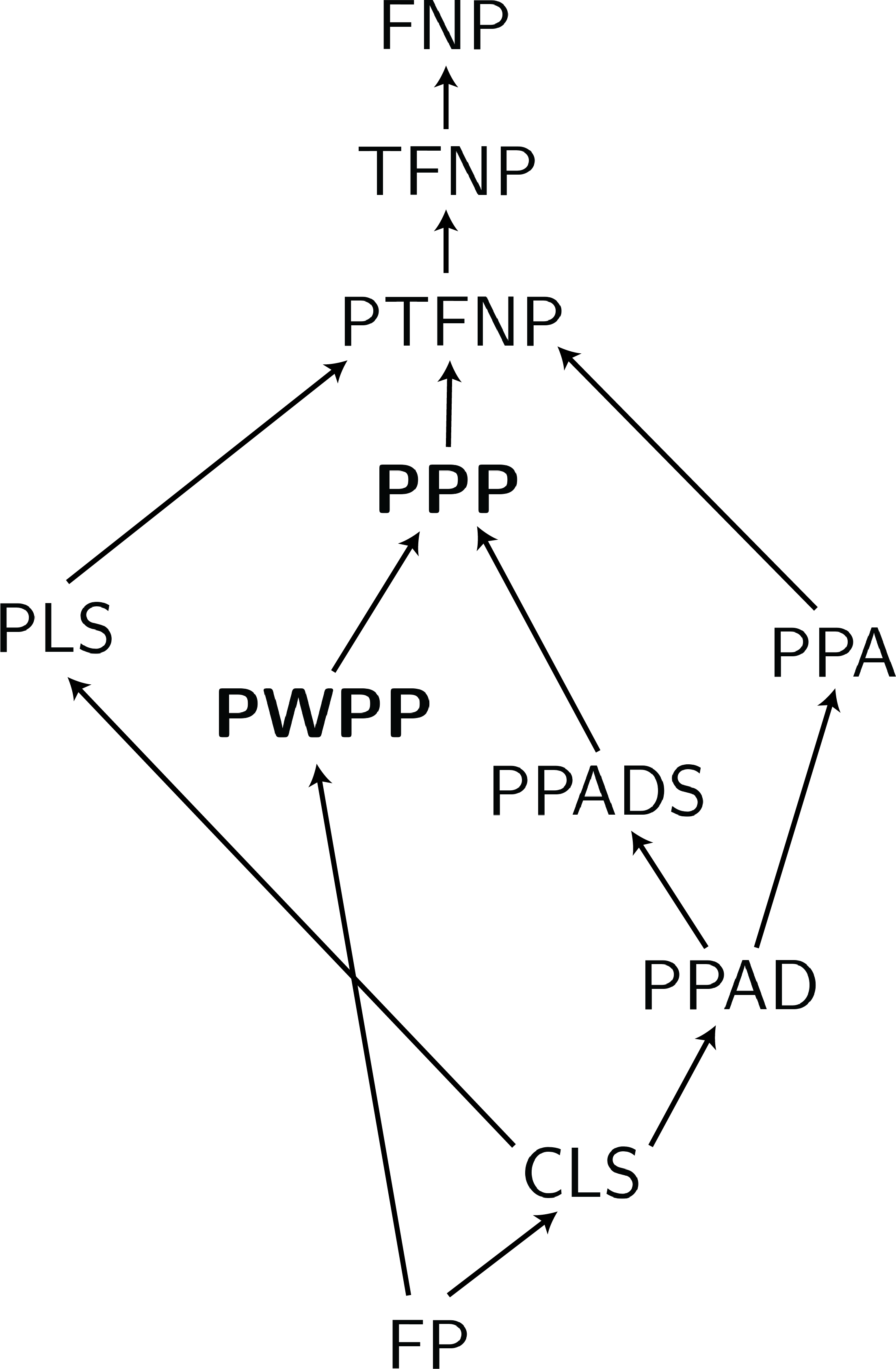}
  \end{center}
  \caption{The classes $\PPP$ and $\PWPP$ in the $\TFNP$ world.}
\end{wrapfigure}

\noindent Recently, a syntactic analog $\class{PTFNP}$ of the semantic class
$\TFNP$ has been defined in \cite{GoldbergP17}, and a complete problem for this
class has been identified. It has also been shown that all the classes we
described above are subsets of $\class{PTFNP}$. Oracle separations between all these
classes are known \cite{BeameCEIP1998}, with the only exception of whether
\class{PLS} is contained in \class{PPAD}.

\paragr{$\boldsymbol{\PLS}$-completeness.} The class $\PLS$ represents the
complexity of local optimization problems. Some important problems that have
been shown to be $\PLS$-complete are: Local Max-Cut \cite{SchafferY1991}, Local
Travelling Salesman Problem \cite{Papadimitriou1992}, and Finding a Pure Nash
Equilibrium \cite{FabrikantPT04}. Recently, important results for the smoothed
complexity of the Local Max-Cut problem were shown in~\cite{EtscheidR17, AngelBPW17}.

\paragr{$\boldsymbol{\PPAD}$-completeness.} Arguably, the most celebrated
application of the complexity of total search problems is the characterization
of the computational complexity of finding a Nash equilibrium
in terms of $\PPAD$-completeness \cite{DaskalakisGP09, ChenDT09}. This problem
lies in the heart of game theory and economics. The proof that Nash Equilibrium
is $\PPAD$-complete initiated a long line of research in the intersection of
computer science and game theory and revealed connections between the two
scientific communities that were unknown before (e.g. \cite{ElkindGG06, ChenDDT09,
VaziraniY11, KintaliPRST13, ChenO15, Rubinstein15, Rubinstein16, ChenPY17,
SchuldenzuckerSB17}).

\paragr{$\boldsymbol{\PPA}$-completeness.} $\PPA$-complete problems usually
arise as the undirected generalizations of their $\PPAD$-complete analogs. For
example, Papadimitriou \cite{Papadimitriou1994}
showed that Sperner's Lemma in a 3-D cube is $\PPAD$-complete and later Grigni
\cite{Grigni01} showed that Sperner’s Lemma in a 3-manifold consisting of the
product of a M{\"o}bius strip and a line segment is $\PPA$-complete. Since
M{\"o}bius strip is non-orientable, this indeed is a non-directed version of the
Sperner's Lemma. Similarly, other problems have been showed to be
$\PPA$-complete, all involving some circuit as an input in their definition
\cite{AisenbergBB15, DengEFLQX16, BelovsIQSY17}. Recently, the first natural
$\PPA$-complete problem, without a circuit as part of the input, has been
identified in \cite{FilosG18}. This illustrates an interesting relation between
$\PPA$ and complexity of social choice theory problems.

\paragr{$\boldsymbol{\CLS}$-completeness.} The $\CLS$ class was defined in
\cite{DaskalakisP11} to capture the complexity of problems such as
P-matrix LCP, computing KKT-points, and finding Nash equilibria in
congestion and network coordination games. Recently, it has been proved that
the problem of finding a fixed point whose existence invokes Banach's Fixed
Point Theorem, is $\CLS$-complete \cite{DaskalakisTZ18, FearnleyGMS17}.

\paragr{$\boldsymbol{\TFNP}$ and cryptography.} The connection of $\TFNP$ and
cryptography was illustrated by Papadimitriou in~\cite{Papadimitriou1994}, where
he proved that if $\PPP = \FP$ then \textit{one-way permutations} cannot exist.
In \cite{Buresh06}, a special case of integer factorization was shown to be in
$\PPA \cap \PPP$. This was generalized in \cite{Jerabek16} by proving that the
problem of factoring integers is in $\PPA \cap \PPP$ under randomized
reductions. Recently, strong cryptographic assumptions were used to prove the
average-case hardness of $\PPAD$ and $\CLS$ \cite{BitanskyPR15, GargPS16,
HubavcekY17}. In \cite{RosenSS17} it was shown that average-case $\PPAD$
hardness does not imply one-way function under black-box reductions, whereas in
\cite{HubacekNY17} it was shown that any hard on average problem in $\NP$
implies the average case hardness of $\TFNP$. Finally, in \cite{KomargodskiNY17}
it is proved that the existence of \textit{multi-collision resistant hash
functions} is equivalent with a variation of the total search problem
$\problem{RAMSEY}$, which is not known to belong to any of the above complexity
classes. Interestingly, they prove that a variation of $\problem{RAMSEY}$ called
\textit{colorful-Ramsey} ($\problem{C \mdash RAMSEY}$) is $\PWPP$-hard. Although
this an important result, the problem $\problem{C \mdash RAMSEY}$ still invokes
a circuit in the input and in not known to be in $\wPPP$, hence does not resolve
the problem of identifying a natural complete problem for $\wPPP$.

\paragr{$\boldsymbol{\TFNP}$ and lattices.} In~\cite{BanJPPR15} it was shown
that the computational analog of Minkowski's theorem (namely $\Minkowski$) is in
$\PPP$, was conjectured that it is also $\PPP$-complete. The authors justified
their conjecture by showing that $\problem{EQUAL \mdash SUMS}$, a problem
from~\cite{Papadimitriou1994} that is conjectured to be $\PPP$-complete, reduces
to $\Minkowski$. Additionally, they show that a number theoretic problem called
$\problem{DIRICHLET}$ reduces to $\Minkowski$, and thus is in $\PPP$.
In~\cite{conf/ipco/HobergRRY17} it is proven that the problem $\problem{NUMBER
\mdash BALANCING}$ is equivalent to a polynomial approximation of Minkowski's
theorem in the $\ell_2$ norm (via Cook reductions for both directions).

\subsection{Roadmap of the paper}
We start our exposition with a brief description of the results contained in
this paper. First we briefly describe the $\PPP$-completeness of $\Blichfeldt$
that illustrates some of the basic ideas behind our main result that $\cSIS$ is
$\PPP$-complete. The complete proof of the $\PPP$-completeness of $\Blichfeldt$
can be found in Section~\ref{sec:Blichfeldt}. We suggest to readers that have
experience with the fundamental concepts of the theory of lattices to skip the
details Section~\ref{sec:Blichfeldt}.

Then, we present a brief description of our main theorem and its proof. The
complete proof of the $\PPP$-completeness of $\cSIS$ can be found in
Section~\ref{sec:cSIS}.

In Section \ref{sec:collision} we describe the $\PPP$-completeness of a weaker
version of $\cSIS$ and its relation with the definition of the first natural
universal collision resistant hash function family in the worst-case sense. This
proof also provides the first candidate for a collision resistant hash function
family that is both natural and universal in the average-case sense.

Finally in Section \ref{sec:lattices} we present, for completeness of our
exposition, other lattice problems that are already known to belong to $\PPP$
and $\PWPP$ and in Section \ref{sec:cryptoHard} we present more general other
cryptographic assumptions that belong to $\PPP$ and $\PWPP$.

\subsection{Overview of the Results}
Before we describe our results in more detail we define the class $\PPP$ more
formally. The class $\PPP$ contains the set of problems that are reducible to
the $\pigeon$ problem. The input to $\pigeon$ is a binary circuit
$\circuit{C} : \binset^n \to \binset^n$ and its output is either an
$\bvec{x} \in \binset^n$ such that $\circuit{C}(\bvec{x}) = \bvec{0}$, or
a pair $\bvec{x}, \bvec{y} \in \binset^{n}$ such that $\bvec{x} \neq \bvec{y}$
and $\circuit{C}(\bvec{x}) = \circuit{C}(\bvec{y})$.

Our first and technically most challenging result is to identify and prove the
$\PPP$-completeness of two problems, both of which share similarities with
lattice problems. For our exposition, a lattice $\lat \subseteq \Z^n$ can be
viewed as a finitely generated additive subgroup of $\Z^n$. A lattice $\lat =
\lat(\matB) := \matB \cdot \Z^n$ is generated by a full-rank matrix $\matB \in
\Z^{n \times n}$, called \textit{basis}. In the rest of this section we also use
the \emph{fundamental parallelepiped} of $\lat$ defined as $\piped(\lat) :=
\matB \cdot [0,1)^n$.
\medskip

\subsubsection{\texorpdfstring{$\boldsymbol{\Blichfeldt}$}{$\Blichfeldt$} is
\texorpdfstring{$\boldsymbol{\PPP}$}{$\PPP$}-complete.}

We define the $\Blichfeldt$ problem as the computational analog of Blichfeldt's
theorem (see Theorem \ref{thm:BlichfeldtTheorem}). Its input is a basis for a
lattice $\lat \subseteq \Z^n$ and a set $S \subseteq \Z^n$ of cardinality
greater or equal to the volume of $\piped(\lat)$. Its output is either a point
in $S$ that belongs to $\lat$, or for two (different) points in $S$ such that
their difference belongs to $\lat$. In the overview below, we explain why such
an output always exists. Notice that finding a solution to $\Blichfeldt$ becomes
trivial if the input representation of $S$ has length proportional to its size,
i.e. one can iterate over all element pairs of $S$. The problem becomes
challenging when $S$ is represented succinctly. We introduce a notion for a
succinct representation of sets that we call \emph{value function}. Informally,
a value function for a set $S$ is a small circuit that takes as input
$\ceil{\log(|S|)}$ bits that describe an index $i \in \{0,\ldots,|S|-1\}$, and
outputs $\vecs_i \in S$.

We give a proof overview of our first main theorem, and highlight the obstacles
that arise, along with our solutions.

\begin{reptheorem}{thm:BlichfeldtPPPcompleteness}
	The $\Blichfeldt$ problem is $\PPP$-complete.
\end{reptheorem}

\smallskip

\noindent \textsc{$\PPP$ Membership of $\Blichfeldt$ Overview.}
We denote with $[n]$ the set $\{0,\ldots,n-1\}$. We define the map $\vec{\sigma}
: \Z^n \rightarrow \piped(\lat) \cap \Z^n$ that reduces any point in $\Z^n$
modulo the parallelepiped to $\piped(\lat) \cap \Z^n$, i.e.
$\pmod{\piped(\lat)}$. Using $\vec{\sigma}$ we can efficiently check the
membership of any $\vecv \in \Z^n$ in $\lat$, by checking if $\vec{\sigma}$ maps
$\vecv$ to the origin. Observe that if $\vec{\sigma}(\vecx) =
\vec{\sigma}(\vecy)$  then $\vecx - \vecy \in \lat$.

We show in Lemma~\ref{lm:inclusion-Blichfeldt} that $\vol(\piped(\lat)) =
|\piped(\lat) \cap \Z^n|$, hence the input requirement for $S$ is equivalent to
$|S| \geq |\piped(\lat) \cap \Z^n|$. Notice that the points of $S$ after
applying the map $\vec{\sigma}$, either have a collision in $\piped(\lat) \cap
\Z^n$ or a preimage of the origin exists in $S$. It follows by a pigeonhole
argument that a solution to $\Blichfeldt$ always exists. For the rest of this
part we assume that $|S| = |\piped(\lat) \cap \Z^n|$ and let $n =
\ceil{\log(|S|)}$.

We construct a circuit $\circuit{C}:\binset^{n} \to \binset^n$ that on input an
appropriate index $i$, evaluates the value function of $S$ to obtain $\vecs_i
\in S$, and computes $\vec{\sigma}(\vecs_i)$. The most challenging part of the
proof is to construct an efficient map from $\vec{\sigma}(S)$ to $[|\piped(\lat)
\cap \Z^n|]$ in the following way. We define an appropriate parallelepiped $D =
[L_1] \times [L_2] \times \dots \times [L_n]$ where the $L_i$ are non-negative
integers, and a bijection $\vec{\pi} : \piped(\lat) \cap \Z^n \rightarrow D$.
Because $D$ is a cartesian product, a natural efficient indexing procedure
exists as described in Lemma~\ref{lm:cubeSetSuccinctDescription}. This allows to
map $\vec{\pi}(\vec{\sigma}(\vecs_i))$ to $j \in [|\piped(\lat) \cap \Z^n|]$.
The circuit $\circuit{C}$ outputs the binary decomposition of $j$. It follows
that any $\bvec{x}$ such that $\circuit{C}(\bvec{x}) = \bvec{0}$ corresponds to
a vector $\vecx \in S$ such that $\vec{\sigma}(\vecx) = \veczero$. On the other
hand, a collision $\circuit{C}(\bvec{x}) = \circuit{C}(\bvec{y})$ with $\bvec{x}
\neq \bvec{y}$ corresponds to a collision $\vec{\sigma}_S(\vecx) =
\vec{\sigma}_S(\vecy)$, where $\vec{\sigma}_S$ is the restriction of
$\vec{\sigma}$ on $S$, and hence $\vecx - \vecy \in \lat$.
\medskip

\noindent \textsc{$\PPP$ Hardness of $\Blichfeldt$ Overview.} We start with a
circuit $\circuit{C}: \binset^n \rightarrow \binset^n$ that is an input to
$\pigeon$. We construct a set $S$ and a lattice $\lat$ as input to $\Blichfeldt$
in the following way. The set $S$ contains the elements
$\vecs_{\bvec{x}}=\begin{bmatrix} \bvec{x} \\ \circuit{C}(\bvec{x})
\end{bmatrix} \in \binset^{2n}$ and is represented succinctly with the value
function that maps $\bvec{x}$ to $\vecs_{\bvec{x}}$. Notice that $|S|=2^n$. The
lattice $\lat$ consists of all $\bvec{v} \in \binset^{2n}$ that satisfy the
equation $[\vec{0}_n~~\matI_n] \cdot \bvec{v} = \vec{0} \pmod{2}$. By
Lemma~\ref{lm:determinantQary}, one can efficiently obtain a basis from this
description of $\lat$ and in addition the volume of $\piped(\lat)$ is at most
$2^n$. Thus, $S$ and $\lat$ is a valid input for $\Blichfeldt$.

The output of $\Blichfeldt$ is either an $\vecs_{\bvec{x}} = \begin{bmatrix}
\bvec{x} \\  \circuit{C}(\bvec{x}) \end{bmatrix} \in S \cap \lat$ that (by
construction of $\lat$) implies $\circuit{C}(\bvec{x}) = \veczero$, or two
different elements of $S$, $\vecs_{\bvec{x}} = \begin{bmatrix} \bvec{x} \\
\circuit{C}(\bvec{x}) \end{bmatrix}$, $\vecs_{\bvec{y}} = \begin{bmatrix}
\bvec{y} \\  \circuit{C}(\bvec{y}) \end{bmatrix}$ with $\vecs_{\bvec{x}} -
\vecs_{\bvec{y}} \in \lat$ that implies $\bvec{x} \neq \bvec{y}$ and (by
construction of $\lat$) $\circuit{C}(\bvec{x}) = \circuit{C}(\bvec{y})$.

\medskip

\subsubsection{\texorpdfstring{$\boldsymbol{\cSIS}$}{$\cSIS$} is
\texorpdfstring{$\boldsymbol{\PPP}$}{$\PPP$}-complete.}
Part of the input to $\Blichfeldt$ is represented with a value function which
requires a small circuit. As we explained before this makes $\Blichfeldt$ a
non-natural problem with the respect to the definition of naturality in the
context of the complexity of total search problems. We now introduce a natural
problem that we call \emph{constrained Short Integer Solution} ($\cSIS$), and
show that it is $\PPP$-complete. The $\cSIS$ problem is a generalization of the
well-known \emph{Short Integer Solution} ($\sis$) problem, and discuss their
connection in Section~\ref{sec:ac-HcSIS}.

The input is $\matA \in \Z_q^{n \times m}$, $\matG \in \Z_q^{d \times m}$ and
$\vecb \in \Z_q^{d}$, for some positive integer $q$ and $m \geq
(n+d)\ceil{\log(q)}$. The matrix $\matG$ has the property that for every $\vecb$
we can efficiently find an $\vecx \in \binset^m$ such that $\matG \vecx = \vecb
\pmod{q}$. We define such matrices as \emph{binary invertible}. The output is
either a vector $\vecx \in \binset^m$ such that $\matA \vecx = \matzero
\pmod{q}$ and $\matG \vecx = \vecb \pmod{q}$, or two different vectors
$\vecx,\vecy \in \binset^m$ such that $\matA (\vecx - \vecy) = \matzero
\pmod{q}$ and $\matG \vecx = \matG \vecy= \vecb \pmod{q}$. We give a proof
overview of the next theorem, and a full proof in Section~\ref{sec:cSIS}.

\begin{reptheorem}{thm:cSISPPPcompleteness}
	The $\cSIS$ problem is $\PPP$-complete.
\end{reptheorem}
\smallskip

\noindent \textsc{$\PPP$ Membership of $\cSIS$ Overview.}
We show the membership of $\cSIS$ in $\PPP$ for the general class of binary
invertible matrices $\matG$ in Section~\ref{sec:cSIS}. In order to simplify the
exposition, we assume that $q= 2^{\ell}$ and $\matG$ to be the ``gadget'' matrix
concatenated with a random matrix $\matV$. That is, $\matG$ has the form
$\left[\matI_d \otimes \vec{\gamma}^T \,\,\, \matV\right]$ where $\vec{\gamma}^T
= [1, 2, \ldots, 2^\ell]$.

Let $\matA \in \Z_q^{n \times m}$, $\matG \in \Z_q^{d \times m}$, and $\vecb \in
\Z_q^d$ be the input to $\cSIS$. We now explain why $m \geq (n+d)\ell$ suffices
to always guarantee a solution to $\cSIS$. First, observe that the first $\ell
\cdot d$ columns of $\matG$, corresponding to the gadget matrix $[\matI_d
\otimes \vec{\gamma}^T]$, are enough to guarantee that for every $\vecr'\in
\Z_q^{m-\ell d}$ there exists an $\vecr$ such that $\matG\begin{bmatrix} \vecr
\\ \vecr' \end{bmatrix} = \vecb \pmod{q}$. Hence, there are at least $q^{m- \ell
d}$ solutions to the equation $\matG \vecx = \vecb \pmod{q}$. Also, there are
$2^{\ell n}$ possible values for $\matA \vecx \pmod{q}$. By a pigeonhole
argument a solution to $\cSIS$ always exists. To complete the membership proof,
issues similar to $\Blichfeldt$ with the circuit representation of the problem
instance appear, but we overcome them using similar ideas.
\medskip

\noindent \textsc{$\PPP$ Hardness of $\cSIS$ Overview.} We start with a circuit
$\circuit{C}: \binset^n \rightarrow \binset^n$ that is an input to $\pigeon$.
Since the input of $\pigeon$ is a circuit and the input of $\cSIS$ is a pair of
matrices and a vector, we need to represent this circuit in an algebraic way. In
particular, we device a way to encode the circuit in a binary invertible matrix
$\matG$ and a vector $\vecb$. To gain a better intuition of why this is
possible, we note that a $\mathsf{NAND}$ gate $x \gnand y = z$ can be expressed
as the linear modular equation $x + y + 2z -w = 2 \pmod{4}$, where $x,y,z,w \in
\binset$. By a very careful construction, we can encode these linear modular
equations in a binary invertible matrix $\matG$. For further details we defer to
Section~\ref{sec:cSIS}.

Since $\cSIS$ with $q = 4$ returns a vector such that $\matA \vecx = \vec{0}
\pmod{4}$ and $\pigeon$ asks for a binary vector such that $\circuit{\bvec{x}} =
\bvec{0}$, a natural idea is to let $\matA$ be of the form $\left[\vec{0}
\,\,\,\matI_n\right]$, where the identity matrix corresponds to the columns
representing the output of circuit $\circuit{C}$ in $\matG$. Finally, we argue
that a solution to $\cSIS$ with input $\matA, \matG$ and $\vecb$ as constructed
above, gives either a collision or a preimage of zero for the circuit
$\circuit{C}$ as required.

It can be argued that this reduction shares common ideas with the reduction of
$3\mdash\mathsf{SAT}$ to $\mathsf{SUBSET\mdash SUM}$; this shows the importance
of the input conditions for $\cSIS$ and hints to the numerous complications that
arise in the proof. Without these conditions, we could end up with a trivial
reduction to an $\NP$-hard problem! Fortunately, we are able to show that our
construction satisfies the input conditions of $\cSIS$.

\medskip

\subsubsection{Towards a Natural and Universal Collision-Resistant Rash Family.}
$\wPPP$ is a subclass of $\PPP$ in which a collision always exists; it is not
hard to show that variations of both $\Blichfeldt$ and $\cSIS$ are
$\wPPP$-complete. We tweak the parameters of valid inputs in order to guarantee
that a collision always exists. The $\wPPP$-complete variation of $\cSIS$, which
we denote by $\weakcSIS$, gives a function family which is a universal
collision-resistant hash function family in a worst-case sense: if there is a
function family that contains at least one function for which it is hard to find
collisions, then our function family also includes a function for which it is
hard to find collisions.

We now describe the differences of $\cSIS$ and $\weakcSIS$. As before we assume
that $q = 2^\ell$. The input to $\weakcSIS$ is a matrix $\matA \in \Z_q^{n
\times m}$, and a binary invertible matrix $\matG \in \Z_q^{d \times m}$. Notice
that there is no vector $\vecb$ in the input,  and the relation between $n,m,d$
and $\ell$ is that $m$ has to be \emph{strictly} greater that $\ell (n + d)$.
Namely, $m > \ell(n + d)$. This change in the relation of the parameters might
seem insignificant, but is actually very important, as it allows us to replace
$\vecb$ in $\cSIS$ by the zero vector. This transforms $\weakcSIS$ into a pure
lattice problem: on input matrices $\matA,\matG$ with corresponding bases
$\matB_{\matA}$ and $\matB_{\matG}$, where $\matG$ is binary invertible, find
two vectors $\vecx$ and $\vecy$ such that $\vecx, \vecy \in \lat(\matB_{\matG})$
and $\vecx - \vecy \in \lat(\matB_{\matA})$.

The great resemblance of $\weakcSIS$ with $\sis$ and its completeness for
$\wPPP$ lead us to the first candidate for a universal collision-resistant hash
function $\HcSIS = \{ \h_{\bvec{s}}: \binset^k \rightarrow \binset^{k'}\}$:
\begin{Enumerate}
\item[-] The key $\bvec{s}$ is a pair of matrices $(\matA, \matG)$, where $\matG$
is binary invertible.
\item[-] Given a key $\bvec{s} = (\matA, \matG)$ and a binary vector $\bvec{x} \in
\binset^k$, $\h_{\bvec{s}}(\bvec{x})$ is the binary decomposition of $\matA
\vecu \pmod{q}$,  where $\vecu= \begin{bmatrix} \vecr \\ \bvec{x} \end{bmatrix}$
such that  $\matG\vecu = \vec{0} \pmod{q}$.
\end{Enumerate}

Because lattice problems have worst-to-average case reductions and our hash
family is based on a lattice problem, this gives hope for showing that our
construction is universal in the average-case sense.

\medskip

\subsubsection{Other Lattice Problems Known to be in
\texorpdfstring{$\boldsymbol{\PPP}$}{$\PPP$}.}
We show that the computational analog of Minkowski's theorem, namely
$\Minkowski$, is in $\PPP$ via a Karp-reduction to $\Blichfeldt$. We note that a
Karp-reduction showing $\Minkowski \in \PPP$ was shown in~\cite{BanJPPR15}.
Based on these two problems and the known reductions between lattice problems,
we conclude that a variety of lattice (approximation) problems belong to $\PPP$;
the most important among them are $n \mdash \svp $, $\tilde{O}(n)$-$\sivp$ and
$n^{2.5} \mdash \cvp $ (see Figure~\ref{fig:problemDiagram}).

\subsubsection{Other Cryptographic Assumptions in
\texorpdfstring{$\boldsymbol{\PPP}$}{$\PPP$}.}
By the definition, the class $\wPPP$ contains all cryptographic assumptions that
imply collision-resistant hash functions. These include the factoring of Blum
integers, the Discrete Logarithm problem over $\Z_p^*$ and over elliptic curves,
and the $\sis$ lattice problem (a special case of $\weakcSIS$). Also,
Je\v{r}\'abek~\cite{Jerabek16} showed that the problem of factoring integers is
in $\wPPP$.

We extend the connection between $\PPP$ and cryptography by introducing a
white-box model to describe \emph{general groups}, which we define to be cyclic
groups with a succinct representation of their elements and group operation
(i.e. a small circuit). We show that the Discrete Logarithm over general groups
is in $\PPP$. An example of a general group is $\Z_q^*$. These connections are
also summarized in Figure~\ref{fig:problemDiagram}.

\begin{figure}[H]
  \centering
  \includegraphics[scale=0.3]{./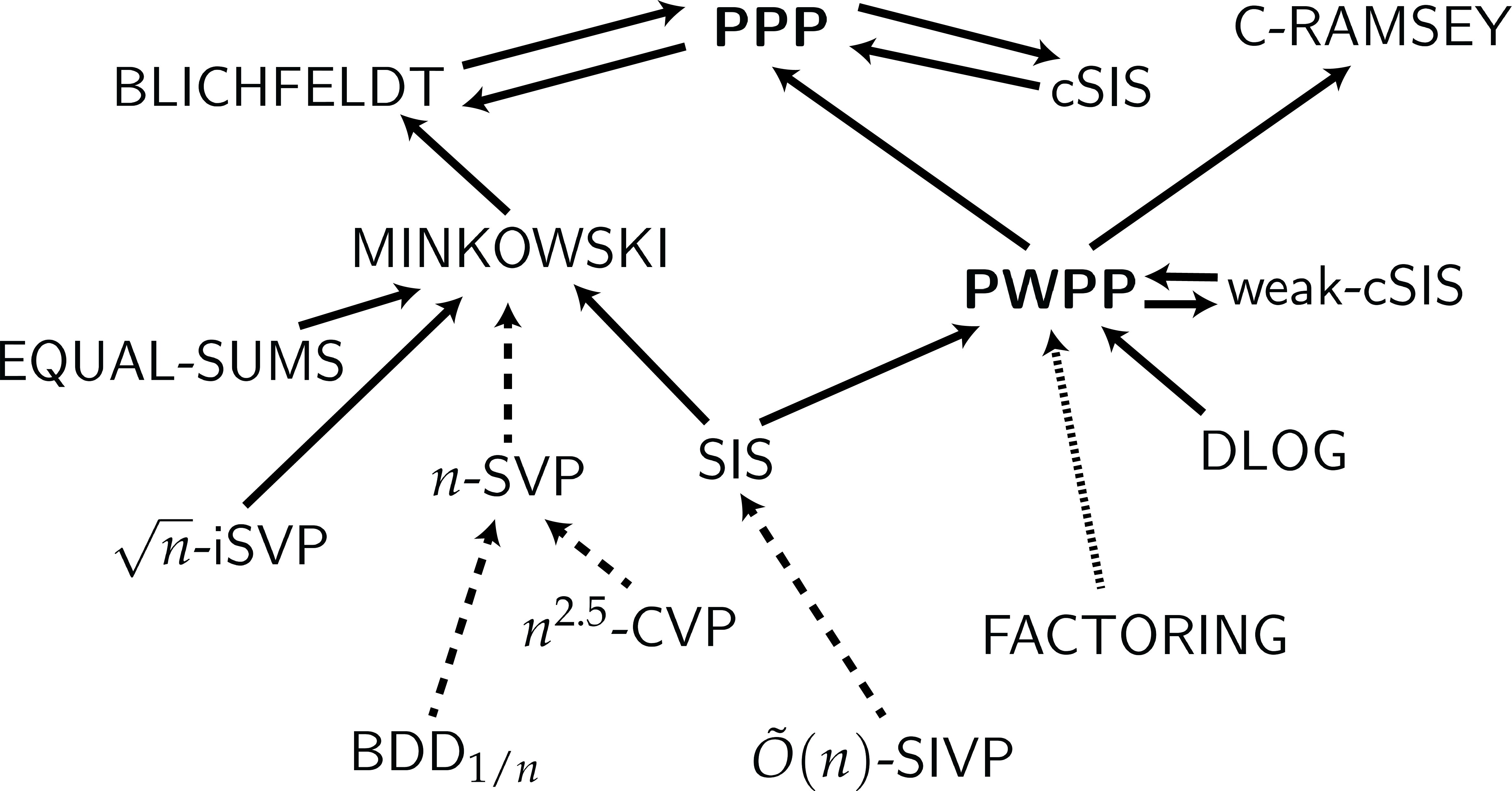}
  \caption{Solid arrows denote a Karp reduction, and dashed arrows denote a Cook
  reduction.}
  \label{fig:problemDiagram}
\end{figure}

\medskip

\subsection{Open questions.} Numerous new questions arise from our work and the
connections we draw between $\PPP$, cryptography and lattices. We summarize here
some of them.

\begin{openProblem}
  \textit{Is there a worst-to-average case reduction from $\weakcSIS$ to
  itself?} \\
  This result will provide the first \textit{natural}, in the sense that does
  not invoke explicitly a Turing machine in the input, and \textit{universal}
  collision resistant hash function family.
\end{openProblem}

\begin{openProblem}
  \textit{Is $\sis$ or $\Minkowski$ $\PPP$-hard?}
\end{openProblem}

\begin{openProblem}
  \textit{Is $\gamma$-$\svp$ in $\PPP$ for $\gamma = o(n)$?}
\end{openProblem}

\begin{openProblem}
  \textit{Is $\gamma$-$\cvp$ $\PPP$-hard for $\gamma = \Omega(\sqrt{n})$?}
\end{openProblem}

\begin{openProblem} \label{op:ellipticCurves}
  \textit{Is the discrete logarithm problem in $\PPP$ for general elliptic
  curves?}
\end{openProblem}

\section{Preliminaries} \label{sec:model}

\paragr{General Notation.} Let $[m]$ be the set $\{0, \dots, m - 1\}$, $\N =
\{0,1,2,\dots\}$ and  $\Z_+ = \{1,2,3,\dots\}$ We use small bold letters
$\vec{x}$ to refer to real vectors in finite dimension $\R^d$ and capital bold
letters $\matA$ to refer to matrices in $\R^{d \times \ell}$. For a matrix
$\matA$, we denote by $\veca^T_i$ its $i$-th row and by $a_{i,j}$ its $(i,j)$-th
element. Let $\matI_n$ denote the $n$-dimensional identity matrix. We denote
with $\matE_{i,j}$ the matrix that has all zeros except that $e_{i,j}=1$. A
function $\negl(k)$ is \textit{negligible} if $\negl(k)<1/k^c$ for any constant
$c>0$ and sufficiently large $k$. All logarithms $\log(\cdot)$ are in base 2.
\medskip

\paragr{Vector Norms.} We define the $\ell_p$-norm of $\vec{x} \in \reals^{d}$
to be $\norm{\vec{x}}_p = \left( \sum_i x_i^p \right)^{1/p}$ and the
$\ell_{\infty}$-norm of $\vec{x}$ to be $\norm{\vec{x}}_{\infty} = \max_{i}
\abs{x_i}$. For simplicity we use $\norm{\cdot}$ for the $\ell_2$-norm instead
of $\norm{\cdot}_2$. It is well known that $\norm{\vec{x}}_p \le n^{1/p - 1/q}
\norm{\vec{x}}_q$ for $p \le q$ and $\norm{\vec{x}}_p \le \norm{\vec{x}}_q$ for
$p > q$.
\medskip

\subsection{Complexity Classes and Reductions} \label{sec:prelims:complexity}

\paragr{Binary Strings and Natural Numbers.} We use bold and underlined small
letters to refer to binary strings. Binary strings $\bvec{x} \in \{0, 1\}^{k}$
of length $k$ can also be viewed as vectors in $\Z^k$. Every binary string
$\bvec{x} \in \{0, 1\}^k$ can be mapped to a non negative integer number through
the nonlinear map $\BitComp : \{0, 1\}^* \to \Z_+$ called \textit{bit
composition}, where $\BitComp(\bvec{x}) = \sum_{i = 0}^{k-1} \ubar{x}_{k - i}
2^i$. It is trivial to see that actually $\BitComp$ is a bijective mapping and
hence we can define the inverse mapping $\BitDecomp : \Z_+ \to \{0, 1\}^*$
called \textit{bit decomposition}, which is also trivial to compute for any
given number $m \in \Z_+$.
\smallskip

\noindent
The bit decomposition function $\BitDecomp$ is extended to integer vectors and
the result is the concatenation of the bit decomposition of each coordinate of
the vector. Similarly, this is also extended to integer matrices. Then of course
the bit composition function $\BitComp$ is no longer well defined because its
output can be either a number or a vector of numbers, but for simplicity we
still use the notation $\BitComp$ and it will be made clear from the context
whether the output is a number or a vector of numbers. When $\BitDecomp$ is
applied to a set $\{m_1, \dots, m_k\}$ the output is still a set with the bit
decomposition of each element $\{\BitDecomp(m_1), \dots, \BitDecomp(m_k)\}$.
\medskip

\paragr{Boolean Circuits.} A \emph{boolean circuit} $\circuit{C}$ with $n$
inputs and $1$ output is represented as a labeled directed acyclic graph with
in-degree at most $2$, with exactly $n$ source nodes and exactly $1$ sink node.
Each source node is an input of $\circuit{C}$ and the sink node is the output of
$\circuit{C}$. Each of the input nodes of $\circuit{C}$ is labeled with a number
in $[n]$ denoting the ordering of the input variables. Each node with in-degree
$1$ is labeled with one of the $2$ boolean functions with $1$ variable
$\{\idlabel, \neg\}$. Each node with in-degree $2$ can be labeled with one of
the $16$ boolean function with two variables, but for our purposes we are going
to use only the following five boolean functions: \textit{nand, nor, xor, and,
or}, with corresponding symbols $\{\gnand, \gnor, \gxor, \gand, \gor\}$. Every
boolean circuit defines a boolean function on $n$ variables $\circuit{C} : \{0,
1\}^n \to \{0, 1\}$. Let $\bvec{x}$ be a binary string of length $n$, i.e.
$\bvec{x} \in \{0, 1\}^n$. The value $\circuit{C}(\bvec{x})$ of the circuit on
input $\bvec{x}$ is computed by evaluating  of $\circuit{C}$ one by one in a
topological sorting of $\circuit{C}$, starting from the input nodes. Then,
$\circuit{C}(\bvec{x})$ is the value of the output node. The size
$\abs{\circuit{C}}$ of $\circuit{C}$ is the number of nodes in the graph that
represents $\circuit{C}$.
\medskip

\paragr{Circuits.} We can now define a circuit $\circuit{C}$ with $n$ inputs and
$m$ outputs as an ordered tuple of $m$ boolean circuits $\circuit{C} =
(\circuit{C}_1, \dots, \circuit{C}_m)$ which defines a function $\circuit{C} :
\{0, 1\}^n \to \{0, 1\}^m$, where $\circuit{C}(\bvec{x}) =
(\circuit{C}_1(\bvec{x}), \dots, \circuit{C}_m(\bvec{x}))$. The size
$\abs{\circuit{C}}$ of $\circuit{C}$ is equal to $\abs{\circuit{C}_1} + \cdots +
\abs{\circuit{C}_2}$. It is known that $\P \subseteq \Ppoly$
(see~\cite{arora09}), where $\Ppoly$ is the class of polynomial-sized circuits.
Thus, any polynomial time procedure we describe, implies an equivalent circuit
of polynomial size.
\medskip

\paragr{Search Problems.} A \emph{search problem} in $\FNP$ is defined by a
relation $\mathcal{R}$ that on input $x$ of size $n$ and for every $y$ of size
$\poly(n)$, $\mathcal{R}(x,y)$ is polynomial-time computable on $n$. A solution
to the search problem with input $x$ is a $y$ of size $\poly(n)$ such that
$\mathcal{R}(x,y)$ holds.

\noindent A search problem is \emph{total} if for every input $x$ of size $n$,
there exists a $y$ of size $\poly(n)$ such that $\mathcal{R}(x,y)$ holds.
\noindent The class of total search problems in $\FNP$ is called $\TFNP$.
\medskip

\paragr{Karp Reductions Between Search Problems.}  A search problem
$\mathcal{P}_1$ is \emph{Karp-reducible} to a search problem $\mathcal{P}_2$ if
there exist polynomial-time (in the input size of $\mathcal{P}_1$) computable
functions $f$ and $g$ such that if $x$ is an input of $\mathcal{P}_1$, then
$f(x)$ is an input of $\mathcal{P}_2$ and if $y$ is any solution of
$\mathcal{P}_2$ with input $f(x)$ then $g(x, f(x), y)$ is a solution of
$\mathcal{P}_1$.
\medskip

\paragr{Cook Reductions Between Search Problems.} A search problem
$\mathcal{P}_1$ is \emph{Cook-reducible} to a search problem $\mathcal{P}_2$ if
there exists a polynomial-time (in the input size of $\mathcal{P}_1$) oracle
Turing machine $\mathcal{T}$ such that if $x$ is an input of $\mathcal{P}_1$,
$\mathcal{T}$ computes a $y$ such that $y$ is a solution of $\mathcal{P}_1$
whenever all the oracle answers are solutions of $\mathcal{P}_2$. The set of all
search problems that are Cook-reducible to problem $\mathcal{P}$ is denoted by
$\mathsf{FP}^{\mathcal{P}}$.
\medskip

\paragr{The $\PPP$ Complexity Class.} The class $\PPP$ is a subclass of $\TFNP$
and consists of all search problems Karp-reducible to the following problem called
$\pigeon$.

\begin{nproblem}[\pigeon]
\textsc{Input:} A circuit $\circuit{C}$ with $n$ inputs and $n$ outputs. \\
\textsc{Output:} One of the following:
\begin{Enumerate}
  \item a binary vector $\bvec{x}$ such that $\circuit{C}(\bvec{x}) = \bvec{0}$, or
  \item two binary vectors $\bvec{x} \neq \bvec{y}$, such that
        $\circuit{C}(\bvec{x}) = \circuit{C}(\bvec{y})$.
\end{Enumerate}
\end{nproblem}

\paragr{The weak $\PPP$ Complexity Class.} The class $\wPPP$ is the set of all
search problems Karp-reducible to the following problem called $\collision$.

\begin{nproblem}[\collision]
\textsc{Input:} A circuit $\circuit{C}$ with $n$ inputs and $m$ outputs with $m < n$. \\
\textsc{Output:} Two binary vectors $\bvec{x} \neq \bvec{y}$, such that
$\circuit{C}(\bvec{x}) = \circuit{C}(\bvec{y})$.
\end{nproblem}

\subsection{Set Description Using Circuits} \label{sec:prelims:setDescription}
Let $S \subseteq \N^n$ and let $\BitDecomp(S) \subseteq \{0,1\}^k$, i.e. the
elements of $S$ can be represented using $k$ bits. As we will see later there is
an inherent connection between proofs of both the inclusion and the hardness of
$\PPP$ and the succinct representation of subsets $S$ using circuits. We define
here three such representations: the \textit{characteristic function}, the
\textit{value function} and the \textit{index function}.
\smallskip

\paragr{Characteristic Function.} We say that a circuit $\Cchara_S$ with $k$
binary inputs and one output is a characteristic function representation of $S$
if $\Cchara_S(\bvec{x}) = 1$ if and only if $\bvec{x} \in \BitDecomp(S)$.
\smallskip

\paragr{Value Function.} Let $(s, \Cvalue_S)$ be a tuple where $\Cvalue_S$ is a
circuit with $\ceil{\log (s)}$ binary inputs and $k$ outputs and $s \in \Z_+$.
Let $f_{(s, \Cvalue_S)} : [s] \to \{0, 1\}^k$ be a function such that $f_{(s,
\Cvalue_S)}(\BitComp(\bvec{x})) = \Cvalue_S(\bvec{x})$ for all $\bvec{x}$ with
$\BitComp(\bvec{x}) < s$. Then, $(s, \Cvalue_S)$ is a value function
representation of $S$ if and only if $f_{(s, \Cvalue_S)}$ is a bijective map
between $[s]$ and $\BitDecomp(S)$.  The value $\Cvalue_S(\bvec{x})$ can be
arbitrary when $\BitComp(\bvec{x}) \ge s$.
\medskip

\paragr{Index Function.} Let $(s, \Cindex_S)$ be a tuple where $\Cindex_S$ is a
circuit with $k$ binary inputs and $\ceil{\log (s)}$ outputs and $s \in \Z_+$.
Let $f_{(s, \Cindex_S)} : \BitDecomp(S) \to [s]$ be a function such that $f_{(s,
\Cindex_S)}(\bvec{x}) = \BitComp(\Cindex_S(\bvec{x}))$ for all $\bvec{x} \in
\BitDecomp(S)$. Then, $(s, \Cindex_S)$ is an index function representation of
$S$ if and only if $f_{(s, \Cindex_S)}$ is a bijective map between
$\BitDecomp(S)$ and $[s]$. The value $\Cindex_S(\bvec{x})$ can be arbitrary when
$\bvec{x} \not\in \BitDecomp(S)$.
\smallskip

Some remarks about the above definitions are in order. First, given a succinct
representation of $S$ it is computationally expensive to compute $\abs{S}$, thus
we provide it explicitly using $s$. Second, even though the input and the output
of each circuit have to be binary vectors, we abuse notation and let the input
of the index function and the characteristic function, and the output of the
value function to be a element in $\N^n$. Formally, according to the above
definitions the output of $\Cvalue_S$ is the bit decomposition of an element in
$S$, namely $\BitComp(\Cvalue_S(\bvec{x})) \in S$. In the rest of the paper, we
abuse notation and drop $\BitComp$ to denote by $\Cvalue_S(\bvec{x})$ the vector
in $S$. Similarly, we drop $\BitComp$ and $\BitDecomp$ for the characteristic
and the index functions.

To illustrate the definitions of succinct representations of sets, we explain
how to define them in the simple case of  the set $\left([0, L_1] \times \cdots
\times [0, L_n]\right) \cap \Z^n$. Although this is a simple example, it is an
ingredient that we need when we show the connection of lattice problems with the
class $\PPP$.

\begin{lemma} \label{lm:cubeSetSuccinctDescription}
    Let $L_1, \dots, L_n > 0$ and   $S = ([0, L_1] \times
  \cdots \times [0, L_n]) \cap \Z^n$. Then, the following exist:
  \begin{Enumerate}
    \item a characteristic function representation $\Cchara_S$ of $S$, where
          $\abs{\Cchara_S} = O(n \max_i \log L_i)$,
    \item a value function representation $(\Cvalue_S, s)$ of $S$, where
          $\abs{\Cvalue_S} = O(n^2 \max_i \log L_i)$ and
    \item an index function representation $(\Cindex_S, s)$ of $S$, where
          $\abs{\Cindex_S} = O(n^2 \max_i \log L_i)$.
  \end{Enumerate}
\end{lemma}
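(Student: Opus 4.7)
The plan is to give explicit circuit constructions for all three representations using standard fixed-precision arithmetic building blocks (comparators, adders, multiplication by a constant, and division by a constant), and then bound the sizes by routine gate counting.

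Fix notation: let $\ell_i = \lceil \log(L_i+1)\rceil$, $\ell = \max_i \ell_i$, and encode a point $\vecx = (x_1,\dots,x_n) \in S$ as the concatenation $\BitDecomp(x_1)\|\cdots\|\BitDecomp(x_n)$ of total length $k = \sum_i \ell_i = O(n\ell)$. Take $s = \prod_{i=1}^n (L_i+1)$, so $\lceil \log s\rceil \le k$.

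For the characteristic function I would, for each coordinate $i$, build an $\ell_i$-bit comparator that outputs $1$ iff $x_i \le L_i$ (size $O(\ell_i)$) and AND the $n$ resulting bits together. The total size is $O\!\left(\sum_i \ell_i\right) = O(n\ell)$, matching the claim.

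For the index function $(\Cindex_S,s)$ I would compute the mixed-radix encoding $j = \sum_{i=1}^n x_i \cdot P_i$ where $P_i = \prod_{k<i}(L_k+1)$, using Horner's rule. The $P_i$ are constants hard-wired into the circuit (of bit length at most $k$), so each iteration multiplies the short $\ell$-bit input $x_i$ by the constant $P_i$ and adds the result into an accumulator of length $k$; with schoolbook constant-multiplication plus an adder this is $O(k)$ gates per iteration, for a total of $O(nk) = O(n^2\ell)$. The output is $\BitDecomp(j)$, and uniqueness of mixed-radix representation gives the required bijection from $\BitDecomp(S)$ to $[s]$.

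For the value function $(\Cvalue_S, s)$ I would invert the Horner evaluation: iteratively set $x_i := j \bmod (L_i+1)$ and then replace $j$ by $\lfloor j/(L_i+1)\rfloor$, for $i = 1,\dots,n$. Each step is a division of a $k$-bit integer by the constant $L_i+1$, implementable by the standard bit-serial long-division circuit whose running remainder stays $O(\ell)$ bits wide. Because the dividend shrinks by $\ell_i$ bits per iteration, a careful accounting across the $n$ iterations yields an $O(n^2\ell)$-gate circuit. The output concatenates $\BitDecomp(x_1),\dots,\BitDecomp(x_n)$, and bijectivity with $[s]$ — and consistency with $\Cindex_S$ — follows again from uniqueness of mixed-radix digits.

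The only step that requires real care is the gate-count bookkeeping for the constant-divisor divisions in the value function, and to a lesser extent the constant multiplications in the index function; the rest is an immediate application of textbook constructions for comparison and addition. No new idea is needed — the point is to exhibit the three circuits in a concrete mixed-radix form that makes indexing into the axis-aligned box $[0,L_1]\times\cdots\times[0,L_n]$ computationally explicit.
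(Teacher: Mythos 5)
Your construction is essentially the same as the paper's: lexicographic (mixed-radix) indexing of the axis-aligned box, with the characteristic function built from per-coordinate comparators, the index function from a Horner-style accumulation of digit-weighted coordinates, and the value function from iterated division/mod by the radices $L_i+1$. The paper phrases the value and index circuits recursively via the lexicographic position and quotient/remainder with the tail product $\prod_{j>i}\ell_j$, whereas you peel off one radix at a time; these are the same bijection with a different digit-order convention, and either one works.

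One point to be careful about: the ``careful accounting'' you invoke for the value function does not actually give $O(n^2\max_i\log L_i)$ with bit-serial long division. At step $i$ you divide an accumulator of width $k_i=\sum_{j\ge i}\ell_j$ by a divisor of width $\ell_i$, and bit-serial long division costs $\Theta(k_i\ell_i)$ gates (each of the $k_i$ bit-shifts triggers an $O(\ell_i)$-gate compare-and-subtract on the running remainder). Summing over $i$ gives $\sum_i k_i\ell_i=\Theta\!\left(\left(\sum_i\ell_i\right)^2\right)=\Theta(n^2\ell^2)$ with $\ell=\max_i\log L_i$, not $O(n^2\ell)$; the shrinking dividend saves only a constant factor. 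The same issue is present in the index-function bound (the constant-multiply/accumulate per step costs $\Theta(k\ell)$, not $\Theta(k)$), and, for that matter, the paper's own proof also asserts $O(n^2\max_i\log L_i)$ without an accounting that reaches it. This does not matter for anything downstream --- all that is used is that the circuits have size polynomial in $n$ and $\max_i\log L_i$ --- but you should either state the bound as $O\!\left(\left(\sum_i\log L_i\right)^2\right)$ (which your construction genuinely achieves) or supply a construction that actually hits $O(n^2\ell)$ rather than asserting that bookkeeping yields it.
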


\begin{proof} Let
$\ell_1 = \floor{L_1} + 1$, $\dots$, $\ell_n = \floor{L_n} + 1$.
\par
  \begin{Enumerate}
    \item Given the bit decomposition of a vector $\vecx \in \Z^n$, we can
          easily test if $\vecx$ belongs to $S$ by checking whether
          $x_i \le \ell_i$ for all $i$. Such a comparison needs $O(\log L_i)$
          boolean gates and hence the size of $\Cchara_S$ is
          $O(n \max_i \log L_i)$.
    \item Let $k$ be the input number to our value function, the vector
          that we assing to $k$ is the $k$-th vector in the lexicographical
          ordering of the elements in $\BitDecomp(S)$. We compute this vector
          $\vecx \in \Z^n$ coordinate by coordinate. We start from $x_1$.
          Observe for any $t \in \Z_+$, the number of vectors in $S$ with $x_1 =
          t$ is equal to $\prod_{i = 2}^n \ell_i$  and hence the number of
          vectors with $x_1 \le t$ is equal to $(t + 1) \cdot \prod_{i = 2}^n
          \ell_i$. Therefore,  $x_1 = \floor{k/\prod_{i = 2}^n \ell_i}$. Then,
          the  dimension of the problem reduces by one and therefore we can
          repeat the same procedure to compute $x_2$ as the first coordinate of
          the $\left(k - \left( \prod_{i = 2}^n \ell_i \right) \cdot
          \floor{k/\prod_{i = 2}^n \ell_i} \right)$-th vector in the set $S' =
          ([0, L_2] \times \cdots \times [0, L_n]) \cap \Z^{n - 1}$. Then we
          apply the procedure recursively. Moreover, this whole task can be made
          into an iterative procedure, with a circuit of size
          $O(n^2 \max_i \log L_i)$.
    \item Let $\vecx \in S$ be the vector whose index we want to compute.
          The index that we assign to this vector is its position
          in the lexicographical ordering of the vectors in $\BitDecomp(S)$.
          The number of
          vectors $\vecy \in S$ with $y_1 < x_1$ that are before $\vecx$ in the
          lexicographical ordering is equal to $(x_1 - 1) \cdot \prod_{i =
            2}^n \ell_i$. Therefore, using the recursion of the
          construction of the circuit for the value function of $S$ above,
          we see that the lexicographical index can be computed by a circuit
          of size $O(n^2 \max_i \log L_i)$.
   \end{Enumerate}
\end{proof}

In the next sections, the constructions of value and index functions for sets
lie at the heart of our proofs for showing membership in $\PPP$, and we make
frequent use of the above Lemma~\ref{lm:cubeSetSuccinctDescription}.
Specifically, in Section \ref{sec:Blichfeldt}, we see that the set of integer
cosets of a lattice admits an index function that can be implemented with a
polynomial sized circuit, and in Section \ref{sec:cryptoHard} we see that any
cyclic group admits a value function that can be implemented with a
polynomial-size circuit. In the latter case, we demonstrate that an efficient
implementation of an index function of a group shows that the Discrete Logarithm
problem for this group belongs to the class $\PPP$.

\subsection{Lattice Basics} \label{ssec:modelLattices}

\paragr{Lattice.}
A $n$-dimensional \emph{lattice} $\lat \subset \reals^n$ is the set of all
integer linear combinations of $d$ linearly independent vectors $\matB =
(\vecb_1,\ldots,\vecb_d)$ in $\reals^n$, $\lat = \lat(\matB) = \left\{
\sum_{i=1}^{d} a_i \vecb_i ~:~ a_i \in \integer \right\}$. The integer $d$ is
the rank of the lattice, and if it is equal to the dimension $n$ we refer to
$\lat$ as \emph{full-rank}. The matrix $\matB$ is called the \emph{basis} of
lattice $\lat(\matB)$. Unless explicitly stated, the lattices on all
definitions, statements and proofs are assumed to be full-rank integer lattices.
Our results can be easily extended to $d$-rank lattices, but for ease of
exposition we present only the full-rank case. A useful lemma we will use is the
following:
\begin{lemma} \label{lm:unimodularBasis}
	  Let $\matB \in \Z^{n \times n}$, $\matU \in \Z^{n \times n}$ be a
	unimodular matrix, i.e. $\det(\matU) = \pm1$, then
  $\lat(\matB) =	\lat(\matB \matU)$.
\end{lemma}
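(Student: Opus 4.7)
The plan is to prove the set equality $\lat(\matB) = \lat(\matB\matU)$ by showing both inclusions, where the key leverage is the integrality of $\matU^{-1}$ that follows from unimodularity.

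First I would handle the easy inclusion $\lat(\matB\matU) \subseteq \lat(\matB)$. Any vector in $\lat(\matB\matU)$ has the form $(\matB\matU)\vecz$ for some $\vecz \in \Z^n$. Since $\matU$ has integer entries, the product $\matU\vecz$ lies in $\Z^n$, and therefore $(\matB\matU)\vecz = \matB(\matU\vecz) \in \lat(\matB)$. This direction actually only uses that $\matU \in \Z^{n\times n}$ and not the determinant condition.

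For the reverse inclusion $\lat(\matB) \subseteq \lat(\matB\matU)$, I need to use unimodularity. The key observation is that because $\det(\matU) = \pm 1$, the adjugate formula $\matU^{-1} = \tfrac{1}{\det(\matU)}\operatorname{adj}(\matU)$ shows that $\matU^{-1}$ has integer entries (the adjugate of an integer matrix is an integer matrix). Hence for any $\vecz \in \Z^n$, the vector $\vecz' = \matU^{-1}\vecz$ lies in $\Z^n$, and we may write
\[
\matB\vecz = \matB\matU\matU^{-1}\vecz = (\matB\matU)\vecz' \in \lat(\matB\matU).
\]
Combining the two inclusions yields the lemma.

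The proof has no real obstacle; the only substantive point is justifying that $\matU^{-1}$ is an integer matrix, which is precisely what $\det(\matU) = \pm 1$ buys us via Cramer's rule or the adjugate formula. I would keep the argument to a few lines, as this lemma is a standard fact used implicitly throughout the rest of the paper when changing lattice bases.
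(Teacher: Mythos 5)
Your proof is correct and complete: the forward inclusion uses only integrality of $\matU$, and the reverse inclusion correctly isolates the role of unimodularity, namely that $\matU^{-1} = \pm\operatorname{adj}(\matU)$ is an integer matrix. The paper itself states this lemma without proof, treating it as a standard fact about lattice bases, so there is no authorial argument to compare against; your two-inclusion argument via the adjugate formula is exactly the textbook proof one would supply.
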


\paragr{Smith Normal Form.}
A matrix $\matD \in \Z^{n\times n}$ is in \emph{Smith Normal Form (SNF)} if it
is diagonal and $d_{i+1,i+1}$ divides $d_{i,i}$ for $1 \leq i < n$. Moreover,
any non-singular matrix $\matA \in \Z^{n\times n}$ can be written as $\matA =
\matU \matD \matV$, where $\matD \in  \Z^{n\times n}$ is a unique matrix in SNF,
and $\matU,\matV \in  \Z^{n\times n}$ are unimodular matrices. Also, the
matrices $\matU,\matD,\matV$ can be computed in polynomial-time in
$n$~\cite{KaltofenV05}.
\smallskip

\paragr{Dual Lattice.} The \emph{dual lattice} $\dual$ is defined as the set of
all vectors in the span of $\matB$ that have integer inner product with $\lat$.
That is, $\dual = \{\vecy \in \spn(\matB) ~:~ \forall \vecx \in
\lat,~\inner{\vecy, \vecx} \in \integer \}$.
\smallskip

\paragr{$q$-ary Lattice.} A lattice $\lat \subseteq \Z^n$ is called a $q$-ary
lattice if $(q\integer)^n \subseteq \lat$. Let $\matA \in \Z^{m \times n}$,
 we define the following two types of $q$-ary lattices
\begin{align} \label{eq:qaryLatticesDefinition}
	\lam_q(\matA)     & = \left\{\vecx \in \integer^n \mid \vecx^T = \vecy^T
	\matA \pmod{q} \text{ where } \vecy \in \Z^m\right\}, \\
	\lamperp_q(\matA) & = \left\{\vecx \in \integer^n \mid \matA \vecx =
	\matzero
	\pmod{q}\right\}.
\end{align}
\noindent The following lemma for $q$-ary lattices is useful in our proofs. For
a simple proof of this Lemma \ref{lm:determinantQary} we refer to Sections 2.3
and 2.4 of \cite{AlwenP11}.

\begin{lemma} \label{lm:determinantQary}
    Let $\matA \in \Z_q^{m \times n}$, then:
  \begin{Enumerate}
    \item $\lamperp_q(\matA) = q \lam_q^*(\matA)$ and
          $\det\left(\lamperp_q(\matA)\right) \le q^n$,
    \item there exists a polynomial-size circuit $\circuit{BS}$ that on input
      $\BitDecomp(\matA)$ it outputs $\BitDecomp(\matB)$ such that
          $\matB \in \Z^{n \times m}$ and $\lamperp_q(\matA) = \lat(\matB)$.
  \end{Enumerate}
\end{lemma}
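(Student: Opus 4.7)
My plan is to prove the two parts separately, relying on standard facts about $q$-ary lattices and on polynomial-time algorithms for the Smith Normal Form.

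For part~1, I would first establish the volume bound and then the duality identity. The key observation is that $q\Z^n \subseteq \lamperp_q(\matA)$, because $\matA(q\vecz) = q\matA\vecz \equiv \matzero \pmod{q}$ for every $\vecz \in \Z^n$. Since a sublattice has determinant at least that of its superlattice, this immediately gives $\det(\lamperp_q(\matA)) \le \det(q\Z^n) = q^n$. For the identity $\lamperp_q(\matA) = q\lam_q^*(\matA)$ I would argue both inclusions directly from the definitions. If $\vecx \in \lamperp_q(\matA)$ and $\vecz \in \lam_q(\matA)$, write $\vecz^T = \vecy^T \matA + q\vecw^T$; then $\langle \vecx/q,\vecz\rangle = \tfrac{1}{q}(\vecy^T \matA \vecx) + \vecw^T\vecx \in \Z$ because $\matA\vecx \equiv \matzero \pmod q$, which shows $\vecx/q \in \lam_q^*(\matA)$. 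Conversely, if $\vecu \in \lam_q^*(\matA)$, then pairing $\vecu$ with each row $\veca_i^T$ of $\matA$ (which itself lies in $\lam_q(\matA)$) gives $(\matA\vecu)_i \in \Z$, hence $\matA(q\vecu) = q\matA\vecu \equiv \matzero \pmod q$, i.e.\ $q\vecu \in \lamperp_q(\matA)$.

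For part~2, the plan is to reduce the computation of a basis of $\lamperp_q(\matA)$ to the Smith Normal Form of $\matA$. Using the SNF decomposition $\matA = \matU\matD\matV$ over $\Z$ with $\matU,\matV$ unimodular and $\matD$ diagonal, the congruence $\matA\vecx \equiv \matzero \pmod{q}$ becomes $\matD(\matV\vecx) \equiv \matzero \pmod{q}$ because $\matU$ is invertible modulo $q$. Setting $\vecy = \matV\vecx$, each coordinate must satisfy $d_{i,i} y_i \equiv 0 \pmod q$, i.e.\ $y_i$ is a multiple of $q/\gcd(q,d_{i,i})$. The set of valid $\vecy$ is therefore the full-rank lattice generated by the diagonal matrix $\matD'$ with entries $q/\gcd(q,d_{i,i})$, and a basis of $\lamperp_q(\matA)$ is then $\matB = \matV^{-1}\matD'$ (invoking Lemma~\ref{lm:unimodularBasis} to justify that multiplying by the unimodular $\matV^{-1}$ preserves the lattice). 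Since SNF, gcd computations, and unimodular matrix inversion are all polynomial-time in $n,m$ and $\log q$, and by the fact $\P \subseteq \Ppoly$ noted in the preliminaries, this procedure can be realized as a polynomial-size circuit $\circuit{BS}$ operating on $\BitDecomp(\matA)$ and outputting $\BitDecomp(\matB)$.

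The only mildly delicate point is the handling of non-square or rank-deficient $\matA$: one must pad $\matD$ appropriately so that coordinates corresponding to zero rows of $\matD$ are unconstrained (giving the relation $y_i \in \Z$, i.e.\ $q/\gcd(q,0) := 1$ in the diagonal), and coordinates beyond the column dimension are ignored. Once this bookkeeping is handled correctly, everything reduces to cited polynomial-time primitives and there is no substantive obstacle; the main work is verifying that the SNF reduction commutes correctly with the reduction modulo $q$, which is immediate from $\det(\matU) = \pm 1$.
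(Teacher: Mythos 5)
The paper does not prove Lemma~\ref{lm:determinantQary} itself; it refers the reader to Sections 2.3 and 2.4 of~\cite{AlwenP11}, which use essentially the same normal-form approach. So you have supplied a self-contained version of the standard argument rather than a genuinely different one.

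The argument is correct in substance, but two steps should be tightened. In the converse inclusion of Part~1 you establish $\matA(q\vecu)\equiv\matzero\pmod{q}$, but to conclude $q\vecu\in\lamperp_q(\matA)$ you must also verify that $q\vecu\in\Z^n$; this follows because $q\vece_i\in q\Z^n\subseteq\lam_q(\matA)$, and hence $qu_i=\inner{\vecu,q\vece_i}\in\Z$ for every $i$. In Part~2, the passage $\vecx=\matV^{-1}\vecy$ is not an instance of Lemma~\ref{lm:unimodularBasis}: that lemma concerns \emph{right} multiplication of a basis by a unimodular matrix ($\lat(\matB)=\lat(\matB\matU)$), which leaves the generated lattice unchanged, whereas here you are applying the linear map $\matV^{-1}$ to the lattice $\lat(\matD')$, which changes it. What you actually need is just the identity $\{\vecx:\matV\vecx\in\lat(\matD')\}=\matV^{-1}\matD'\Z^n=\lat(\matV^{-1}\matD')$, which is immediate from the definition of $\lat(\cdot)$ and the invertibility of $\matV$ over $\Z$. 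The rest — the determinant bound from $q\Z^n\subseteq\lamperp_q(\matA)$, the dual-lattice identity, the $q/\gcd(q,d_{i,i})$ diagonal, and the appeal to $\P\subseteq\Ppoly$ — is fine, and the non-square/rank-deficient bookkeeping, while stated a bit loosely (the unconstrained coordinates $y_i$ correspond to zero \emph{columns} of $\matD$, not zero rows), is cosmetic.
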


\paragr{Fundamental Parallelepiped.} The fundamental parallelepiped of
$\lat(\matB)$ is defined as the set $\piped(\matB)= \{\sum_{i=1}^{n} t_i \vecb_i
~:~ 0 \leq t_i < 1 \}$. Given a full-rank lattice $\lat(\matB)$, we can define
the operator $\pmod{\piped(\matB)}$ on vectors in $\R^n$ such that $\vecy =
\vecx \pmod{\piped(\matB)}$ if $\vecy = \matB \left( \matB^{-1} \vecx -
\floor{\matB^{-1} \vecx} \right)$.
\smallskip

\paragr{Determinant.} The \emph{determinant} of a lattice is the volume of the
fundamental parallelepiped, $\det(\lat) = \sqrt{\det(\matB^T \matB)}$ or simply
$|\det(\matB)|$ for full-rank lattices.
\smallskip

\paragr{Lattice Cosets.} For every $\vecc \in \reals^n$ we define the lattice
coset as $\lat + \vecc = \{ \vecx + \vecc ~:~ \vecx \in \lat \}$. Let
$\matB \in \Z^{n \times n}$ be a set of $n$ linearly independent
integer vectors. We define the set of \textit{integer cosets} of
$\lat(\matB)$ as follows $\coset(\lat(\matB)) = \{\lat(\matB) + \vecc \mid
\vecc \in \Z^n\}$.

\noindent We now state a fundamental relation between
$\coset(\lat(\matB))$, $\det(\lat(\matB))$ and
$\piped(\lat(\matB))$.
\begin{proposition} \label{prop:cosetsDeterminantRelation}
    Let $\matB \in \Z^{n \times n}$ be a set of $n$-dimensional linearly
  independent vectors. It holds that
  $\det(\lat(\matB)) = \abs{\coset(\lat(\matB))} = \abs{\piped(\lat(\matB))
  \cap \Z^d}$.
\end{proposition}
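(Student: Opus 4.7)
The plan is to prove the two equalities separately: first $\abs{\coset(\lat(\matB))} = \abs{\piped(\matB) \cap \Z^n}$ via an explicit bijection, and then $\abs{\piped(\matB) \cap \Z^n} = \det(\lat(\matB))$ via the Smith Normal Form.

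For the first equality, I would exhibit the reduction map $\phi: \coset(\lat(\matB)) \to \piped(\matB) \cap \Z^n$ sending a coset $\lat(\matB) + \vecc$ to $\vecc \bmod \piped(\matB) = \vecc - \matB \lfloor \matB^{-1} \vecc \rfloor$. The output is integer because $\matB \in \Z^{n \times n}$ and $\lfloor \matB^{-1}\vecc \rfloor \in \Z^n$, and it lies in $\piped(\matB)$ by definition of the $\bmod$ operator. The three standard checks are then: well-definedness on cosets (two representatives differing by a lattice vector reduce to the same point), injectivity (if $\phi(\lat(\matB) + \vecc_1) = \phi(\lat(\matB) + \vecc_2)$ then $\vecc_1 - \vecc_2 \in \lat(\matB)$), and surjectivity (every $\vecy \in \piped(\matB) \cap \Z^n$ satisfies $\phi(\lat(\matB) + \vecy) = \vecy$). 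Each follows immediately from the definitions.

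For the second equality, I would apply the Smith Normal Form to decompose $\matB = \matU \matD \matV$ with $\matU, \matV \in \Z^{n \times n}$ unimodular and $\matD = \mathrm{diag}(d_1, \ldots, d_n)$. By Lemma \ref{lm:unimodularBasis} we have $\lat(\matB) = \lat(\matU \matD)$, and the unimodular map $\vecx \mapsto \matU \vecx$ is a bijection $\Z^n \to \Z^n$ carrying $\lat(\matD)$ onto $\lat(\matU \matD)$. This induces a group isomorphism
\[
\Z^n / \lat(\matB) \;\cong\; \Z^n / \lat(\matD) \;\cong\; \Z_{d_1} \oplus \cdots \oplus \Z_{d_n},
\]
which has cardinality $\prod_i \abs{d_i} = \abs{\det(\matD)} = \abs{\det(\matB)} = \det(\lat(\matB))$. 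Since by definition $\coset(\lat(\matB))$ is the set of integer cosets $\Z^n/\lat(\matB)$, combining this count with the bijection $\phi$ from the previous paragraph closes the chain of equalities.

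I do not expect a genuine obstacle here; the only subtlety is verifying integrality of the reduction in the first step, which is immediate from $\matB$ being integer, and all remaining content is a careful bookkeeping of the standard identification between the fundamental parallelepiped, the quotient group, and the Smith invariants of the basis.
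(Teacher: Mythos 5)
Your proof is correct and rests on the same core tool as the paper's — the Smith Normal Form $\matB = \matU\matD\matV$ — but packages the argument a bit more algebraically. The paper (via Claim~\ref{claim:R-Dbijection}) builds an explicit bijection $\vec{\pi}$ between the sets of representatives $\piped(\matB) \cap \Z^n$ and $\piped(\matD) \cap \Z^n$ as a composition of two maps: first $\matU^{-1}$ lands you in $\piped(\matD\matV) \cap \Z^n$, and then a reduction $\bmod\ \piped(\matD)$ is needed because $\piped(\matD\matV)$ is not itself a box. You sidestep the second step by first simplifying $\lat(\matB) = \lat(\matU\matD)$ via right unimodularity and then noting that left-multiplication by $\matU$ is a lattice automorphism of $\Z^n$ carrying $\lat(\matD)$ onto $\lat(\matU\matD)$, which immediately gives the quotient isomorphism $\Z^n/\lat(\matB) \cong \Z^n/\lat(\matD) \cong \Z_{d_1} \oplus \cdots \oplus \Z_{d_n}$ without ever examining $\piped(\matD\matV)$. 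You also spell out the coset-to-representative bijection $\phi$ that the paper leaves implicit. Both routes are fine; yours is slightly leaner at the cost of leaning on the abstract isomorphism theorem, while the paper's is more hands-on and also happens to produce the circuit-level bijection $\vec{\pi}$ that the surrounding reduction actually needs.
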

\noindent For a proof of Proposition \ref{prop:cosetsDeterminantRelation} see
the proof of Lemma \ref{lm:inclusion-Blichfeldt}.

\section{\texorpdfstring{$\boldsymbol{\Blichfeldt}$}{$\Blichfeldt$} is
\texorpdfstring{$\boldsymbol{\PPP}$}{$\PPP$}-Complete} \label{sec:Blichfeldt}

  The concept of lattices was introduced by Hermann Minkowski in his influential
book \textit{Geometrie der Zahlen} \cite{Minkowski1910}, first published at 1896.
In his book, Minkowski developed the theory of the geometry of numbers and
resolved many difficult problems in number theory. His fundamental theorem,
known as \emph{Minkowski's Convex Body Theorem}, was the main tool of these
proofs.

  Despite the excitement created by Minkowski's groundbreaking work, it was
only after 15 years that a new principle in geometry of numbers was discovered.
The credit for this discovery goes to Hans Frederik Blichfeldt, who in 1914
published a paper \cite{Blichfeldt1914} with his new theorem and some very
important applications in number theory \footnote{These introductory paragraphs
were inspired from Chapter 9 of \cite{OldsLD2001}.}.

  In this section, we characterize the computational complexity of Blichfeldt's
existence theorem and in later sections we discuss its applications (see
Section~\ref{sec:lattices}).
We recall the statement of Blichfeldt's theorem below, introduce its
computational search version, and prove that it is $\PPP$-complete.

\begin{theorem}[Blichfeldt's Theorem \cite{Blichfeldt1914}]
  \label{thm:BlichfeldtTheorem}
  Let $\matB \in \Z^{n \times n}$ be a set of $n$-dimensional linearly
independent integer vectors and a measurable set $S \subseteq \R^n$. If
$\vol(S) > \det(\lat(\matB))$, then there exist $\vec{x}, \vec{y} \in S$ with
$\vec{x} \neq \vec{y}$ and $\vec{x} - \vec{y} \in \lat(\matB)$.
\end{theorem}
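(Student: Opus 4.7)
The plan is to carry out the classical measure-theoretic pigeonhole argument for Blichfeldt's Theorem. First I would use the fundamental parallelepiped $\piped(\matB)$ to tile $\R^n$: the family $\{\piped(\matB) + \vecv : \vecv \in \lat(\matB)\}$ partitions $\R^n$ (up to boundaries of measure zero), so I can decompose $S$ as a disjoint union $S = \bigsqcup_{\vecv \in \lat(\matB)} S_{\vecv}$ where $S_{\vecv} := S \cap (\piped(\matB) + \vecv)$. By countable additivity of the Lebesgue measure, $\sum_{\vecv \in \lat(\matB)} \vol(S_{\vecv}) = \vol(S)$.

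Next I would translate each piece back into the fundamental parallelepiped by setting $T_{\vecv} := S_{\vecv} - \vecv \subseteq \piped(\matB)$. Since translation preserves Lebesgue measure, $\vol(T_{\vecv}) = \vol(S_{\vecv})$, and therefore
\[
  \sum_{\vecv \in \lat(\matB)} \vol(T_{\vecv}) \;=\; \vol(S) \;>\; \det(\lat(\matB)) \;=\; \vol(\piped(\matB)).
\]
If the sets $\{T_{\vecv}\}_{\vecv \in \lat(\matB)}$ were pairwise disjoint (even mod null sets), the total measure would be bounded by $\vol(\piped(\matB))$, contradicting the strict inequality. Hence there must exist distinct lattice vectors $\vecv_1 \neq \vecv_2$ such that $T_{\vecv_1} \cap T_{\vecv_2}$ has positive measure; in particular, this intersection contains some point $\vecp \in \piped(\matB)$.

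Finally, I would unwind the translations: define $\vecx := \vecp + \vecv_1$ and $\vecy := \vecp + \vecv_2$. By construction $\vecx \in S_{\vecv_1} \subseteq S$ and $\vecy \in S_{\vecv_2} \subseteq S$, they are distinct because $\vecv_1 \neq \vecv_2$, and $\vecx - \vecy = \vecv_1 - \vecv_2 \in \lat(\matB)$, yielding exactly the conclusion of the theorem.

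The only nontrivial step is the measure-theoretic pigeonhole at the end of the second paragraph: asserting that countably many subsets of $\piped(\matB)$ whose total measure exceeds $\vol(\piped(\matB))$ cannot all be essentially disjoint, and then extracting an actual common point (not just positive-measure overlap). This is standard but deserves care, since the argument fails for equality $\vol(S) = \det(\lat(\matB))$, which is why the hypothesis is strict. Everything else — the tiling by $\piped(\matB)$, translation-invariance of volume, and the additivity bookkeeping — is routine.
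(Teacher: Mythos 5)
Your argument is correct and is the classical measure-theoretic pigeonhole proof of Blichfeldt's theorem; the paper does not give its own proof of Theorem~\ref{thm:BlichfeldtTheorem} but instead cites Chapter~9 of \cite{OldsLD2001}, and the proof there follows exactly the same tiling-by-$\piped(\matB)$ and translation argument you sketch. (Two small polish notes: since the paper's $\piped(\matB)$ is half-open, the translates genuinely partition $\R^n$, so the ``up to boundaries of measure zero'' hedge is unnecessary; and you only need the $T_{\vecv}$ to fail to be pairwise \emph{disjoint as sets}, which is weaker and slightly cleaner than extracting a positive-measure overlap, though your version is also valid.)
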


\noindent A proof of Theorem \ref{thm:BlichfeldtTheorem} can be found in
Chapter 9 of \cite{OldsLD2001}. We now define the computational discrete version
of Blichfeldt's theorem.
\medskip

\begin{nproblem}[\Blichfeldt]
\textsc{Input:} An $n$-dimensional basis $\matB \in \Z^{n \times n}$ and a set
$S \subseteq \Z^n$ described by the value function representation
$(s, \Cvalue_S)$. \\
\textsc{Output:} If $s < \det(\lat(\matB))$, then the vector
$\vec{0}$. Otherwise, one of the following:
\begin{Enumerate}
	\setcounter{enumi}{-1}
	\item a number $z \in [s]$ such that  $\Cvalue_S(z) \not\in S$ or two
	numbers $z,w \in [s]$ such that $\Cvalue_S(z) = \Cvalue_S(w)$,
  \item a vector $\vecx$ such that $\vecx \in S \cap \lat$,
  \item two vectors $\vecx \neq \vecy$, such that $\vecx, \vecy \in S$ and
        $\vecx - \vecy \in \lat$.
\end{Enumerate}
\end{nproblem}

\begin{lemma} \label{lm:inclusion-Blichfeldt}
  $\Blichfeldt$ is in $\PPP$.
\end{lemma}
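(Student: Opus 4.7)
The plan is to give a Karp-reduction from $\Blichfeldt$ to $\pigeon$. Given an input $(\matB, s, \Cvalue_S)$, I first compute $\det(\lat(\matB)) = |\det(\matB)|$ in polynomial time; if $s < \det(\lat(\matB))$, then outputting $\vec{0}$ is already a legal answer. So assume $s \geq \det(\lat(\matB))$. By Proposition~\ref{prop:cosetsDeterminantRelation} we have $|\piped(\matB) \cap \Z^n| = \det(\lat(\matB)) \leq s$, and a pigeonhole argument applied to the coset-reduction map $\vec{\sigma} : \Z^n \to \piped(\matB) \cap \Z^n$, $\vec{v} \mapsto \vec{v} - \matB \lfloor \matB^{-1} \vec{v} \rfloor$, forces that either some $\vec{s} \in S$ has $\vec{\sigma}(\vec{s}) = \vec{0}$ (so $\vec{s} \in \lat(\matB)$) or two distinct $\vec{s},\vec{t} \in S$ have $\vec{\sigma}(\vec{s}) = \vec{\sigma}(\vec{t})$ (so $\vec{s} - \vec{t} \in \lat(\matB)$). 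I will build a circuit $\circuit{C} : \{0,1\}^N \to \{0,1\}^N$ with $N = \lceil \log s \rceil$ whose $\pigeon$ solutions encode exactly these two cases.

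The technical heart of the construction is an efficient index function $\Cindex_{\piped} : \piped(\matB) \cap \Z^n \to [\det(\lat(\matB))]$. I compute the Smith Normal Form $\matB = \matU \matD \matV$ with $\matU, \matV$ unimodular and $\matD$ diagonal in polynomial time. By Lemma~\ref{lm:unimodularBasis}, $\lat(\matB) = \lat(\matU \matD)$, and since $\matU^{-1}$ is an integer unimodular matrix it induces a group automorphism of $\Z^n$ carrying $\lat(\matU \matD)$ onto $\lat(\matD) = d_{1,1}\Z \times \cdots \times d_{n,n}\Z$. Hence $\vec{\pi} : \vec{x} \mapsto \matU^{-1}\vec{x} \pmod{\matD}$ (componentwise reduction) is a bijection from $\piped(\matB) \cap \Z^n$ onto the box $D = [d_{1,1}] \times \cdots \times [d_{n,n}]$, and Lemma~\ref{lm:cubeSetSuccinctDescription} supplies a polynomial-size index function $\Cindex_D$ for $D$. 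Setting $\Cindex_{\piped} = \Cindex_D \circ \vec{\pi}$, the circuit $\circuit{C}$ acts on input $\bvec{i} \in \{0,1\}^N$ as follows: if the integer $i$ satisfies $i \geq s$, output $\bvec{i}$ itself (the identity on the ``dummy'' tail); otherwise output the $N$-bit encoding of $\Cindex_{\piped}(\vec{\sigma}(\Cvalue_S(\bvec{i})))$.

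For correctness, a preimage of $\vec{0}$ under $\circuit{C}$ must originate from the normal branch, since the dummy branch outputs $\bvec{i} \neq \vec{0}$ when $i \geq s \geq 1$, and such a preimage yields an $\vec{s} \in S \cap \lat(\matB)$ (output case~1). A collision must have both indices in $[s]$: mixed collisions would force a normal-branch output lying in $[\det(\lat(\matB))] \subseteq [s]$ to equal a dummy output $\geq s$, which is impossible, and the dummy branch is injective on $[s, 2^N)$. Such a collision $i \neq j$ gives two elements $\vec{s}_i, \vec{s}_j$ of $S$ with $\vec{\sigma}(\vec{s}_i) = \vec{\sigma}(\vec{s}_j)$; if $\vec{s}_i \neq \vec{s}_j$ this is output case~2, while $\vec{s}_i = \vec{s}_j$ is a $\Cvalue_S$-collision of output case~0. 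The main obstacle is the construction of $\Cindex_{\piped}$: without an efficient indexing of the canonical coset representatives of $\lat(\matB)$, the pigeonhole argument cannot be compiled into a polynomial-size $\pigeon$ circuit. The SNF decomposition combined with Lemma~\ref{lm:cubeSetSuccinctDescription} resolves this, while the identity-on-the-tail padding handles the minor subtlety that $s$ need not be a power of $2$ or equal to $\det(\lat(\matB))$.
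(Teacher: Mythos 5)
Your proof is correct and follows the same route as the paper's: reduce to $\pigeon$ via the coset-reduction map $\vec{\sigma}$, build the index function for $\piped(\matB)\cap\Z^n$ by computing the Smith Normal Form and transporting to the diagonal box via the unimodular factor $\matU^{-1}$, apply Lemma~\ref{lm:cubeSetSuccinctDescription} to index the box, and pad the circuit with an identity branch to handle sizes that are not powers of two. The only (cosmetic) differences are that you set the input width to $N=\ceil{\log s}$ with the dummy branch on $[s,2^N)$, whereas the paper first assumes WLOG $s=\det(\lat)$ and pads on $[\det(\lat),2^\ell)$, and that you write the bijection $\vec\pi$ directly as $\vecx\mapsto\matU^{-1}\vecx\bmod\matD$ instead of factoring it as the paper does in Claim~\ref{claim:R-Dbijection}.
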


\begin{proof}
We show a Karp reduction from $\Blichfeldt$ to the $\pigeon$ problem.
\smallskip

\noindent Let $\matB$ and $(s, \Cvalue_S)$ be the inputs of $\Blichfeldt$. We
define the lattice $\lat = \lat(\matB)$ and $R = \piped(\matB) \cap \Z^n$ to be
the set of all the integer vectors in the fundamental parallelepiped
$\piped(\matB)$. Let $\ell = \ceil{\log(\det(\lat))}$ and $m = \ceil{\log(s)}$.
We remark that $\ell = \ceil{\log(\abs{R})}$ (see Proposition
\ref{prop:cosetsDeterminantRelation}) and $m$ is equal to the number of binary
inputs of $\Cvalue_S$. Our goal is to construct a circuit $\circuit{C}$ with
$\ell$ inputs and $\ell$ outputs such that given a solution of the $\pigeon$
problem with input $\circuit{C}$, we efficiently find a solution of
$\Blichfeldt$. If $s < \det(\lat)$, then we output $\vec{0}$ without invoking
$\pigeon$. Therefore, for the rest of the proof we focus on the case $s \geq
\det(\lat)$ which implies $m \ge \ell$. We can also assume without loss of
generality that $s = \det(\lat)$. If this is not the case, then we can
equivalently work with the set $S'$ which is described by the value function
representation $(\det(\lat), \Cvalue'_S)$, where $\Cvalue'_S$ has $\ell$ inputs
and outputs the value of $\Cvalue_S$ after padding the input with $m - \ell$
inputs in the most significant bits. We do this so that the input conditions of
$\pigeon$ are satisfied. For simplicity, we assume $S = S'$ for the rest of the
proof.

Before defining $\circuit{C}$, we note that numbers that satisfy case ``0.'' in
the output of $\Blichfeldt$ exist if and only if $(s,\Cvalue_S)$ is not a valid
value function of $S$. Therefore, an output of case ``0.'' certifies that the
input of $\Blichfeldt$ is not a valid input.

First, let us assume that $\det(\lat) = 2^\ell$. Intuitively, we want
$\circuit{C}$ to be a mapping from any vector $\vecx \in S$ to the integer coset
of $\vecx$ in $\coset(\lat)$. A useful observation is that if two vectors in $S$
belong to the same coset, then their difference belongs to $\lat$. The first
step to implement this idea is to find a set of representatives for
$\coset(\lat)$. The set $R$ of integer vectors in $\piped(\matB)$ is such a set.
Finally, to finish the construction of $\circuit{C}$ we need an index function
representation of $R$, so that the output of $\circuit{C}$ is in $\{0,
1\}^{\ell}$. When $\det(\lat) \neq 2^\ell$ we need to make some simple changes
in the reduction, we decribe these changes at the end of the proof.

The main difficulty that we encounter in formalizing the above intuition hides
in the construction of a polynomial-size circuit $\Cindex_R$ for the index
function of the set $R$. Let $r = \abs{R}$. We define the circuit $\Cindex_R:
\BitDecomp(R) \to \{0,1\}^\ell$ such that $\Cindex_R$ defines a bijective map
between $\BitDecomp(R)$ and $\BitDecomp([r])$. The circuit $\Cindex_R$ first
computes the Smith Normal Form of the basis $\matB = \matU \matD \matV$, which
can be done by a circuit that has size polynomial in $\abs{\BitDecomp(\matB)}$
\cite{KaltofenV05}, then uses the index function of the set $\piped(\matD) \cap
\integer^n$, as defined in Lemma~\ref{lm:cubeSetSuccinctDescription}, to map
each element of $R$ to a number in $[r]$.

\begin{claim} \label{claim:R-Dbijection}
    There exists a bijection $\vec{\pi}$ between $R$ and $\piped(\matD) \cap
  \integer^n$, which can be implemented by a polynomial-size circuit.
\end{claim}

\begin{proof}[Proof of Claim \ref{claim:R-Dbijection}.]
  In the Smith Normal Form of $\matB$, we have that $\matU \in \integer^{n
  \times n}$, $\matV  \in \integer^{n \times n}$ are unimodular matrices and
  $\matD \in \integer^{n \times n}$ is a diagonal matrix. Since $\matV$ is
  unimodular by Lemma~\ref{lm:unimodularBasis}, $\lat(\matD \matV) =
  \lat(\matD)$ and, hence, $\matD$ is a basis of $\lat(\matD \matV)$. This
  implies that the function $\vec{\phi}(\vecx) = \vecx \pmod{\piped(\matD)}$ is
  a bijection between $\piped(\matD \matV ) \cap \Z^n$ and $\piped(\matD ) \cap
  \Z^n$, with inverse map $\vec{\phi}^{-1}(\vecy) = \vecy \pmod{\piped(\matD
  \matV)}$. Observe that both $\vec{\phi}$ and $\vec{\phi}^{-1}$ can be
  implemented with a polynomial-size circuit. By the unimodularity of $\matU$,
  the map $\vec{h}(\vecx) = \matU^{-1} \vecx$ defines a bijection between
  $\piped(\matB) \cap \Z^n$ and $\piped(\matD \matV) \cap \Z^n$, with
  $\vec{h}^{-1}(\vecy) = \matU \vecy$. As above, both $\vec{h}$ and
  $\vec{h}^{-1}$ can be implemented by polynomial-size circuits. Hence, the
  function $\vec{\pi}(\vecx) = \vec{h}(\vec{\phi}(\vecx))$ is a bijection
  between $R = \piped(\matB) \cap \Z^n$ and $\piped(\matD) \cap \Z^n$ with
  inverse map $\vec{\pi}^{-1}(\vecy) = \vec{\phi}^{-1}(\vec{h}^{-1}(\vecy))$
  \footnote{The fact that $\vec{\pi}$ is a bijection shows that
  $\abs{\piped(\matD) \cap \Z^n} = \abs{\piped(\matB) \cap \Z^n}$.}. The size of
  the circuits that implement $\vec{\pi}$ and $\vec{\pi}^{-1}$ is also
  polynomial.
\end{proof}

\begin{proof}[Proof of Proposition~\ref{prop:cosetsDeterminantRelation}]
  By the unimodularity of $\matU$ and $\matV$ we have that $\det(\matD) =
  \det(\matB)$. Since $\matD$ is diagonal, the number of integer points in
  $\piped(\matD)$ is equal to $\det(\lat(\matD))$. Hence, $\abs{\piped(\matB)
  \cap \Z^n} = \abs{\piped(\matD) \cap \Z^n} = \det(\lat(\matD)) =
  \det(\lat(\matB))$.
\end{proof}

\noindent Now we are ready to describe $\Cindex_R$. Let $\matD =
\text{diag}(d_1, \dots, d_n)$, observe that \[ R_{\matD} = \{ \vecx \in \Z^n
\text{ such that } \vecx \in \piped(\matD)\} = \left([0, d_1] \times \cdots
\times [0, d_n] \right) \cap \Z^n. \] \noindent We use Lemma
\ref{lm:cubeSetSuccinctDescription} to construct an index function
$\Cindex_{R_{\matD}}$. Finally, we define the index function of $R$ as
$\Cindex_R(\vecx) = \Cindex_{R_{\matD}} \left( \vec{\pi}(\vecx)) \right)$.

Let $\vec{\sigma} : \Z^n \to R$ be the function that computes the modulo
$\piped(\matB)$, i.e. $\vec{\sigma}(\vecx) = \vecx \pmod{\piped(\matB)}$. Now,
we have all the components to define our circuit $\circuit{C}$ to reduce
$\Blichfeldt$ to $\PPP$. The circuit $\circuit{C}$ takes as input a boolean
vector $\bvec{x} \in \{0,1\}^\ell$ and computes
$\Cindex_{R}\left(\vec{\sigma}\left( \Cvalue_S(\bvec{x}) \right)\right)$, where
$\Cindex_{R}$ is as defined above. Namely, the circuit $\circuit{C}$ first
computes the vector $\vecx = \Cvalue_S(\bvec{x})$, which belongs in $S$, then it
computes a vector $\vecc \in \piped(\matB)$ such that $\vecx \in \vecc + \lat$,
maps it to a vector in $\piped(\matD)$ and lastly maps this vector to a boolean
vector using the index function of $R_{\matD}$. Since we assume $\abs{R} =
2^\ell$ we have that $\circuit{C}$ has $\ell$ inputs and $\ell$ outputs and
hence it is a valid input to $\pigeon$. Any solution of this $\pigeon$ instance
gives a solution to our $\Blichfeldt$ instance. The $\pigeon$ with input
$\circuit{C}$ returns one of the following:

\begin{enumerate}
  \item a boolean vector $\bvec{x} \in \{0,1\}^\ell$ such that
  $\circuit{C}(\bvec{x}) = \bvec{0}$.

  If $\Cvalue_S(\BitComp(\bvec{x})) \not\in S$, then $\BitComp(\bvec{x})$ is a
  solution to our initial $\Blichfeldt$ instance. Otherwise, let $\vec{y} =
  \vec{\sigma}\left(\Cvalue_S( \bvec{x})\right)$ and $\bvec{y} =
  \BitDecomp(\vec{y})$. In this case, we have that   $\Cindex_{R}(\bvec{y}) =
  \bvec{0}$, which implies that $\Cindex_{R_{\matD}}\left( \vec{\pi}(\vec{y})
  \right) = \bvec{0}$. From the definition of $\Cindex_{R_{\matD}}$ in Lemma
  \ref{lm:cubeSetSuccinctDescription}, we get that $\vec{\pi}(\vecy) = \vec{0}$
  and hence $\vecy = \vec{0}$. Finally,
  $\vec{\sigma}\left(\Cvalue_S(\bvec{x})\right) = \vec{0}$ implies
  $\Cvalue_S(\bvec{x}) \in \vec{0} + \lat$ and so $\vecx = \Cvalue_S(\bvec{x})$
  is a solution to our initial $\Blichfeldt$ instance.

  \item two boolean vectors $\bvec{x}, \bvec{y} \in \{0,1\}^\ell$, such that
  $\circuit{C}(\bvec{x}) = \circuit{C}(\bvec{y})$.

  Using the same reasoning as in the previous case we conclude that either
  $\BitComp(\bvec{x}), \BitComp(\bvec{y})$ is a solution to our initial
  $\Blichfeldt$ instance or there exists a $\vecc \in R$ such that
  $\Cvalue_S(\bvec{x}), \Cvalue_S(\bvec{y}) \in \vecc + \lat(\matB)$ and hence
  if $\vecx = \Cvalue_S(\bvec{x})$, $\vecy =\Cvalue_S(\bvec{y})$ then we have
  that $\vecx - \vecy \in \lat$ and so $\vecx, \vecy$ is a solution to our
  initial $\Blichfeldt$ instance.
\end{enumerate}

\noindent The only thing left to finish the proof is the case in which
$\det(\lat(\matB)) < 2^{\ell}$. In this case the circuit $\circuit{C}$ is
defined as
\begin{equation*}
  \circuit{C}(\bvec{x}) = \left\{ \begin{split}
                                     \bvec{x} & ~~~~~~ \text{if }\BitComp(\bvec{x}) \ge
                                       \det(\lat(\matB)) \\
                                       \Cindex_{R}\left(
                                     \vec{\sigma}\left(\Cvalue_S(
                                       \bvec{x}) \right)\right) &
                                       ~~~~~~ \text{if }\BitComp(\bvec{x}) <
                                       \det(\lat(\matB))
                                   \end{split} \right. .
\end{equation*}
\noindent When $\circuit{C}$ is as above, there are no solutions $\bvec{x}$ and
$\bvec{y}$ to the $\pigeon$ problem with input $\circuit{C}$ such that
$\BitComp(\bvec{x}), \BitComp(\bvec{y}) \ge \det(\lat(\matB))$. Hence, all
solutions of $\pigeon$ imply a solution to our initial $\Blichfeldt$ instance as
described before.
\end{proof}

\noindent We now proceed to show the $\PPP$-hardness of $\Blichfeldt$.

\begin{lemma} \label{lm:hardness-Blichfeldt}
  $\Blichfeldt$ is $\PPP$-hard.
\end{lemma}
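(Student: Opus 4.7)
The plan is to give a Karp reduction from $\pigeon$ to $\Blichfeldt$. Starting from an instance $\circuit{C}:\{0,1\}^n\to\{0,1\}^n$ of $\pigeon$, I will algebraically encode the circuit into both the ambient set $S$ and the lattice $\lat$, in such a way that any $\Blichfeldt$ solution translates directly into either a preimage of $\vec{0}$ or a collision for $\circuit{C}$.

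Concretely, I would set up the reduction as follows. Define the set
\[
S \;=\; \bigl\{\,\vecs_{\bvec{x}} \;=\; \begin{bmatrix}\bvec{x}\\ \circuit{C}(\bvec{x})\end{bmatrix} \;:\; \bvec{x}\in\{0,1\}^n\bigr\} \;\subseteq\; \{0,1\}^{2n},
\]
equipped with the value function $\Cvalue_S(\bvec{x}) = (\bvec{x},\circuit{C}(\bvec{x}))$ and size $s=2^n$. This value function is obviously a bijection between $[2^n]$ and $S$ (the first $n$ coordinates of $\vecs_{\bvec{x}}$ recover the index), so the ``case~0'' output of $\Blichfeldt$ can never arise and its size is polynomial in $|\circuit{C}|$. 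For the lattice, take
\[
\lat \;=\; \lamperp_2\bigl([\,\vec{0}_n \;\;\matI_n\,]\bigr) \;=\; \bigl\{\vecv\in\Z^{2n} \,:\, [\,\vec{0}_n\;\;\matI_n\,]\,\vecv \equiv \vec{0} \pmod{2}\bigr\}.
\]
By Lemma~\ref{lm:determinantQary}, a basis $\matB$ of $\lat$ can be produced by a polynomial-size circuit from the matrix $[\,\vec{0}_n\;\;\matI_n\,]$, and $\det(\lat) \le 2^n = s$; hence the pair $(\matB,(s,\Cvalue_S))$ is a valid $\Blichfeldt$ instance.

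Now I would verify that every meaningful solution of this $\Blichfeldt$ instance yields a $\pigeon$ solution for $\circuit{C}$. If $\Blichfeldt$ returns a vector $\vecs_{\bvec{x}}\in S\cap\lat$, then by definition of $\lat$ we have $\circuit{C}(\bvec{x})\equiv\vec{0}\pmod 2$, and since $\circuit{C}(\bvec{x})\in\{0,1\}^n$ this gives $\circuit{C}(\bvec{x})=\vec{0}$, a zero-preimage. If $\Blichfeldt$ returns two distinct vectors $\vecs_{\bvec{x}}\ne\vecs_{\bvec{y}}$ in $S$ whose difference lies in $\lat$, then $\circuit{C}(\bvec{x})\equiv\circuit{C}(\bvec{y})\pmod 2$, and again the $\{0,1\}$-range forces equality $\circuit{C}(\bvec{x})=\circuit{C}(\bvec{y})$; since the $\vecs_{\bvec{\cdot}}$ are distinct we get $\bvec{x}\ne\bvec{y}$, a collision. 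In either case we obtain the desired $\pigeon$ witness in polynomial time.

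I do not expect a serious obstacle: the only delicate point is confirming that $S$ and $\lat$ together satisfy the size precondition of $\Blichfeldt$, which reduces to the bound $\det(\lat)\le 2^n$ furnished by Lemma~\ref{lm:determinantQary} (and in fact $\det(\lat)=2^n$ since $[\,\vec{0}_n\;\;\matI_n\,]$ has full row rank mod $2$). One should also double-check that the ``case~0'' escape cannot be exploited by a $\Blichfeldt$ oracle, which is immediate here because $\Cvalue_S$ is manifestly injective with image exactly $S$. Everything else — producing the basis, the value function circuit, and decoding the solution — is a routine polynomial-time computation.
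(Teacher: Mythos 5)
Your proposal is correct and matches the paper's argument essentially step for step: the same set $S$ with value function $\bvec{x}\mapsto(\bvec{x},\circuit{C}(\bvec{x}))$, the same $q$-ary lattice $\lamperp_2([\,\vec{0}_n\;\matI_n\,])$ with basis via Lemma~\ref{lm:determinantQary}, and the same case analysis translating a $\Blichfeldt$ output into a zero-preimage or a collision of $\circuit{C}$. No discrepancy to report.
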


\begin{proof}
  We prove that $\pigeon$ is reducible to the $\Blichfeldt$ problem.
  \smallskip

\noindent Let $\circuit{C}:\binset^n \rightarrow \binset^n$ be an arbitrary
instance of $\pigeon$. We construct an instance of $\Blichfeldt$ based on
$q$-ary lattices as follows. Fix $q = 2$ and let $\matA = [\vec0~~\matI_n] \in
\Z_2^{n \times 2n}$, where $\matI_n$ is the $n$-dimensional identity matrix. We
define the lattice $\lamperp_q(\matA)$ and using Lemma \ref{lm:determinantQary}
we  compute $\matB = \circuit{BS}(\matA)$ such that $\lat(\matB) =
\lamperp_q(\matA)$. Next, we define the set $S = \left\{ \begin{bmatrix}\bvec{x}
\\ \circuit{C}(\bvec{x})\end{bmatrix} \text{ such that } \bvec{x} \in \binset^n
\right\} \subseteq \Z_2^{2n}$ . Accordingly, the circuit $\Cvalue_S$  has
$\circuit{C}$ hardcoded, and on input $\bvec{x} \in \binset^n$ it outputs
$\Cvalue_S(\bvec{x}) = \begin{bmatrix}\bvec{x} \\
\circuit{C}(\bvec{x})\end{bmatrix}$, where $\begin{bmatrix}\bvec{x} \\
\circuit{C}(\bvec{x})\end{bmatrix}$ is viewed as a vector in $\Z_2^{2n}$ and not
as a binary string. Notice that $|S| = 2^{n}$ since for every $\bvec{x} \in
\binset^n$ there is one element in $S$ with its $n$-bit prefix equal to
$\bvec{x}$. Thus, the $\Blichfeldt$ instance is defined by $\matB$ and $(2^n,
\Cvalue_S)$. Notice that for $\vecy = \begin{bmatrix}\bvec{x} \\
\circuit{C}(\bvec{x})\end{bmatrix} \in S$ we have
\begin{equation} \label{eq:BlichfeldtHardness1Cosets}
  \matA \vecy = \circuit{C}(\bvec{x}) \pmod{2}.
\end{equation}

We will see that any solution to the above $\Blichfeldt$ instance gives a
solution for the $\pigeon$ problem  with input $\circuit{C}$. The problem
$\Blichfeldt$ returns one of the following:

\begin{enumerate}
  \item a single vector $\vecy =\begin{bmatrix}\bvec{x} \\
  \circuit{C}(\bvec{x})\end{bmatrix} \in S \cap \lat(\matB)$.

  Then, by \eqref{eq:BlichfeldtHardness1Cosets} it holds that
  $\circuit{C}(\bvec{x}) = \vec{0} \pmod{2}$ which means $\circuit{C}(\bvec{x})
  = \bvec{0}$ and hence $\bvec{x}$ is a solution to $\pigeon$.

  \item two vectors $\vecx, \vecy \in S$, such that $\vecx \neq \vecy$ and
  $\vecx - \vecy \in \lat(\matB)$.

  In this case, we have $\vecx = \begin{bmatrix}\bvec{x} \\
  \circuit{C}(\bvec{x})\end{bmatrix}$ and $\vecy = \begin{bmatrix}\bvec{x} \\
  \circuit{C}(\bvec{x})\end{bmatrix}$ such that $\vecx - \vecy \in
  \lamperp_q(\matA)$. Thus, $\matA(\vecx - \vecy) = \vec{0} \pmod{2}$ which by
  \eqref{eq:BlichfeldtHardness1Cosets} implies $\circuit{C}(\bvec{x}) =
  \circuit{C}(\bvec{y})$, and because $\vecx \neq \vecy$ it has to be that
  $\bvec{x} \neq \bvec{y}$. So, the solution to $\pigeon$ is the pair $\bvec{x},
  \bvec{y}$.
\end{enumerate}

Finally, we argue that $2^n \ge \det(\lat(\matB))$ so that the $\Blichfeldt$
problem does not output $\vec{0}$ trivially. This follows directly from Lemma
\ref{lm:determinantQary} since $q = 2$.
\end{proof}

Combining Lemma~\ref{lm:inclusion-Blichfeldt} and
Lemma~\ref{lm:hardness-Blichfeldt}, we prove the following theorem.

\begin{theorem} \label{thm:BlichfeldtPPPcompleteness}
	$\Blichfeldt$ is $\PPP$-complete.
\end{theorem}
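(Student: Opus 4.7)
The plan is to establish the theorem by combining the two directions: membership of $\Blichfeldt$ in $\PPP$, and $\PPP$-hardness of $\Blichfeldt$. Both proceed via Karp reductions to and from $\pigeon$, exploiting the fact that Blichfeldt's theorem is itself a pigeonhole statement about cosets of a lattice.

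For the membership direction, I would reduce $\Blichfeldt$ to $\pigeon$. Given inputs $\matB$ and $(s, \Cvalue_S)$, I would construct a circuit $\circuit{C}$ on $\ell = \lceil \log \det(\lat(\matB)) \rceil$ bits that, on input $\bvec{x}$, evaluates $\Cvalue_S$ to obtain $\vecx \in S$, reduces it modulo the fundamental parallelepiped $\piped(\matB)$ via the map $\vec\sigma(\vecx) = \vecx \bmod \piped(\matB)$, and then outputs an $\ell$-bit index of the resulting coset representative in $R = \piped(\matB)\cap \Z^n$. By Proposition~\ref{prop:cosetsDeterminantRelation}, the cardinality of $R$ equals $\det(\lat(\matB))$, so the input condition $s \ge \det(\lat(\matB))$ guarantees that a $\pigeon$ solution exists. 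A preimage of $\bvec{0}$ gives a point of $S$ in $\lat(\matB)$, while a collision gives two points of $S$ with difference in $\lat(\matB)$; the cases where $s < \det(\lat(\matB))$ or $\Cvalue_S$ is not a valid value function are handled separately by outputting $\vec 0$ or an explicit invalidity certificate. The main obstacle here is implementing the index function of $R$ by a polynomial-size circuit, since $\piped(\matB)$ is a general parallelepiped with no obvious coordinate-wise structure. I would handle this using the Smith Normal Form decomposition $\matB = \matU \matD \matV$: unimodularity of $\matU$ and $\matV$ yields an efficient bijection between $R$ and $\piped(\matD)\cap\Z^n$, and the diagonal set $\piped(\matD)\cap\Z^n = [0,d_1]\times\cdots\times[0,d_n]\cap\Z^n$ admits a polynomial-size index circuit by Lemma~\ref{lm:cubeSetSuccinctDescription}. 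The case $\det(\lat(\matB)) < 2^\ell$ is handled by defining $\circuit{C}$ to act as the identity on indices beyond $\det(\lat(\matB))$, so that all $\pigeon$ solutions fall within the valid range.

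For the hardness direction, I would reduce $\pigeon$ to $\Blichfeldt$ by algebraically encoding a $\pigeon$ circuit $\circuit{C} : \binset^n \to \binset^n$ as a $q$-ary lattice problem. Taking $q = 2$ and $\matA = [\vec 0 \,\,\,\matI_n] \in \Z_2^{n \times 2n}$, the lattice $\lamperp_q(\matA)$ is exactly the set of $\vecv \in \Z^{2n}$ whose last $n$ coordinates are even, and by Lemma~\ref{lm:determinantQary} an integer basis $\matB$ with $\det(\lat(\matB)) \le 2^n$ can be computed efficiently. The set $S = \{(\bvec{x}, \circuit{C}(\bvec{x})) : \bvec{x} \in \binset^n\}$ of size $2^n$ has an obvious value function with $\circuit{C}$ hardcoded, and $|S| \ge \det(\lat(\matB))$, so the $\Blichfeldt$ input is valid. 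A point of $S$ in $\lat(\matB)$ corresponds to $\bvec{x}$ with $\circuit{C}(\bvec{x}) = \bvec{0}$, and two points whose difference lies in $\lat(\matB)$ correspond to distinct $\bvec{x} \neq \bvec{y}$ with $\circuit{C}(\bvec{x}) = \circuit{C}(\bvec{y})$; note that $\bvec{x} \neq \bvec{y}$ is forced because the first $n$ coordinates of distinct elements of $S$ differ.

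Combining both reductions yields $\PPP$-completeness. I expect the hardness direction to be essentially routine once the $q$-ary lattice framework and Lemma~\ref{lm:determinantQary} are available; the real technical work is the membership argument, specifically the Smith Normal Form trick that transports the indexing problem from the arbitrary parallelepiped $\piped(\matB)$ to an axis-aligned box.
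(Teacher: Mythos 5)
Your proposal matches the paper's proof in essentially every detail: the membership direction reduces to $\pigeon$ by composing $\Cvalue_S$, the mod-$\piped(\matB)$ map $\vec\sigma$, and an index circuit for $R=\piped(\matB)\cap\Z^n$ built via the Smith Normal Form bijection to $\piped(\matD)\cap\Z^n$ and Lemma~\ref{lm:cubeSetSuccinctDescription}, with the same identity-on-overflow trick for $\det(\lat(\matB))<2^\ell$; the hardness direction uses the same $q=2$ $q$-ary lattice $\lamperp_2([\vec0\ \matI_n])$ and the set $S=\{(\bvec{x},\circuit{C}(\bvec{x}))\}$ with $\circuit{C}$ hardcoded in the value function. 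This is the paper's argument, correctly reconstructed.
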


\section{Constrained Short Integer Solution is
\texorpdfstring{$\boldsymbol{\PPP}$}{$\PPP$}-Complete} \label{sec:cSIS}

In this section we define the first $\PPP$-complete problem that is
\textit{natural}, i.e. does not explicitly invoke any circuit as part of its
input in contrast to the $\Blichfeldt$ problem. We call this problem the
\emph{constrained Short Integer Solution} ($\cSIS$) problem because it shares a
similar structure with the well-known and well-studied \emph{Short Integer
Solution} problem that was defined in the seminal work of
Ajtai~\cite{Ajtai1996}, and later studied
in~\cite{Micciancio04,MicRegev07,STOC:GenPeiVai08,MicciancioP12}.

To expand on the complexity theoretic importance and potential of $\cSIS$, we
note the $\sis$ problem is contained in $\PPP$ by its collision-resistance
nature but it is unknown if it is also $\PPP$-hard. This poses a fascinating
open question, since it implies a unique characterization of a concrete
cryptographic assumption using a complexity class and vice versa. We view our
result in this section (as well as in the next section) as a first step towards
this direction, by showing that $\cSIS$ is $\PPP$-complete.

Similar to the presentation of the previous section; we first define $\cSIS$,
then prove its $\PPP$ membership and finally show its $\PPP$-hardness. In all
these steps, a matrix with a special structure, that we call \textit{binary
invertible} plays an important role, and thus we define it below and prove a key
property that we use.

\begin{definition}[\textsc{Binary Invertible Matrix}] \label{def:bInv}
Let $\ell \in \Z_+$, $q \le 2^{\ell} $ and $d, k \in \N$. First, we define the
$\ell$-th \textit{gadget vector} $\vec{\gamma}_\ell$ to be the vector
$\vec{\gamma}_\ell = \begin{bmatrix} 1 & 2 & 4 & \ldots &
                                    2^{\ell - 1}
                                  \end{bmatrix}^T \in \Z_q^{\ell}$.
Second, let $\matU \in \Z_q^{d \times (d \cdot \ell)} $ be a matrix with
non-zero elements only above the $(\ell + 1)$-diagonal and $\matV \in \Z_q^{d
\times k}$ be an arbitrary matrix. We define the matrix $\matG = \begin{bmatrix}
(\matI_d \otimes \vec{\gamma}^T_\ell + \matU) & \matV \end{bmatrix} \in \Z_q^{d
\times (d \cdot \ell + k)}$ to be a \emph{binary invertible} matrix.
\end{definition}

\noindent To illustrate the form of a binary invertible matrix we give an
example below with $\ell = 3$, i.e. in $\Z_8$, where the symbol $\star$
represents any element in $\Z_8$.
\begin{align*}
  \matG = & \left.
            \begin{bmatrix}
              1 & 2      & 4 & \star & \star      & \star & \star & \star      & \star & \cdots & \star & \star      & \star & \star & \star & \cdots & \star & \star \\
              0 & 0      & 0 & 1 & 2      & 4 & \star & \star      & \star & \cdots & \star & \star      & \star & \star & \star & \cdots & \star & \star \\
              0 & 0      & 0 & 0 & 0      & 0 & 1 & 2      & 4 & \cdots & \star & \star      & \star & \star & \star & \cdots & \star & \star \\
                & \vdots &   &   & \vdots &   &   & \vdots &   & \ddots &   & \vdots &   &   &   & \vdots &   &   \\
              0 & 0      & 0 & 0 & 0      & 0 & 0 & 0      & 0 & \cdots & 1 & 2      & 4 & \star & \star & \cdots & \star & \star
            \end{bmatrix}
            ~~~~ \right\} d ~ \text{rows} \\
          & ~~~ \underbrace{\begin{matrix} ~ & ~ & ~ & ~ & ~ & ~ & ~ & ~ & ~ & ~ & ~ & ~ & ~ & ~ & ~ & ~ & ~ & ~ & ~ & ~ & ~~ \end{matrix}}_{d \cdot \ell ~ \text{columns}}
          ~~~\underbrace{\begin{matrix} ~ & ~ & ~ & ~ & ~ & ~ & ~ & ~~~ \end{matrix}}_{k ~ \text{columns}}
\end{align*}

It is evident from the definition of a binary invertible matrix, that it is not
a fixed matrix but rather a collection of matrices. That is, after we fix $q$,
the exact values of $\matG$ depend on the choice of $\matU$ and $\matV$. For
example, a special case of a binary invertible matrix is the \textit{gadget
matrix} $\matG = \matI_d \otimes \vec{\gamma}^T_\ell$ with $\matU = \vec{0}^{d
\times (d \cdot \ell)}$ and $k=0$, that was defined in \cite{MicciancioP12} and
used in many cryptographic constructions (e.g.~\cite{C:GenSahWat13, EC:BGGHNS14,
STOC:GorVaiWic15,  C:GorVaiWee15, EC:MukWic16, C:BraPer16, EC:BonKimMon17,
TCC:BTVW17, conf/pkc/PeikertS18}).

Next, we formalize the main property of binary invertible matrices that is in
the core of our proof for the inclusion of $\cSIS$ in $\PPP$, and also explains
the name ``binary invertible''.

\begin{proposition} \label{prop:bInv-property}
Let $\matG = \begin{bmatrix}
               (\matI_d \otimes \vec{\gamma}^T_{\ell} + \matU) & \matV
             \end{bmatrix} \in \Z_q^{d \times (d \cdot \ell + k)}$ be a binary
invertible matrix and $\vecr'$ be an arbitrary vector in $\Z_q^k$. Then, for
every $\vecb \in \Z_q^d$, there exists a vector $\vecr \in \{0, 1\}^{d \cdot
\ell}$ such that $\matG \begin{bmatrix} \vecr \\ \vecr' \end{bmatrix} = \vec{b}
\pmod{q}$. Additionally, the vector $\vecr$ is computable by a polynomial-size
circuit and it is guaranteed to be unique when $q = 2^{\ell}$.
\end{proposition}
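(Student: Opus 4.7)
The plan is to reduce the existence question for $\vecr$ to a sequence of bit-decomposition problems, exploiting the block upper-triangular structure implicit in the definition of $\matG$. The first move is to fold $\matV \vecr'$ into the right-hand side: setting $\vecb' := \vecb - \matV \vecr' \pmod{q}$, it suffices to find $\vecr \in \binset^{d\ell}$ with $(\matI_d \otimes \vec{\gamma}^T_{\ell} + \matU)\, \vecr = \vecb' \pmod{q}$. I will split $\vecr$ into $d$ consecutive blocks $\vecr_1, \ldots, \vecr_d \in \Z_q^{\ell}$, matching the block structure of $\matI_d \otimes \vec{\gamma}^T_{\ell}$.

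The key structural observation is the meaning of ``non-zero elements only above the $(\ell+1)$-diagonal'' in Definition~\ref{def:bInv}: inspecting the illustration, row $i$ of $\matU$ is zero in every column of index $\le i\ell$, so the $i$-th equation reads
\[
\vec{\gamma}^T_{\ell}\, \vecr_i \;+\; \vec{u}_i^T\, (\vecr_{i+1}, \ldots, \vecr_d)^T \;=\; b'_i \pmod{q},
\]
where $\vec{u}_i$ collects the entries of the $i$-th row of $\matU$ lying in columns $i\ell + 1, \ldots, d\ell$. Thus block $\vecr_i$ is coupled only with blocks of strictly higher index, which yields a clean back-substitution structure.

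The construction then proceeds from $i = d$ down to $i = 1$. At step $i$ the blocks $\vecr_{i+1}, \ldots, \vecr_d$ are already fixed, so $c_i := b'_i - \vec{u}_i^T (\vecr_{i+1}, \ldots, \vecr_d)^T \pmod{q}$ is a known element of $\Z_q$, and we must exhibit $\vecr_i \in \binset^{\ell}$ with $\sum_{j=0}^{\ell-1} 2^j\, r_{i,j} = c_i \pmod{q}$. Since $q \le 2^{\ell}$, the map $\binset^{\ell} \to \Z_q$ sending $\vecr_i$ to $\sum_j 2^j r_{i,j} \pmod{q}$ is surjective (its lift to $\Z$ hits every integer in $\{0, 1, \ldots, 2^{\ell}-1\}$, which covers all residues modulo $q$), so a valid $\vecr_i$ exists; when $q = 2^{\ell}$ the map is moreover a bijection, so $\vecr_i$ equals the unique binary representation of $c_i$, which chained over all $i$ gives uniqueness of $\vecr$.

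Finally, the entire procedure consists of one modular subtraction and one inner product modulo $q$ per block, followed by a bit decomposition of an integer in $\{0, \ldots, 2^{\ell}-1\}$, so it is implementable by a circuit of size polynomial in $d$, $\ell$, and $k$. I expect the only delicate point to be pinning down precisely the reading of ``non-zero elements only above the $(\ell+1)$-diagonal'' that forces zeros on columns $1, \ldots, i\ell$ of row $i$ of $\matU$ (which is what the figure in Definition~\ref{def:bInv} depicts); once that structural claim is confirmed, the rest is just back-substitution combined with bit decomposition.
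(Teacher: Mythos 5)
Your proof is correct and follows essentially the same route as the paper's: back-substitution over the $\ell$-blocks $\vecr_d, \vecr_{d-1}, \ldots, \vecr_1$, using the block upper-triangular structure of $\matU$ to isolate $\vec{\gamma}^T_\ell\vecr_i$ at each step, followed by a bit-decomposition of the residue to obtain the binary block (the paper writes this as $\vecr_t = \BitDecomp(b_t - \vecg_t^T[\vec{0};\ldots;\vecr_{t+1};\ldots;\vecr_d;\vecr'] \bmod q)$, which is exactly your $c_i$ after folding $\matV\vecr'$ into the right-hand side). Your reading of ``non-zero elements only above the $(\ell+1)$-diagonal'' — that row $i$ of $\matU$ vanishes on columns $1,\ldots,i\ell$ — is the one the paper's displayed example confirms, and it is precisely what makes the back-substitution well-defined.
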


\begin{proof}
For a simple illustration of the proposition, for a moment assume that $\matG =
\begin{bmatrix} \matW & \matV \end{bmatrix} \in \Z_q^{d \times (d + k)}$, where
$q$ is prime, $\matW \in \Z_q^{d \times d}$ is an upper triangular matrix, and
$\matV \in \Z_q^{k \times d}$ is arbitrary. Then, for every $\vecb \in \Z_q^d$,
using backwards substitution we can efficiently compute a vector $\vecx \in
\Z_q^d$ such that $\matG \begin{bmatrix} \vecx \\ \vecr' \end{bmatrix} = \vecb
\pmod{q}$.

For the general case where $\matG =
\begin{bmatrix}
  (\matI_d \otimes \vec{\gamma}^T_\ell + \matU) & \matV
\end{bmatrix}$, we use again backward substitution.
But because we require $\vecr$ to be binary, we make $\vecr$ to be the the
binary decomposition of the corresponding solution in $\Z_q$. More precisely, we
divide $\vecr$ into $d$ parts of $\ell$ coordinates each, such that $\vecr = [
\vecr_1 \, \dots \, \vecr_d ]^T$. We define $\vecg_i^T$ to be the $i$-th row of
the matrix $\matG$. Then, the $d$-th part of $\vecr$ is equal to $\vecr_d =
\BitDecomp\left( b_d - \vecg^T_d \begin{bmatrix} \vec{0} \\ \vecr' \end{bmatrix}
\pmod{q} \right)$. Next, we recursively compute the $t$-th part $\vecr_t$ of
$\vecr$, assuming we have already computed the parts $\vecr_{t + 1}, \dots,
\vecr_{d}$. The recursive relation for $\vecr_t$ is
\begin{align} \label{eq:backwardSubstitution}
  \vecr_t = \BitDecomp\left( b_t - \vecg^T_t
      \begin{bmatrix} \vec{0} \\ \vdots \\ \vec{0} \\ \vecr_{t + 1} \\
                      \vecr_{t + 2} \\ \dots \\ \vecr_d \\ \vecr'
      \end{bmatrix} \pmod{q} \right) \,,
\end{align}

\noindent where $\vecb = \begin{bmatrix}
                           b_1 & b_2 & \dots & b_d
                         \end{bmatrix}^T$.
The fact that $\vecr$ can be computed by a polynomial sized circuit, follows
easily from \eqref{eq:backwardSubstitution}.

In the special case of $q = 2^{\ell}$, it is easy to see that for every $x \in
[q]$ there exists a unique $\vecr_t \in \{0, 1\}^{\ell}$ such that
$\vec{\gamma}^T_{\ell} \vecr_t = x \pmod{q}$. Additionally, (when $q = 2^\ell$)
for any $\vecr_t \in \{0, 1\}^\ell$, it holds that $\vec{\gamma}^T_{\ell}
\vecr_t < q$, and thus $\vec{\gamma}^T_{\ell} \vecr_t = x$ over $\Z$. But in
this case, $\vecr_t$ is the binary decomposition of $x$, and it is unique.
Because every $\vecr_t$ is unique, we get that $\vecr$ is also unique.
\end{proof}

\begin{remark}
	The property of Proposition~\ref{prop:bInv-property} is the only property of
	binary invertible matrices that we need for our proofs. We could potentially
	define binary invertible matrices in a more general way. For example, a binary
	invertible matrix could be a permutation of the columns of a matrix of
	Definition~\ref{def:bInv}. Our results follow immediately for the	more general
	class of matrices that satisfy the properties of
	Proposition~\ref{prop:bInv-property}. Let us denote by $\mathcal{S}$ this set
	of matrices. We focus on the more restrictive case of
	Definition~\ref{def:bInv}, not only for ease of the exposition, but also
	because given a matrix $\matA$ there is no known efficient procedure to check
	whether $\matA \in \mathcal{S}$. In fact, this problem is
	$\mathsf{NP}$-complete, since we can encode a $\problem{SUBSET \mdash SUM}$
	instance in $\matA$ and reduce $\problem{SUBSET \mdash SUM}$ to checking
	whether $\matA \in \mathcal{S}$. Alternatively, we could define a promise
	version of $\cSIS$ where $\matG$, is promised to be in $\mathcal{S}$. However,
	this would deprive us from a \emph{syntactic} definition of $\cSIS$.
\end{remark}

We now define the Constrained Short Integer Solution problem.
\begin{nproblem}[\cSIS]
\textsc{Input:} A matrix $\matA \in \Z_q^{n \times m}$, a binary invertible
matrix $\matG \in \Z_q^{d \times m}$ and a vector $\vecb \in \Z_q^d$ where
$\ell \in \Z_+$, $q \leq 2^{\ell}$ and $m \geq (n + d) \cdot \ell$. \\
\textsc{Output:} One of the following:
\begin{Enumerate}
\item a vector $\vecx \in \{0,1\}^{m}$ such that $\vecx \in
\lamperp_q(\matA)$ and
  $\matG \vecx = \vecb \pmod{q}$,
\item two vectors $\vecx, \vecy \in \{0,1\}^{m}$ such that $\vecx \neq \vecy$
      with $\vecx - \vecy \in \lamperp_q(\matA)$ and
      $\matG \vecx = \matG \vecy = \vecb \pmod{q}$.
\end{Enumerate}
\end{nproblem}

\begin{lemma} \label{lm:inclusion-cSIS}
  $\cSIS$ is in $\PPP$.
\end{lemma}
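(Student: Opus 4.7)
The plan is to Karp-reduce $\cSIS$ to $\pigeon$. Given a $\cSIS$ instance $(\matA, \matG, \vecb)$ with $q \le 2^{\ell}$ and $m \ge (n+d)\ell$, I will build a polynomial-size circuit $\circuit{C}\colon \{0,1\}^k \to \{0,1\}^k$ with $k = n\lceil \log q \rceil$ whose $\pigeon$ answers translate back to solutions of the original $\cSIS$ instance. The guiding idea is to parametrize the affine fibre $\{\vecx \in \{0,1\}^m : \matG\vecx = \vecb \pmod{q}\}$ by its last $m - d\ell$ binary coordinates, which play the role of the free parameter $\vecr'$ in Proposition~\ref{prop:bInv-property}, and then post-compose with the linear map $\vecx \mapsto \matA\vecx \pmod{q}$. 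Pigeonhole applied to this composition will produce either a preimage of $\vec{0}$, corresponding to a Type-1 solution of $\cSIS$, or a collision, corresponding to a Type-2 solution of $\cSIS$.

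In detail, on input $\bvec{z} \in \{0,1\}^k$ the circuit $\circuit{C}$ will: (i) pad $\bvec{z}$ with $m - d\ell - k \ge 0$ leading zeros to produce $\vecr' \in \{0,1\}^{m-d\ell}$; (ii) invoke the backward-substitution recurrence~\eqref{eq:backwardSubstitution} from Proposition~\ref{prop:bInv-property} to compute a binary $\vecr \in \{0,1\}^{d\ell}$ such that $\matG\begin{bmatrix}\vecr \\ \vecr'\end{bmatrix} = \vecb \pmod{q}$; (iii) form $\vecx = \begin{bmatrix}\vecr \\ \vecr'\end{bmatrix}$ and output $\BitDecomp(\matA\vecx \pmod{q}) \in \{0,1\}^{n\lceil \log q\rceil}$. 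All three steps admit polynomial-size circuits, step (ii) being the least trivial but already granted by Proposition~\ref{prop:bInv-property}. The padding in step (i) is well defined since $m - d\ell \ge n\ell \ge n\lceil \log q\rceil = k$, and the map $\bvec{z} \mapsto \vecx$ is injective because the last $m - d\ell$ coordinates of $\vecx$ directly recover $\vecr'$ and hence $\bvec{z}$. By construction every produced $\vecx$ is binary and already satisfies $\matG\vecx = \vecb \pmod{q}$.

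Existence of a $\pigeon$ witness is then guaranteed by pigeonhole: the image of $\circuit{C}$ lies inside $\BitDecomp(\Z_q^n)$, a set of cardinality $q^n \le 2^k$. Translating back, if $\circuit{C}(\bvec{z}) = \vec{0}$, then the all-zeros bit-string uniquely represents $\vec{0} \in \Z_q^n$, whence $\matA\vecx = \vec{0} \pmod{q}$; combined with $\matG\vecx = \vecb$, the vector $\vecx$ is a Type-1 solution of $\cSIS$. If instead $\circuit{C}(\bvec{z}_1) = \circuit{C}(\bvec{z}_2)$ for $\bvec{z}_1 \ne \bvec{z}_2$, injectivity of $\bvec{z} \mapsto \vecx$ forces $\vecx_1 \ne \vecx_2$; equality of bit decompositions forces $\matA(\vecx_1 - \vecx_2) = \vec{0} \pmod{q}$; and both vectors satisfy $\matG \vecx_i = \vecb \pmod{q}$, yielding a Type-2 solution.

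The main obstacle I anticipate is not the pigeonhole counting, which is tight, but verifying that step (ii) can be implemented by a single polynomial-size circuit depending uniformly on $(\matA, \matG, \vecb)$. Proposition~\ref{prop:bInv-property} resolves this for the particular shape of binary-invertible matrices fixed in Definition~\ref{def:bInv}: the recursion~\eqref{eq:backwardSubstitution} peels off one $\ell$-block of $\vecr$ at a time using only modular arithmetic against previously determined blocks, plus a single $\BitDecomp$ of a $\Z_q$ element per block, all of which are obviously computable in polynomial size. A secondary subtlety is the potential mismatch between $2^k$ and $q^n$ when $q$ is not a power of two; choosing $k = n\lceil \log q\rceil$ rather than $\lceil n\log q\rceil$ ensures that $\circuit{C}$ never outputs a bit-string outside $\BitDecomp(\Z_q^n)$, so the two cases of the $\pigeon$ answer translate to $\cSIS$ solutions without ambiguity.
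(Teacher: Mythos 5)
Your proof is correct and takes essentially the same route as the paper: both reduce to $\pigeon$ by padding the free coordinates to get $\vecr'$, invoking Proposition~\ref{prop:bInv-property} to compute $\vecr$ with $\matG\vecx = \vecb \pmod{q}$, and outputting $\BitDecomp(\matA\vecx \pmod{q})$, with collisions/preimages of $\vec{0}$ translating directly to Type-2/Type-1 $\cSIS$ solutions. The only deviations — choosing $k = n\lceil\log q\rceil$ rather than $n\ell$, padding at the front rather than the back, and the remarks about counting tightness (which is automatic, since $\pigeon$ is total for any equal-arity circuit) — are cosmetic and do not change the argument.
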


\begin{proof}
We show a Karp reduction from $\cSIS$ to the $\pigeon$ problem that works for
any positive integer $q \geq 2$.
\smallskip

\noindent Let $\matA \in \Z_q^{n \times m}$, $\matG \in \Z_q^{d \times m}$,
$\vecb \in \Z_q^{d}$ be the inputs to the $\cSIS$ problem and define $k = m
- d \cdot \ell$. From the definition of $\cSIS$, we have that
$k \ge n \cdot \ell$. We define the circuit $\circuit{C}$ that takes as
input a vector $\bvec{x} \in \{0,1\}^{n \cdot \ell}$ and outputs a vector
$\bvec{z} \in \{0,1\}^{n \cdot \ell}$. For any input
$\bvec{x} \in \{0,1\}^{n \cdot \ell}$, we define the vector $\vecr'
= \begin{bmatrix} \bvec{x} \\ \vec{0}^{k - n \cdot \ell} \end{bmatrix} \in
\Z_q^k$,
and by Proposition \ref{prop:bInv-property} we compute a vector
$\vecr \in \binset^{d \cdot \ell}$ such that
$\matG \begin{bmatrix} \vecr \\ \vecr' \end{bmatrix} = \vec{b} \pmod{q}$. Let
$\circuit{C}_1 : \binset^{n \cdot \ell} \to \binset^{d \cdot \ell}$ be the
circuit that on input $\bvec{x}$ computes $\vecr$.

The circuit $\circuit{C}$ on input $\bvec{x}$, first use
$\circuit{C}_1$ to compute $\vecr=\circuit{C}_1(\bvec{x})$, and then outputs the
binary decomposition of the vector
$\matA \begin{bmatrix} \vecr \\ \vecr' \end{bmatrix} \pmod{q}$, where $\vecr'
=  \begin{bmatrix} \bvec{x} \\ \vec{0}^{k - n \cdot \ell}
\end{bmatrix}$. Overall, the description of $\circuit{C}$ is
\[ \circuit{C}(\bvec{x}) = \BitDecomp \left( \matA \begin{bmatrix}
   \circuit{C}_1(\bvec{x}) \\ \bvec{x} \\ \vec{0}^{k - n \cdot \ell}
   \end{bmatrix} \pmod{q} \right). \]
Clearly, this is a polynomial-time computation, and thus the circuit
$\circuit{C}$ is of polynomial size. To complete the proof,
we show that a solution to $\pigeon$ with input $\circuit{C}$,
gives a solution for the $\cSIS$ instance.
The output of $\pigeon$ with input $\circuit{C}$ is one of the following:
\begin{Enumerate}
\item a vector $\bvec{x} \in \{0,1\}^{n \cdot\ell}$ such that
  $\circuit{C}(\bvec{x}) = \vec{0}^{n \cdot \ell}$.

  In this case, for the vector $\vecx = \begin{bmatrix}
  \circuit{C}_1(\bvec{x}) \\ \bvec{x} \\ \vec{0} \end{bmatrix}$,
  we get $\circuit{C}(\bvec{x}) = \BitDecomp\left( \matA \vecx \pmod{q} \right)  =
  \bvec{0}$. Because the binary decomposition $\BitDecomp$ defines a bijective
  map, this implies that $\matA \vecx =  \vec{0} \pmod{q}$. Also, by the
  definition of $\circuit{C}_1$, we get that $\matG\vecx = \vecb \pmod{q}$.
  Hence, $\vecx$ is a solution of $\cSIS$ with input $(\matA, \matG, \vecb)$.

\item two vectors $\bvec{x}, \bvec{y} \in \{0,1\}^{n \cdot\ell}$,
  such that $\bvec{x} \neq \bvec{y}$ and
  $\circuit{C}(\bvec{x}) = \circuit{C}(\bvec{y})$.

  In this case, we define the vectors
  $\vecx = \begin{bmatrix} \circuit{C}_1(\bvec{x}) \\ \bvec{x} \\
  \vec{0} \end{bmatrix}$ and $\vecy = \begin{bmatrix} \circuit{C}_1(\bvec{y}) \\
  \bvec{y} \\ \vec{0} \end{bmatrix}$ such that
  \[\BitDecomp\left( \matA \vecx \pmod{q} \right) = \circuit{C}(\bvec{x}) =
    \circuit{C}(\bvec{y}) = \BitDecomp\left( \matA \vecy \pmod{q} \right). \]
  Because the binary decomposition $\BitDecomp$ defines a bijective map, this implies that
  $\matA \vecx = \matA \vecy \pmod{q}$ and hence
  $\matA (\vecx - \vecy) = \veczero \pmod{q}$.
  Therefore, $\vecx - \vecy \in \lamperp_q(\matA)$.
  Also, it has to be the case that $\vecx \neq \vecy$, because
  $\bvec{x} \neq \bvec{y}$ and by the definition of $\circuit{C}_1$ we get
  $\matG\vecx = \matG \vecy = \vecb \pmod{q}$.  Therefore, $\vecx$, $\vecy$ form a
  valid solution for the $\cSIS$ problem with input $(\matA, \matG, \vecb)$.
\end{Enumerate}
\end{proof}

\begin{lemma}\label{lm:hardness-cSIS}
  $\cSIS$ is $\PPP$-hard.
\end{lemma}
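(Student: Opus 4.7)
The plan is to give a Karp reduction from $\pigeon$ to $\cSIS$ at parameters $q = 4$ and $\ell = 2$. Given a $\pigeon$ instance $\circuit{C} : \binset^n \to \binset^n$, I first preprocess it into an equivalent NAND-only circuit with $G$ gates in which the $n$ output gates are the topologically last gates (inserting a constant-size buffer per output if needed), and assume $n$ is even, padding the output by one extra bit otherwise. The $\cSIS$ instance will then live in $\binset^m$ with $m = n + 2G$, indexed by the circuit inputs $x_1, \ldots, x_n$ together with one variable pair $(w_i, z_i) \in \binset^2$ per NAND gate $i$, where $z_i$ denotes the gate output and $w_i$ an auxiliary slack bit.

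The gate encoding rests on the arithmetic observation that for $x, y, z, w \in \binset$ the equation $x + y + 2z + w \equiv 3 \pmod{4}$ has a unique binary solution $(z, w)$ for every $(x, y) \in \binset^2$, and in that unique solution $z = x \gnand y$. Stacking these $G$ equations produces $\matG \in \Z_4^{G \times m}$ with right-hand side $\vecb = 3 \cdot \vec{1}_G$. To force $\matG$ into binary-invertible form (Definition~\ref{def:bInv}), I order the columns as $(w_G, z_G, w_{G-1}, z_{G-1}, \ldots, w_1, z_1, x_1, \ldots, x_n)$ and, after fixing a topological ordering $1, \ldots, G$ of the gates, I place the equation of gate $g$ in row $G - g + 1$. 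Then the gadget vector $[1, 2]$ appears in row $r$ exactly at columns $2r-1$ and $2r$ (the positions of $w_{G-r+1}, z_{G-r+1}$), and the remaining nonzero entries of row $r$ --- coefficients coming from the inputs to gate $G - r + 1$, which are either circuit inputs (in the last $n$ columns) or outputs of topologically earlier gates (whose $z$ columns lie strictly to the right of the gadget block) --- populate the $(\matU \,|\, \matV)$ portion of the template.

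The main obstacle, emphasized in the overview's remark that the input conditions of $\cSIS$ are crucial, is that with $d = G$ and $m = n + 2G$ there is no column budget for $n$ individual output-bit constraints through $\matA$: a na\"ive $\matA = [\vec{0} \,|\, \matI_n]$ would demand $m \geq 2n + 2G$. I resolve this by packing two output bits into each row of $\matA$. Setting $n_A = n/2$, the $j$-th row of $\matA$ carries coefficient $2$ on the column of output $2j-1$ and coefficient $1$ on the column of output $2j$, with zeros elsewhere. Since the map $(a, b) \mapsto 2a + b \pmod{4}$ is injective on $\binset^2$ and on $\{-1, 0, 1\}^2$, the condition $\matA \vecx = \veczero$ forces every output bit of $\circuit{C}(\bvec{x})$ to be $0$, while $\matA (\vecx - \vecy) = \veczero$ forces the two output vectors to agree coordinatewise. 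The dimension condition of $\cSIS$ is then met with equality: $(n_A + d) \ell = (n/2 + G) \cdot 2 = n + 2G = m$.

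Finally, I translate any $\cSIS$ answer into a $\pigeon$ answer. In case 1, $\matG \vecx = \vecb$ combined with the correctness of the NAND encoding forces every $z_i$ to equal the value that $\circuit{C}$ produces on input $\bvec{x}$, and $\matA \vecx = \veczero$ then forces $\circuit{C}(\bvec{x}) = \veczero$, so $\bvec{x}$ is the required preimage. In case 2 the same reasoning yields $\circuit{C}(\bvec{x}) = \circuit{C}(\bvec{y})$; the delicate step, which I expect to be the main subtlety, is arguing that the circuit-input portions satisfy $\bvec{x} \neq \bvec{y}$. This follows from Proposition~\ref{prop:bInv-property}: with $q = 2^{\ell}$, the gadget coordinates of any binary solution are uniquely determined by the non-gadget (circuit-input) coordinates, so $\bvec{x} = \bvec{y}$ would force $\vecx = \vecy$, contradicting $\vecx \neq \vecy$. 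Hence $(\bvec{x}, \bvec{y})$ is a valid $\pigeon$ collision, completing the Karp reduction.
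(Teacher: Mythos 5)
Your reduction is correct and reaches the same conclusion as the paper's, but differs in one structural choice. The paper also encodes gates via a linear equation modulo $4$ (Table~\ref{tbl:equationsForHardnesscSIS}), orders columns in reverse-topological order to force $\matG$ into binary-invertible form, and also packs two binary quantities into each row of $\matA$ using coefficients $1$ and $2$. The difference is what gets packed: the paper introduces a fresh auxiliary variable $w_i$ for each of the $n$ circuit inputs, places these $n$ columns in the $\matV$-block of $\matG$ unconstrained by any gate equation, and packs $(y_i, w_i)$ into row $i$ of $\matA$, yielding $\matA$ with $n$ rows and $m = 2(d + n)$ columns (the $d + n$ here is ``gates plus input nodes''). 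You instead omit the dummy input auxiliaries, pack two genuine output bits $(y_{2j-1}, y_{2j})$ into each row, use $n/2$ rows of $\matA$ and $m = n + 2G$; in both cases the size condition $m \geq (n_{\matA} + d)\ell$ is met with equality. Your uniqueness argument via Proposition~\ref{prop:bInv-property} is also slightly cleaner: since your $\vecr'$ block of $\matG$ is exactly the $n$ circuit-input columns, $\bvec{x} = \bvec{y}$ immediately forces $\vecx = \vecy$; the paper first has to observe from $\matA(\vecs - \vect) = \veczero$ that the $w$-coordinates agree before it can invoke uniqueness. Your variant is therefore leaner, at the minor cost of the even-$n$ padding and the buffer-gate preprocessing to make the output gates topologically last (the paper avoids this by treating $\circuit{C}$ as $n$ separate single-output circuits $\circuit{C}_1, \ldots, \circuit{C}_n$ with a block-diagonal $\matG$).

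One small imprecision worth fixing: the map $(a,b) \mapsto 2a + b \pmod{4}$ is \emph{not} injective on $\{-1,0,1\}^2$ (e.g.\ $(1,-1)$ and $(-1,-1)$ both map to $1 \bmod 4$). What the argument actually needs --- and what is true, as a short enumeration shows --- is that the only $(a,b) \in \{-1,0,1\}^2$ with $2a + b \equiv 0 \pmod{4}$ is $(0,0)$, which is exactly what makes $\matA(\vecx - \vecy) = \veczero$ force coordinatewise agreement of the output bits. The injectivity claim on $\binset^2$, used in the preimage case, is correct as stated.
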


\begin{proof}
We show a Karp reduction from $\pigeon$ to the $\cSIS$ problem.
\smallskip

\noindent Let $\circuit{C} = (\circuit{C}_1, \dots, \circuit{C}_n)$ be the input
circuit to the $\pigeon$ problem with $n$ inputs and $n$ outputs. Also, let $d =
\abs{\circuit{C}}$ be the size of $\circuit{C}$. As we explained in
Section~\ref{sec:prelims:complexity}, we may assume without loss of generality,
that $\circuit{C}$ consists of gates in the set $\{\gnand, \gnor, \gxor, \gand,
\gor\}$ \footnote{In fact, it is well known that only the $\nand$ ($\gnand$)
gates suffice, but we discuss here the implementation of all these five gates,
because we are going to use them in Section~\ref{sec:collision}.}. The circuit
$\circuit{C}$ is represented as $n$ directed acyclic graphs. We first describe
how to construct from the $i$-th circuit $\circuit{C}_i$ part of a $\cSIS$
instance and then we combine these parts to form a $\cSIS$ instance. Let $d_i =
\abs{\circuit{C}_i}$ be the size of $\circuit{C}_i$ and $\graph^{(i)} =
\left(V^{(i)}, E^{(i)}\right)$ be its directed acyclic graph, where $V^{(i)}$ is
the set of nodes of $\circuit{C}_i$ . Let $\left(v^{(i)}_1, v^{(i)}_2, \dots,
v^{(i)}_{d_i}\right)$ be a topological ordering of the graph $\graph^{(i)}$,
where the first $n$ nodes are the source nodes of $\graph^{(i)}$ and the last
node is the unique sink of $\graph^{(i)}$. As we have already explained in
Section~\ref{sec:prelims:complexity}, the source nodes of $\graph^{(i)}$
correspond to the inputs of $\circuit{C}_i$ and the sink of $\graph^{(i)}$
corresponds to the output of $\circuit{C}_i$.

We denote by $\matG^{(i)}$ and $\vecb^{(i)}$ the part of the final $\cSIS$
instance that corresponds to circuit $\circuit{C}_i$. We prove our hardness
result for $\ell = 2$ and $q = 4$ \footnote{At the end of the proof, we also
show how to generalize the result for any $\ell \in \Z_+$ and $q = 2^\ell$}. We
introduce two variables for each node of $\graph^{(i)}$. The set of the first
variable in each pair represents the value of the corresponding node in the
evaluation of $\circuit{C}_i$ and we call it the set of \textit{value variables}
and the set of the second variable in each pair is the set of \textit{auxiliary
variables}. Let us remind that in the topological ordering of $\graph^{(i)}$ we
start with the $n$ input nodes $v_1, \dots, v_n$, for which we use $x_1, \dots,
x_n$ to represent their corresponding value variables. The last node of the
topological ordering is the output node, and since it is the $i$-th output of
the circuit $\circuit{C}$, we denote its value variable by $y_i$. We denote the
remaining value variables by $z^{(i)}_{n + 1}, \dots, z^{(i)}_{d_i - 1}$.
Additionally, we  denote by $w_1, \dots, w_n$ the auxiliary variables that
correspond to the input nodes, by $t_i$ the auxiliary variable of the output
node and by $r^{(i)}_{n + 1}, \dots, r^{(i)}_{d_i - 1}$ the remaining auxiliary
variables. We summarize the notation for the variables in the
Table~\ref{tbl:variablesForHardnesscSIS}. We observe that we may use $z^{(i)}_j
= x_j$ for $j \le n$ and $z^{(i)}_{d_i} = y_i$ and the same holds for the
auxiliary variables. Each column of $\matG^{(i)}$ corresponds to one of these
variables.

\begin{table}[!h]
  \centering
  \begin{tabular}{ l | c c c c c c c }
             nodes               & $v^{(i)}_1$ & $\dots$ & $v^{(i)}_n$ & $v^{(i)}_{n + 1}$ & $\dots$ & $v^{(i)}_{d_i - 1}$ & $v^{(i)}_{d_i}$ \\[3.4pt]
    \hline
             value variables     & $x_1$       & $\dots$ & $x_n$       & $z^{(i)}_{n + 1}$ & $\dots$ & $z^{(i)}_{d_i - 1}$ & $y_i$  \Tstrut  \\[3.4pt]
             auxiliary variables & $w_1$       & $\dots$ & $w_n$       & $r^{(i)}_{n + 1}$ & $\dots$ & $r^{(i)}_{d_i - 1}$ & $t_i$
  \end{tabular}
  \caption{The value and auxiliary variables that correspond to every node of the graph $\graph^{(i)}$.}
  \label{tbl:variablesForHardnesscSIS}
\end{table}

Since we focus on a fixed graph $\graph^{(i)}$, we occasionally drop the
superscript $(i)$ for simplicity. We reintroduce the superscripts when we
combine all matrices $\matG^{(i)}$ to a matrix $\matG$. Our goal is to define a
$\matG^{(i)}$ and a $\vecb^{(i)}$ such that every binary solution of
$\matG^{(i)} \vecs = \vecb^{(i)} \pmod{4}$ corresponds to a valid evaluation of
the circuit $\circuit{C}_i$.

As explained in Section~\ref{sec:prelims:complexity}, it suffices to assume that
the in-degree of every non-input node is two. Let $p_1(j)$ be the index of the
first, in the topological ordering, predecessor of the node $v_j$ and $p_2(j)$
be the index of the second. Since nodes are indexed in topological ordering we
have that $p_1(j) < p_2(j) < j$. Every row of $\matG^{(i)}$ corresponds to
a node $v_j$, with $j > n$, of $\graph^{(i)}$, and contains the
coefficients of the variables in the modular equation of a form that appears in
Table \ref{tbl:equationsForHardnesscSIS}, depending on the label of $v_j$. We
prove the correctness of these equations later in the text but it becomes
also clear from the following Claim \ref{claim:nand}. The proof of Claim
\ref{claim:nand} goes through a simple enumeration of the different values for
the boolean variables and can be found in Appendix
\ref{sec:app:proofOfClaimNand}. The equation of node $v_{d_i-j}$ defines also
the $j$-th element of $\vecb^{(i)}$ according to
Table~\ref{tbl:equationsForHardnesscSIS}.

\begin{table}[!h]
  \centering
  \begin{tabular}{ r | l | c }
             label of $v_j$ & equation of $v_j$                                     & $b_j$ \\[3.4pt]
    \hline
             $\gnand$       & $r_j + 2 z_j - z_{p_1(j)} - z_{p_2(j)} = 2 \pmod{4}$  & $2$ \Tstruts \\
             $\gnor$        & $r_j + 2 z_j - z_{p_1(j)} - z_{p_2(j)} = 3 \pmod{4}$  & $3$ \Tstruts \\
             $\gxor~$       & $z_j + 2 r_j - z_{p_1(j)} - z_{p_2(j)} = 0 \pmod{4}$  & $0$ \Tstruts \\
             $\gand~$       & $r_j + 2 z_j - z_{p_1(j)} - z_{p_2(j)} = 0 \pmod{4}$  & $0$ \Tstruts \\
             $\gor~$        & $r_j + 2 z_j + z_{p_1(j)} + z_{p_2(j)} = 0 \pmod{4}$  & $0$ \Tstruts \\
  \end{tabular}
  \caption{Forms of equation of a non-input node $v^{(i)}_j$ of the graph
           $\graph^{(i)}$, depending on its label.}
  \label{tbl:equationsForHardnesscSIS}
\end{table}

\begin{claim} \label{claim:nand}
  Let $x, y, z, w \in \binset$, then the following equivalences holds
  \begin{Enumerate}
    \item $w + 2 z - x - y = 2 \pmod{4} \Leftrightarrow x \gnand y = z, ~w = x \oplus y$
    \item $w + 2 z - x - y = 3 \pmod{4} \Leftrightarrow x \gnor  y = z, ~w = \neg (x \oplus y)$
    \item $z + 2 w - x - y = 0 \pmod{4} \Leftrightarrow x \gxor  y = z, ~w = x \gand y$
    \item $w + 2 z - x - y = 0 \pmod{4} \Leftrightarrow x \gand  y = z, ~w = x \oplus y$
    \item $w + 2 z + x + y = 0 \pmod{4} \Leftrightarrow x \gor   y = z, ~w = x \oplus y$.
  \end{Enumerate}
\end{claim}

So as we said, each column of $\matG^{(i)}$ corresponds to a variable, each row
of $\matG^{(i)}$ corresponds to an equation as described above and $\vecb^{(i)}$
is defined based on the label of each node of $\graph^{(i)}$ according to
Table~\ref{tbl:equationsForHardnesscSIS}. The order of both the rows and the
columns is specified by the topological sorting of $\graph^{(i)}$. Specifically,
the first row of $\matG^{(i)}$ describes the equation corresponding to node
$v^{(i)}_{d_i}$ (the output node of $\graph^{(i)}$), the second row of
$\matG^{(i)}$ describes the equation corresponding to node $v^{(i)}_{d_i - 1}$.
In general, the $k$-th row of $\matG^{(i)}$ describes the equation corresponding
to node $v^{(i)}_{d_i - k}$. We emphasize that, since there are no equations for
the input nodes of $\graph^{(i)}$, we have $d_i - n$ equations in total. The
order of columns follows a similar rule, i.e. it corresponds to the reverse
order of the topological ordering of $\graph^{(i)}$. The first two columns
correspond to variables $z_{d_i}$, $r_{d_i}$ of node $v_{d_i}$ followed by pairs
of rows corresponding to the variables of all non-input nodes of $\graph^{(i)}$.
Among the two columns of each node, the first corresponds to the auxiliary
variable and the second to the value variable, unless the label of the node is
$``\gxor"$. In the $``\gxor''$ case, the first corresponds to the value variable
and the second to the auxiliary variable. Finally, $\matG^{(i)}$ has $n$ columns
at the end for the variables that correspond to the input nodes of
$\graph^{(i)}$. For the last $2n$ columns, all the columns of the value
variables precede the columns of the auxiliary variables. These rules completely
define matrix $\matG^{(i)}$ (see Table~\ref{tbl:sketchOfIthGmatrix} for an
illustration).

\begin{table}[!h]
  \centering
  \begin{tabular}{ l | c c c c c c c : c c c c c c }
                            & $r_{d_i}$ & $z_{d_i}$ & $r_{d_i - 1}$ & $z_{d_i - 1}$ & $\dots$ & $r_{n + 1}$ & $z_{n + 1}$ & $x_n$ & $\dots$ & $x_1$ & $w_n$ &$\dots$& $w_1$ \\
    \hline
      eq. of $v_{d_i}$      & $1$       & $2$       & $\star$           &
      $\star$           & $\dots$ & $\star$         & $\star$         &
      $\star$   &
      $\dots$ & $\star$   & $0$   &$\dots$& $0$   \\
      eq. of $v_{d_i - 1}$  & $0$       & $0$       & $1$           &
      $2$           & $\dots$ & $\star$         & $\star$         & $\star$   &
      $\dots$ & $\star$   & $0$   &$\dots$& $0$   \\
      $\vdots$              &           &           &               &               & $\vdots$&             &             &       & $\vdots$&       &       &$\vdots$&      \\
      eq. of $v_{n + 1}$    & $0$       & $0$       & $0$           &
      $0$           & $\dots$ & $1$         & $2$         & $\star$   & $\dots$
      & $\star$   & $0$   &$\dots$& $0$
  \end{tabular}
  \caption{Illustration of the matrix $\matG^{(i)}$ (assuming that
  $\circuit{C}_i$ has no $``\gxor''$ gates).}
  \label{tbl:sketchOfIthGmatrix}
\end{table}

\noindent Before defining the final matrix $\matG$, we state and prove some
basic properties of $\matG^{(i)}$.

\begin{claim} \label{claim:ithGmatrixIsBinaryInvertible}
    The matrix $\matG^{(i)}$ is binary invertible.
\end{claim}

\begin{proofb}{of Claim \ref{claim:ithGmatrixIsBinaryInvertible}}
    We remind that the dimensions of $\matG^{(i)}$ are
  $(d_i - n) \times (2 d_i)$. Because of the order of rows and columns of
  $\matG^{(i)}$, and by the form of the equations of Table
  \ref{tbl:equationsForHardnesscSIS}, we have that the $1 \times 2$ vectors that
  appear in the diagonal of $\matG^{(i)}$ are equal to $\vec{\gamma}_2^T =
  \begin{bmatrix} 1 & 2 \end{bmatrix}$. The only other non-zero elements of the
  $k$-th row of $\matG^{(i)}$ appear in the columns corresponding to
  $v^{(i)}_{p_1(d_i - k)}$ and $v^{(i)}_{p_2(d_i - k)}$. But, by construction
  $v^{(i)}_{p_1(d_i - k)}$ and $v^{(i)}_{p_2(d_i - k)}$ are always before
  $v^{(i)}_{d_i - k}$ in the topological ordering of $\graph^{(i)}$. Therefore,
  their corresponding columns are after the columns of $v^{(i)}_{d_i - k}$.
  Hence, the only non-zero elements of $\matG^{(i)}$ are above its 3rd diagonal.
  This implies that $\matG^{(i)}$ has the form $\begin{bmatrix} \left(\matI_{d_i
  - n} \otimes \vec{\gamma}_2^T + \matU^{(d_i - n) \times (2 (d_i - n))}
  \right)& \matV^{n \times (2 n)} \end{bmatrix}$ and as shown $\matU^{(d_i - n)
  \times (2 (d_i - n))}$ has non-zero elements only above the 3rd diagonal. This
  concludes the claim that $\matG^{(i)}$ is binary invertible.
\end{proofb}

\begin{claim} \label{claim:correctnessOfithGmatrix}
    Let $\vecs \in \binset^{2 d_i}$ be a solution to the modular linear equation
  $\matG^{(i)} \vecs = \vecb^{(i)} \pmod{4}$. Let $\bvec{x}$ be a binary string
  consisting of the value variables of the input nodes, i.e. $\bvec{x} = (s_{2
  d_i - 2 n + 1}, s_{2 d_i - 2 n + 2}, \dots, s_{2 d_i - n + 1})$. Then, the
  second coordinate $s_2$ of $\vecs$ is equal to $s_2 =
  \circuit{C}_i(\bvec{x})$.
\end{claim}

\begin{proofb}{of Claim \ref{claim:correctnessOfithGmatrix}}
    We inductively prove that the value of the coordinate $s_{2 d_i - 2 j + 2}$
    of $\vecs$ is equal to the value of the non-input node $v_j$ in
    $\graph^{(i)}$ in the evaluation of $\circuit{C}_i(\bvec{x})$.
  \smallskip

  \noindent \textsc{Induction Base.} By the definition of $\bvec{x}$ we have
  that the coordinates $(s_{2 d_i - 2 n + 1}, \dots, s_{2 d_i - n + 1})$ are
  equal to the input values.
  \smallskip

  \noindent \textsc{Inductive Hypothesis.} Assume that for any $k$ such that $n
  < k < j$ the value of the coordinate $s_{2 d_i - 2 k + 2}$ of $\vecs$ is equal
  to the value of the non-input node $v_k$ in $\graph^{(i)}$ in the evaluation
  of $\circuit{C}_k(\bvec{x})$.
  \smallskip

  \noindent \textsc{Inductive Step.} The vector $\vecs$ has to satisfy the $(d_i
  - n - j + 1)$-th modular equation of the system $\matG^{(i)} \vecs =
  \vecb^{(i)} \pmod{4}$. Without loss of generality, we assume that the label of
  $v_j$ is $``\gnand"$. This equation then suggests that $s_{2 d_i - 2 j + 1} +
  2 s_{2 d_i - 2 j + 2} - s_{2 d_i - 2 p_1(j) + 2} - s_{2 d_i - 2 p_2(j) + 2} =
  2 \pmod{4}$ and by Claim \ref{claim:nand} we get that
  \begin{equation} \label{eq:claimCorrectnessNand}
  s_{2 d_i - 2 j + 2} = \left(s_{2 d_i - 2 p_1(j) + 2}\right) \gnand
   \left(s_{2 d_i - 2 p_2(j) + 2}\right).
  \end{equation}
  But, from inductive hypothesis we know that $s_{2 d_i - 2 p_1(j) + 2}$ and
  $s_{2 d_i - 2 p_2(j) + 2}$ take the correct values of $v_{p_1(j)}$ and
  $v_{p_2(j)}$ in the evaluation of $\circuit{C}_i(\bvec{x})$. Hence, from
  Equation~\eqref{eq:claimCorrectnessNand} we immediately get that $s_{2 d_i - 2
  j + 2}$ takes the value of node $v_j$. Similarly, we can show the inductive
  step for all other possible labels of $v_j$.
  \smallskip

  \noindent For $j = d_i$ the statement that we just proved through Induction
  implies that $s_2 = \circuit{C}_i(\bvec{x})$.
\end{proofb}

We are finally ready to describe our matrix $\matG$ and vector $\vecb$. We
remind that $d_i = \abs{\circuit{C}_i}$ and we define $d'_i = d_i - n$. Let $d =
\sum_{i = 1}^n d'_i$. The matrix $\matG$ is of dimension $d \times 2 (d + n)$.
We remind that from Claim \ref{claim:ithGmatrixIsBinaryInvertible} matrices
$\matG^{(i)}$ are binary invertible and hence let $\matU^{(i)}$ and
$\matV^{(i)}$ the matrices that satisfy the equation
\begin{equation} \label{eq:decompositionofIthGmatrix}
  \matG^{(i)} = \begin{bmatrix} \left(\matI_{d'_i} \otimes \vec{\gamma}_2^T +
                                      \matU^{(i)}\right) & \matV^{(i)}
                \end{bmatrix}.
\end{equation}
\noindent We define $\matG$ to be equal to
\begin{equation} \label{eq:definitionOfGHardnesscSIS}
  \matG = \begin{bmatrix}
            \left(\matI_{d'_1} \otimes \vec{\gamma}_2^T + \matU^{(1)}\right) &
            \matzero                                                & \dots  &
            \matzero                                                &
            \matV^{(1)} \\
            \matzero                                                &
            \left(\matI_{d'_2} \otimes \vec{\gamma}_2^T + \matU^{(2)}\right) &
            \dots  & \matzero                                                &
            \matV^{(2)} \\
            \vdots                                                  & \vdots                                                  & \ddots & \vdots                                                  & \vdots      \\
            \matzero                                                &
            \matzero                                                & \dots  &
            \left(\matI_{d'_n} \otimes \vec{\gamma}_2^T + \matU^{(n)}\right) &
            \matV^{(n)} \\
          \end{bmatrix},
\end{equation}
\noindent and the vector $\vecb$ to be
\begin{equation} \label{eq:definitionOfBHardnesscSIS}
  \vecb = \begin{bmatrix}
            \vecb^{(1)} \\
            \vecb^{(2)} \\
            \vdots      \\
            \vecb^{(n)}
          \end{bmatrix}.
\end{equation}

\noindent From the definition of $\matG$ and Claim
\ref{claim:ithGmatrixIsBinaryInvertible} it is immediate that $\matG$ is binary
invertible.

\begin{claim} \label{claim:GmatrixIsBinaryInvertible}
    The matrix $\matG$ defined in Equation~\eqref{eq:definitionOfGHardnesscSIS}
  is binary invertible.
\end{claim}

Additionally, let $k_i = 2 + \sum_{j = 1}^{i - 1} d'_j$, then the following
claim is a simple corollary of Equations~\eqref{eq:definitionOfGHardnesscSIS}
and \eqref{eq:definitionOfBHardnesscSIS} and Claim
\ref{claim:correctnessOfithGmatrix}.

\begin{claim} \label{claim:correctnessofGmatrix}
    Let $\vecs \in \binset^{2 (d + n)}$ be a solution to the modular linear
    equation $\matG \vecs = \vecb \pmod{4}$. Let also $\bvec{x}$ the binary
    string that is equal to the value of the value variables of the input nodes,
    i.e. $\bvec{x} = (s_{2 d + 1}, s_{2 d + 2}, \dots, s_{2 d + n})$. Then, the
    binary string $\bvec{z} = (s_{k_1}, s_{k_2}, \dots, s_{k_n})$ is equal to
    $\bvec{z} = \circuit{C}(\bvec{x})$.
\end{claim}

\noindent To complete the description of the $\cSIS$ instance to which we reduce
$\pigeon$, we have to describe also the matrix $\matA$. The dimensions of
$\matA$ are $n \times 2 (d + n)$. We describe as a concatenation of three
matrices $\matA_1 \in \Z_q^{n \times 2 d}$, $\matA_2 \in \Z_q^{n \times n}$,
$\matA_3 \in \Z_q^{n \times n}$, such that $\matA = \begin{bmatrix} \matA_1 &
\matA_2 & \matA_3 \end{bmatrix}$. Each row of $\matA_1$ corresponds to an output
of $\circuit{C}$ and has a single $1$ in the position $(i, k_i)$. More precisely
$\matA_1 = \sum_{i = 1}^n \matE_{i, k_i}$, where $\matE_{i, j}$ is the matrix
with all zeros except in position $(i, j)$. We also set $\matA_2 = \veczero$ and
$\matA_3 = 2 \matI_n$ which implies $\matA = \begin{bmatrix} \matA_1 & \veczero
& 2 \matI_n \end{bmatrix}$ (see Table~\ref{tbl:sketchOfAmatrix} for an
illustration).
\medskip

\begin{table}[!h]
  \centering
  \begin{tabular}{ l | c c c c c c c c : c c c : c c c }
                            & $r^{(1)}_{d_1}$
                            & $z^{(1)}_{d_1}$
                            & $\dots$
                            & $r^{(2)}_{d_2}$ & $z^{(2)}_{d_2}$ & $\dots$ & $r^{(n)}_{d_n}$ & $z^{(n)}_{d_n}$ & $x_n$ & $\dots$ & $x_1$ & $w_n$ &$\dots$& $w_1$ \\[3pt]
    \hline
      $y_1$                 & $0$
                            & $1$
                            & $\dots$
                            & $0$           & $0$           & $\dots$ & $0$         & $0$         & $0$   & $\dots$ & $0$   & $2$   &$\dots$& $0$   \\
      $y_2$                 & $0$
                            & $0$
                            & $\dots$
                            & $0$           & $1$           & $\dots$ & $0$         & $0$         & $0$   & $\dots$ & $0$   & $0$   &$\dots$& $0$   \\
      $\vdots$              &
                            &
                            &
                            &               &               & $\vdots$&             &             &       & $\vdots$&       &       &$\vdots$&      \\
      $y_n$                 & $0$
                            & $0$
                            & $\dots$
                            & $0$           & $0$           & $\dots$ & $0$         & $1$         & $0$   & $\dots$ & $0$   & $0$   &$\dots$& $2$
  \end{tabular}
  \caption{Illustration of the matrix $\matA$ (assuming that
  $\circuit{C}$ has no $``\gxor''$ gates).}
  \label{tbl:sketchOfAmatrix}
\end{table}

At a high level, the first $2s$ columns of $\matA$ and $\matG$ are meant for the
description of circuit $\circuit{C}$. In particular, there are two columns for
each gate of $\circuit{C}$, the first corresponds to an auxilirary variable and
the second to the output of the gate \footnote{This is ice versa in the case of
$\gxor$ gate, but we can assume without loos of generality that our circuit
consists of only $\gnand$ gates and then this holds.}. Then, there are $n$
columns corresponding to the input of $\circuit{C}$ and $n$ auxiliary columns,
one for each output of $\circuit{C}$. Similarly to
Lemma~\ref{lm:hardness-Blichfeldt}, $\matA$ has non-zero elements in the columns
corresponding to the $n$ outputs of $\circuit{C}$ and  $\matG$ and $\vecb$  play
the role of the set $S$, namely they guarantee that the output of $\cSIS$ will
encode $\bvec{x}$ and $\circuit{C}(\bvec{x})$.
\medskip

The output of $\cSIS$ on input $(\matA$, $\matG,\vec{b})$ is one of the
following:
\begin{Enumerate}
   \item a vector $\vecs \in \{0,1\}^{2(n + s)}$ such that $\vecs \in
     \lamperp_4(\matA)$ and $\matG \vecs \equiv  \vecb \pmod4$.
     \medskip

    Let $\bvec{x} \in \{0,1\}^n$ and $\bvec{y} \in \{0,1\}^n$ be the $n$ input
    and $n$ output coordinates of $\vecs$ respectively, i.e. $\bvec{x} = (s_{2 d
    + 1}, \dots, s_{2 d + n})$ and $\bvec{y} = (s_{k_1}, \dots, s_{k_n})$, as
    defined above. Let $\bvec{w} \in \{0,1\}^n$ be the $n$ last coordinates of
    $\vecs$. Then, since each row of $\matA$ has exactly one coordinate equal to
    1, corresponding to a value in $\bvec{y}$, and one coordinate equal to 2,
    corresponding to a coordinate in $\bvec{w}$, $\matA \vecx = \vec{0}
    \pmod{4}$ implies both that $\bvec{y} = \vec{0}$ and $\bvec{w} = \vec{0}$.
    We can now use Claim \ref{claim:correctnessofGmatrix}, and get that
    $\circuit{C}(\bvec{x}) = \bvec{y}$, which in turn implies that
    $\circuit{C}(\bvec{x}) = \vec{0}$. Hence, $\bvec{x}$ is a valid solution to
    $\pigeon$ with input $\circuit{C}$.

  \item two vectors $\vecs, \vect \in \{0,1\}^{2(n + s)}$, such that
    $\vecs \neq \vect$, $\vecs - \vect \in \lamperp_4(\matA)$ and $\matG \vecs
    \equiv \vecb \pmod 4$, $\matG \vect \equiv \vecb \pmod 4$.
    \medskip

    Let $\bvec{x}_1 ,\bvec{x}_2 \{0,1\}^n$ and $\bvec{y}_1 ,\bvec{y}_2
    \{0,1\}^n$ be the $n$ input and $n$ output coordinates of $\vecs$ and
    $\vect$ respectively, i.e. $\bvec{x}_1 = (s_{2 d + 1}, \dots, s_{2 d + n})$,
    $\bvec{y}_1 = (s_{k_1}, \dots, s_{k_n})$ and $\bvec{x}_2 = (t_{2 d + 1},
    \dots, t_{2 d + n})$, $\bvec{y}_2 = (t_{k_1}, \dots, t_{k_n})$. Let also
    $\bvec{w}_1 \in \{0,1\}^n$ and $\bvec{w}_2 \in \{0,1\}^n$ be the $n$ last
    coordinates of $\vecs$ and $\vect$ respectively. Then, similarly to the
    previous case, $\matA (\vecs - \vect) = \vec{0} \pmod{4}$ implies
    $\bvec{y}_1 = \bvec{y}_2$ and $\bvec{w}_1 = \bvec{w}_2$. From $\bvec{w}_1 =
    \bvec{w}_2$, $\vecs \neq \vect$ and the uniqueness guaranteed by Proposition
    \ref{prop:bInv-property} we can easily conclude that $\bvec{x}_1 \neq
    \bvec{x}_2$. Also, using Claim \ref{claim:correctnessofGmatrix} we get that
    $\circuit{C}(\bvec{x}_1) = \bvec{y}_1$, $\circuit{C}(\bvec{x}_2) =
    \bvec{y}_2$ and since $\bvec{y}_1 = \bvec{y}_2$ we get
    $\circuit{C}(\bvec{x}_1) = \circuit{C}(\bvec{x}_2)$ with $\bvec{x}_1 \neq
    \bvec{x}_2$. Therefore, the pair $\bvec{x}_1$, $\bvec{x}_2$ is a valid
    solution to $\pigeon$ with input $\circuit{C}$.
\end{Enumerate}

\noindent This completes the hardness proof for $q = 4$.
\medskip

For the $q = 2^\ell$ case, we need to augment $\matA$ and $\matG$. This is done
by introducing $\ell$ variables for every node of $\circuit{C}$. One of the is
still the value variable and the $\ell - 1$ rest are auxiliary variables. We
also have to concatenate a zero matrix of size $s \times (\ell-1)(n + s)$ on the
right of $\matG$. For the matrix $\matA \in \integer^{n \times \ell (n + s)}$,
we concatanate $\ell - 1$ matrices of the form $2^i\matI$ for $i \in \{2,3, 4,
\dots, \ell\}$ on the right. The vector $\vecb$ remains the same. The new tuple
is still a valid input for $\cSIS$, since the parameters are appropriately set
and $\matG$ remains binary invertible. Since only zero entries on the right have
been added to the matrix $\matG$, it decribes the circuit $\circuit{C}$ as we
argued above. Finally, let $\vecx \in \{0,1\}^{\ell(n+s)}$ such that $\matA
\vecx = 0 \pmod{4}$, then the last $(\ell - 1) n$ coordinates of $\vecx$ must be
$0$ and the rest of the proof is as above.
\end{proof}

\noindent Combining Lemma~\ref{lm:inclusion-cSIS} and
Lemma~\ref{lm:hardness-cSIS}, we prove the main theorem of this section.

\begin{theorem} \label{thm:cSISPPPcompleteness}
	The $\cSIS$ problem is $\PPP$-complete.
\end{theorem}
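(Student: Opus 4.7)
The plan is essentially a one-line combination of the two preceding lemmas: since $\PPP$-completeness requires both membership in $\PPP$ and $\PPP$-hardness, I would invoke Lemma~\ref{lm:inclusion-cSIS} for the first and Lemma~\ref{lm:hardness-cSIS} for the second. All the substantive work has already been done in those proofs; the interesting planning question is why the two directions fit together as a matched pair, and the answer is that both rely on the key structural property of binary invertible matrices isolated in Proposition~\ref{prop:bInv-property}, used once algorithmically and once structurally.

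For the membership direction, I would view binary invertibility \emph{algorithmically}: treating the last $k$ coordinates of a putative $\cSIS$ solution as the input of a $\pigeon$ circuit, the efficiently computable (and, for $q = 2^{\ell}$, unique) binary completion guaranteed by Proposition~\ref{prop:bInv-property} becomes a polynomial-size sub-circuit that always produces some $\vecx \in \binset^m$ with $\matG \vecx = \vecb \pmod{q}$. Composing with $\matA$ and taking the binary decomposition gives a function $\binset^{n\ell} \to \binset^{n\ell}$ whose zero-preimages are Type-1 $\cSIS$ witnesses and whose collisions are Type-2 witnesses. The size condition $m \geq (n+d)\ell$ is exactly what matches the number of outputs to the number of inputs so that the result is a legal $\pigeon$ instance.

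For the hardness direction, I would use the same property \emph{structurally} rather than algorithmically: arithmetize each gate of an arbitrary $\pigeon$ circuit as a single modular equation over $\Z_4$ (for example $w + 2z - x - y \equiv 2 \pmod{4}$ for $\gnand$, together with its analogues for $\gnor, \gxor, \gand, \gor$ via Claim~\ref{claim:nand}), arrange variables in reverse topological order so that the gadget pair $[1\ 2]$ lines up on the diagonal of the resulting $\matG$, and pair this $\matG$ with $\matA = \begin{bmatrix}\matA_1 & \veczero & 2\matI_n\end{bmatrix}$ to force each output bit and its auxiliary partner to be zero. The main obstacle, and the reason the hardness direction is the harder half, is ensuring that the encoding respects the \emph{rigid} shape required by Definition~\ref{def:bInv}: only entries strictly above the third diagonal can be non-zero, which severely constrains the allowed equations. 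The observation that predecessors in a topological sort of the gate DAG always correspond to columns to the right of the current equation's diagonal block is exactly what makes the construction legal.

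Concatenating the two lemmas produces Theorem~\ref{thm:cSISPPPcompleteness}; I would not expect any extra bookkeeping beyond checking that the instance sizes remain polynomial, which is immediate from the sub-quadratic blowups in both reductions.
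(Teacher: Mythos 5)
Your proposal matches the paper exactly: the theorem is proved by combining Lemma~\ref{lm:inclusion-cSIS} and Lemma~\ref{lm:hardness-cSIS}, and your summaries of both lemmas (the algorithmic use of Proposition~\ref{prop:bInv-property} to build the $\pigeon$ circuit for membership, and the reverse-topological gate arithmetization over $\Z_4$ with $\matA = [\matA_1 \; \veczero \; 2\matI_n]$ for hardness) accurately reflect the paper's arguments. Nothing further is needed.
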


\noindent We provide an example of our reduction for very simple circuit
$\circuit{C}$ in Figure~\ref{fig:cSISHardnessSimpleExample}.

\begin{figure}[!ht]
  \centering
  \includegraphics[scale=0.3]{./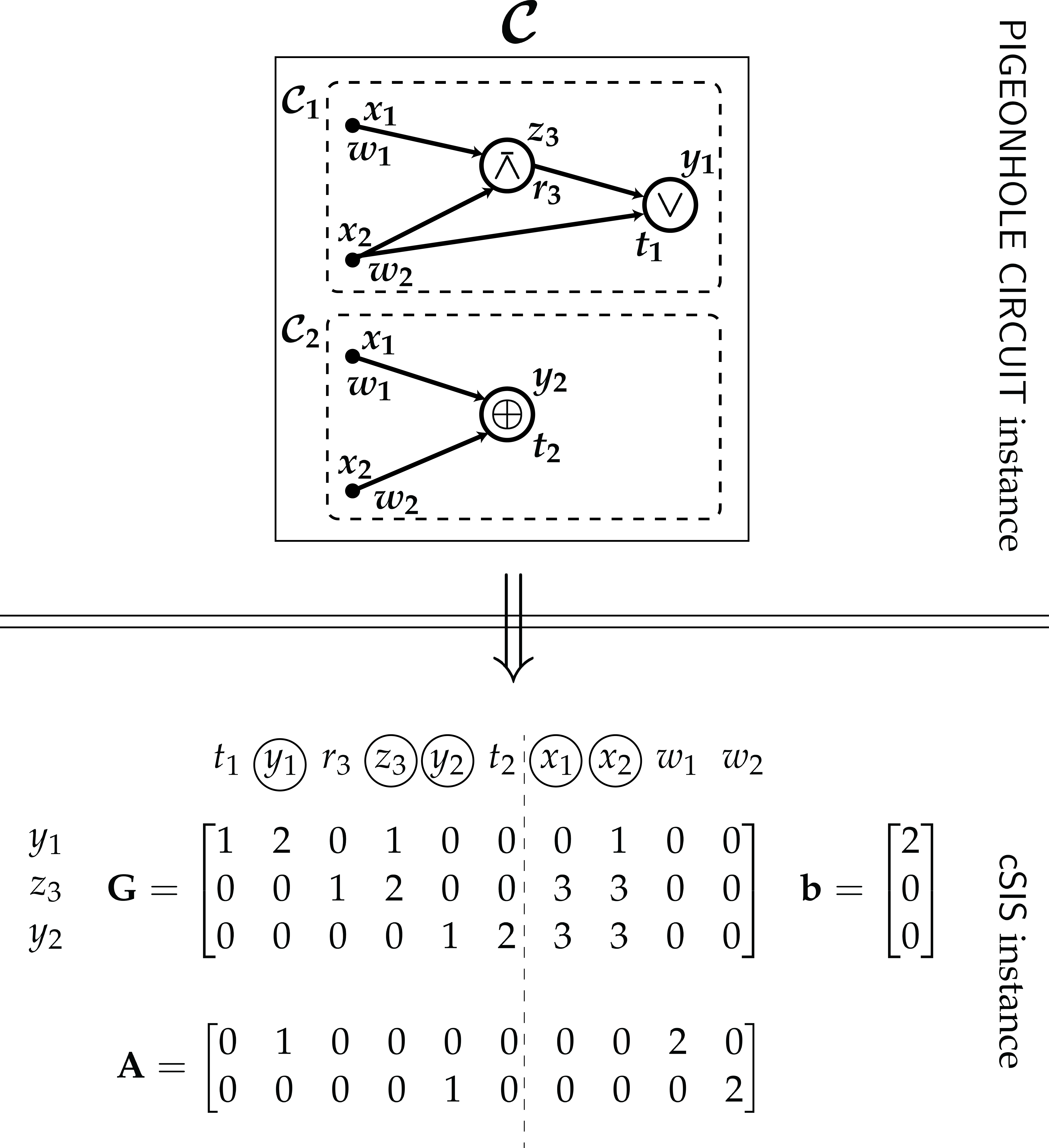}
  \caption{A simple example of the construction of
  Lemma~\ref{lm:hardness-cSIS}.}
  \label{fig:cSISHardnessSimpleExample}
\end{figure}

\section{Complete Collision Resistant Hash Function} \label{sec:collision}

The similarities between the $\cSIS$ problem and $\sis$ raise the question of
whether $\cSIS$ has cryptographic applications. In this section, we propose a
candidate family of \textit{collision-resistant} hash (CRH) functions based on
the \emph{average-case hardness} of the $\cSIS$ problem, and also discuss its
worst-case hardness. The computational problem $\weakcSIS$ associated with our
collision resistant hash function family is a variant of $\cSIS$ presented in
Section \ref{sec:cSIS}. In this case, the modular constraint are
\textit{homogeneous}, i.e. $\vecb = \veczero$, and the inequality constraints on
the dimension of the matrices are strict. This change in the relation of the
parameters might seem insignificant, but it is actually very important since it
transforms or problem into a purely lattice problem: On input matrices
$\matA,\matG$ with corresponding bases $\matB_{\matA}$ and $\matB_{\matG}$,
where $\matG$ is binary invertible, find two vectors $\vecx$ and $\vecy$ such
that $\vecx, \vecy \in \lat(\matB_{\matG})$ and $\vecx - \vecy \in
\lat(\matB_{\matA})$. Our proof that $\weakcSIS$ is $\PWPP$-complete, increases
our hope that $\sis$ is $\PWPP$-complete too, since it overcomes one important
difficulty towards reducing $\weakcSIS$ to $\sis$. We start with a formal
definition of collision-resistant hash function.
\medskip

\noindent \paragr{Collision-Resistant Hash Functions:}
Let $k$ be the security parameter. A family of functions
\[ \mathcal{H} = \left\{\crhf_{\bvec{s}}: \binset^{k} \rightarrow
\binset^{p(k)}\right\}_{\bvec{s} \in \binset^{p'(k)}} \]
where $p(k)$ and $p'(k)$ are polynomials, is \textit{collision-resistant} if:
\begin{itemize}
\item[] (\textsc{Shrinking}) The output of $\crhf_{\bvec{s}}$ is smaller than its
  input. Namely, $p(k) < k$.
\item[] (\textsc{Efficient Sampling}) There exists a probabilistic polynomial-time
  algorithm $\crhfgen$ that on input $1^k$ samples a uniform key $\bvec{s}$.
\item[] (\textsc{Efficient Evaluation}) There exists a deterministic polynomial-time
  algorithm that on input a key $\bvec{s}$ and an $\bvec{x} \in
  \binset^{k}$ outputs $\crhf_{\bvec{s}}(\bvec{x}) \in \binset^{p(k)}$.
\item[] (\textsc{Collision-Resistance}) For every probabilistic polynomial-time adversary
$\Adv$, there exists a negligible function $\nu(k)=\negl(k)$,
  such that for any $k \in \integer_+$:
  \[
  \Pr_{\bvec{s} \gets \crhfgen(1^{k})}\left[
    ( \bvec{x}_1,\bvec{x}_2 ) \gets \Adv(1^{k}, \bvec{s}) \text{ s.t. }
    \bvec{x}_1 \neq \bvec{x}_2 \text{ and }
    \crhf_{\bvec{s}}(\bvec{x}_1) = \crhf_{\bvec{s}}(\bvec{x}_2)\right] \leq \nu(k).
  \]
\end{itemize}

Next, we define our new CRH function family. Let $k$ be the security parameter.
Let $\ell \in \integer_+$, $q = 2^\ell$ and $d \in \integer_+$ be parameters.
Let $r  = \poly(k)$ such that $r \ell<k$. The family of hash functions,
\[ \HcSIS = \left\{\crhf_{\bvec{s}}: \binset^k \rightarrow \binset^{r\ell}
\right\}_{\bvec{s} \in \binset^{p(k)}} \]
is defined as follows.

\begin{enumerate}
  \item[-] $\crhfgen_{\cSIS}(1^k)$ samples a uniform
        $\bvec{s} \in \binset^{p(k)}$ and interprets it as a uniform matrix
        $\matA \in \integer_q^{r \times (k + \ell d)}$ and a uniformly chosen
        binary invertible matrix $\matG \in \integer_q^{d \times (k + \ell d)}$.
        This algorithm runs in polynomial-time in $k$.

  \item[-] $\crhf_{(\matA, \matG)}(\bvec{x})$: On input
        $\bvec{x} \in \binset^k$, compute the unique $\bvec{u} \in \binset^{\ell
        d}$ such that \\ $\matG \begin{bmatrix} \bvec{u} \\ \bvec{x}
        \end{bmatrix} = \vec{0} \pmod{q}$ as in
        Proposition~\ref{prop:bInv-property}, and output $\BitDecomp\left(\matA
        \begin{bmatrix} \bvec{u} \\ \bvec{x} \end{bmatrix} \pmod{q}\right)$.
\end{enumerate}

\begin{remark}\label{remark:ac-cSIStoCRHF}
We note that if $\sis$ \footnote{For a definition of $\sis$, see
Section~\ref{sec:ac-HcSIS}.} is hard, then $\HcSIS$ is collision-resistant. In
fact, a more general statement holds: if $\cSIS $ with $\vecb = \vec{0}$ is hard
on average, then $\HcSIS$ is collision-resistant. Also,  based on the above
description of $\crhf_{(\matA, \matG)}$, the vector $\bvec{z} = \begin{bmatrix}
\bvec{u} \\ \bvec{x} \end{bmatrix}$ is a solution to $\cSIS$ with input $(\matA,
\matG, \vec{0})$. Finally, note that for $\ell = 1$ finding collisions is
trivial and hence we assume that $\ell \ge 2$.
\end{remark}
\medskip

In the rest of this section, we analyze the hardness of finding collision in the
$\HcSIS$ family. We call this problem \emph{weak}-$\cSIS$ ($\weakcSIS$) because
of its similar nature to $\cSIS$. First, we consider its \emph{worst-case}
hardness and draw connections between the family $\HcSIS$, a restricted version
of the $\cSIS$ problem, and the complexity class $\wPPP$. Then, we move on to
the average-case hardness of $\weakcSIS$ and argue that it defines a canditate
for a \emph{universal} CRH function family.

\subsection{Worst-case Hardness of \texorpdfstring{$\HcSIS$}{$\HcSIS$}}
\label{sec:wc-HcSIS}

The class Polynomial Weak Pigeon Principle $\wPPP$ (a subclass of $\PPP$) is
particularly interesting for cryptography because it contains all
collision-resistant hash functions. In this part, we show that a generalized
version of the $\sis$ problem we define below, namely $\weakcSIS$, is complete
for the class $\wPPP$.
\medskip

\begin{nproblem}[\weakcSIS]
\textsc{Input:} Any key $\bvec{s} = (\matA, \matG) \in
\integer_q^{r \times (\ell d + k)} \times \integer_q^{d \times (\ell d + k)}$
that indexes a function $\crhf_{\bvec{s}} \in \HcSIS$. \\
\textsc{Output:} Two boolean vectors $\bvec{x}_1 \neq \bvec{x}_2$, such that
        $\crhf_{\bvec{s}}(\bvec{x}_1) = \crhf_{\bvec{s}}(\bvec{x}_2)$.
\end{nproblem}

The membership of $\weakcSIS$ in $\wPPP$ is straight-forward. The challenging
part of the proof is in showing its $\wPPP$-hardness, which  shares some common
ideas with the proof of Lemma~\ref{lm:hardness-cSIS}.

\begin{lemma}\label{lm:weakcSIS-inclusion}
	$\weakcSIS$ is in $\wPPP$.
\end{lemma}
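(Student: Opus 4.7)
The plan is to give a direct Karp reduction from $\weakcSIS$ to the $\collision$ problem. Since the hash function $\crhf_{(\matA,\matG)}$ already maps $k$ bits to $r\ell$ bits with $r\ell<k$ by definition of $\HcSIS$, it is intrinsically a shrinking boolean map; the whole task is to exhibit a small circuit implementing it.

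First, given an input $(\matA,\matG)$ to $\weakcSIS$, I construct a boolean circuit $\circuit{C}:\binset^k\to\binset^{r\ell}$ that, on input $\bvec{x}$, computes
\[
\circuit{C}(\bvec{x}) \;=\; \BitDecomp\!\left(\matA \begin{bmatrix}\bvec{u}\\ \bvec{x}\end{bmatrix} \pmod{q}\right),
\]
where $\bvec{u}\in\binset^{\ell d}$ is the unique binary vector with $\matG\begin{bmatrix}\bvec{u}\\ \bvec{x}\end{bmatrix}=\vec{0}\pmod q$ guaranteed by Proposition~\ref{prop:bInv-property}. That proposition is precisely what makes this construction polynomial-time: the backward-substitution procedure for binary-invertible $\matG$ is implementable by a polynomial-size circuit, and so are modular arithmetic and $\BitDecomp$. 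Hence $\circuit{C}$ is a legitimate $\poly(k)$-size circuit computable from $(\matA,\matG)$ in polynomial time.

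Second, I check that $\circuit{C}$ is a valid input to $\collision$: by the parameter requirements of $\HcSIS$ the number of outputs $r\ell$ is strictly less than the number of inputs $k$, so the shrinking condition holds. Therefore invoking the $\collision$ oracle on $\circuit{C}$ returns two distinct vectors $\bvec{x}_1\neq \bvec{x}_2$ with $\circuit{C}(\bvec{x}_1)=\circuit{C}(\bvec{x}_2)$, which by the definition of $\crhf_{(\matA,\matG)}$ is precisely a $\weakcSIS$ collision. I then output this pair unchanged as the solution.

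There is essentially no obstacle here beyond correctly implementing the uniquely-determined $\bvec{u}$ as a subcircuit, and I can simply cite Proposition~\ref{prop:bInv-property} for that (the recursion in \eqref{eq:backwardSubstitution} is evaluable gate-by-gate in topological order). The only sanity check is that when $q=2^\ell$, the vector $\bvec{u}$ really is uniquely determined by $\bvec{x}$ so that $\crhf_{(\matA,\matG)}$ is a well-defined function and collisions in $\circuit{C}$ truly correspond to collisions in $\crhf_{(\matA,\matG)}$; this too is explicitly guaranteed by Proposition~\ref{prop:bInv-property}. This completes the Karp reduction and hence membership of $\weakcSIS$ in $\wPPP$.
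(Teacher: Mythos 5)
Your proof is correct and follows essentially the same route as the paper: build the circuit $\circuit{C}$ computing $\crhf_{(\matA,\matG)}$, note that $r\ell < k$ makes it a valid $\collision$ input, and return the collision pair unchanged. You are somewhat more explicit than the paper about why $\circuit{C}$ is a polynomial-size circuit (by invoking Proposition~\ref{prop:bInv-property} for the backward-substitution subcircuit), which the paper leaves implicit, but the argument is the same.
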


\begin{proof}
	We show a Karp-reduction from $\weakcSIS$ to $\collision$. Let $\bvec{s}$ be
	an input to $\weakcSIS$ and let $\circuit{C}$ be the $\poly(|\bvec{s}|)$ size
	circuit that on input $\bvec{x}$ outputs $\crhf_{\bvec{s}}(\bvec{x})$. Because
	$r \ell<k$, the $\circuit{C}$ is a valid input for $\collision$. Let
	$\bvec{x}_1,\bvec{x}_2$ be the two boolean vectors that $\collision$ outputs.
	Thus, $\circuit{C}(\bvec{x}_1) = \circuit{C}(\bvec{x}_2)$, which directly
	implies that $(\bvec{x}_1, \bvec{x}_2)$ is a solution of $\weakcSIS$ with
	input $\bvec{s}$.
\end{proof}

Now, we move to the more challenging hardness proof. Even though some of the
proof techniques are reminiscent of the ones in Section~\ref{sec:cSIS}, we need
new ideas, mainly because of the homogeneity of the constraints in $\weakcSIS$.
Before presenting the proof, we define a special version of $\collision$
problem. Then, we state and prove an easy lemma. This lemma has appeared in
various previous works (see Lemma 2.2 in \cite{Jerabek16}), but it is useful for
us to state and prove it using our notation.
\medskip

\begin{nproblem}[\collision_{p(\kappa)}]
\textsc{Input:} A circuit $\circuit{C}$ with $\kappa$ inputs and $p(\kappa)$ outputs. \\
\textsc{Output:} Two boolean vectors $\bvec{x}_1 \neq \bvec{x}_2$, such that
$\circuit{C}(\bvec{x}_1) = \circuit{C}(\bvec{x}_2)$.
\end{nproblem}
\medskip

Then, the following lemma is easy to check:

\begin{lemma}\label{lm:shrinkByOneBit}
  The $\collision$ problem is Karp-reducible to $\collision_{\kappa-2}$.
\end{lemma}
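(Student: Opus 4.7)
The plan is to give a Karp-reduction that, on input a circuit $\circuit{C}$ with $n$ inputs and $m$ outputs satisfying $m < n$, constructs a circuit $\circuit{C}'$ with exactly $\kappa$ inputs and $\kappa - 2$ outputs such that any collision for $\circuit{C}'$ trivially yields a collision for $\circuit{C}$. I would split the construction into two cases according to the shrinkage $k = n - m \ge 1$, and in each case I would build $\circuit{C}'$ in polynomial-time so that the reduction runs in polynomial-time in $|\circuit{C}|$.

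First I would dispose of the easy case $m \le n - 2$, i.e.\ $k \ge 2$. Here I would simply pad the output of $\circuit{C}$ by $n - 2 - m$ constant zero bits, defining $\circuit{C}'(\bvec{x}) = (\circuit{C}(\bvec{x}), 0, \dots, 0)$ with $\kappa = n$ inputs and $\kappa - 2 = n - 2$ outputs. Since the padded bits are identical for every input, any collision $\bvec{x}_1 \neq \bvec{x}_2$ for $\circuit{C}'$ is immediately a collision for $\circuit{C}$, and $\circuit{C}'$ is a valid instance of $\collision_{\kappa - 2}$.

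The remaining, genuinely delicate case is $m = n - 1$, where padding alone does not shrink by two bits and where a naive one-bit extension of the domain (like $(\bvec{x}, b) \mapsto \circuit{C}(\bvec{x})$) would produce only trivial collisions of the form $(\bvec{x}, 0), (\bvec{x}, 1)$. To sidestep this, I would apply $\circuit{C}$ to two disjoint copies of its input: define $\circuit{C}' : \binset^{2n} \to \binset^{2(n-1)}$ by
\[
  \circuit{C}'(\bvec{x}_1, \bvec{x}_2) \;=\; \bigl(\circuit{C}(\bvec{x}_1),\, \circuit{C}(\bvec{x}_2)\bigr),
\]
so that $\kappa = 2n$ and the output length equals $2n - 2 = \kappa - 2$, matching the format of $\collision_{\kappa-2}$. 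Given any collision $(\bvec{x}_1,\bvec{x}_2) \neq (\bvec{x}_1', \bvec{x}_2')$ for $\circuit{C}'$, we have $\circuit{C}(\bvec{x}_i) = \circuit{C}(\bvec{x}_i')$ for $i = 1,2$; since the two input pairs differ, at least one index $i$ satisfies $\bvec{x}_i \neq \bvec{x}_i'$, and the reduction's output map extracts this index and returns $(\bvec{x}_i, \bvec{x}_i')$ as a collision for $\circuit{C}$.

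No step poses a real obstacle: the construction of $\circuit{C}'$ is syntactic and clearly polynomial-time, and the two cases together cover all valid $\collision$ inputs. The only subtlety worth flagging is the doubling trick in the $k=1$ case, which is essential precisely because the homogeneity of the problem forbids the simpler expedient of adding a free input bit; the disjoint-copies construction ensures that at least one coordinate of the colliding pair must witness a non-trivial collision of $\circuit{C}$.
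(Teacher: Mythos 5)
Your proof is correct, but it takes a genuinely different route in the hard case. Both you and the paper first dispose of the slack case by padding: the paper pads every input circuit up to exactly $\kappa-1$ outputs and then solves the single bottleneck case $m = n-1$, whereas you pad only when $m \le n-2$ and treat $m = n-1$ separately. The difference lies in how you handle that bottleneck case. The paper composes $\circuit{C}$ with itself, defining $\circuit{C}'(\bvec{x}, b) = \circuit{C}(\circuit{C}(\bvec{x}), b)$ on $n+1$ inputs; a collision of $\circuit{C}'$ then yields either a collision of $\circuit{C}$ at the inner layer (if $(\bvec{x}_1,b_1)\neq(\bvec{x}_2,b_2)$ but the intermediate pairs $(\circuit{C}(\bvec{x}_i),b_i)$ differ) or at the outer layer (if the intermediate pairs coincide). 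You instead use a parallel, input-doubling construction $\circuit{C}'(\bvec{x}_1,\bvec{x}_2) = (\circuit{C}(\bvec{x}_1),\circuit{C}(\bvec{x}_2))$ on $2n$ inputs; a collision must disagree in at least one coordinate, and that coordinate gives a collision of $\circuit{C}$ directly. Both constructions are size $O(|\circuit{C}|)$ and clearly polynomial-time, so for the purposes of this lemma they are equally good. The trade-off is that self-composition keeps the instance size essentially unchanged ($\kappa = n+1$, which gives the tighter recursion the paper alludes to when extending to shrinkage by $p(\kappa)$ bits by iterating), whereas your parallel construction doubles the input length but avoids nesting one evaluation of $\circuit{C}$ inside another, making the case analysis in the backward direction arguably cleaner: you never need to reason about whether the inner collision or the outer collision is the nontrivial one, only about which coordinate moved.
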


\begin{proof}
  First, we claim that it suffices to show that $\collision_{\kappa-1}$ is
  Karp-reducible to $\collision_{\kappa-2}$. This is because every circuit
  $\circuit{C}$ with $\kappa$ inputs and $m$ outputs can be transformed into a
  circuit $\circuit{C}'$ with $\kappa$ inputs and $\kappa-1$ outputs by padding
  the output with zeros. We note that every collision of $\circuit{C}'$ is a
  collision of $\circuit{C}$ .

  Now, we show that $\collision_{\kappa-1}$ is Karp-reducible to
  $\collision_{\kappa-2}$. Let $\circuit{C}$ be a circuit with $n$ inputs and
  $n-1$ outputs, we create a new circuit $\circuit{C}'$ with $n+1$ inputs and
  $n-1$ outputs such that $\circuit{C}'(\bvec{x}, b) =
  \circuit{C}(\circuit{C}(\bvec{x}),b)$, where $\bvec{x} \in \binset^n$ and $b
  \in \binset$. Then, $\collision_{\kappa-2}$ with input $\circuit{C}'$ outputs
  $\bvec{y}_1 = (\bvec{x}_1,b_1)$ and $\bvec{y}_2 =  (\bvec{x}_2, b_2)$ such
  that $\bvec{y}_1 \neq \bvec{y}_2$ and $\circuit{C}'(\bvec{y}_1) = \circuit{C}'
  (\bvec{y}_2)$. We consider the following possible cases:
  \begin{Enumerate}
    \item[-] $b_1 \neq b_2$, then $\bvec{y}_1$ and $\bvec{y}_2$ form a collision
      for $\circuit{C}$.
    \item[-] $b_1 = b_2$, then $\bvec{x}_1 \neq \bvec{x}_2$ and one of the
      following holds: $\circuit{C}(\bvec{x}_1) \neq \circuit{C}(\bvec{x}_2)$ or
      $\circuit{C}(\bvec{x}_1) = \circuit{C}(\bvec{x}_2)$. In the first case,
      $\bvec{y}_1$ and $\bvec{y}_2$ form a collision for $\circuit{C}$.
      Otherwise, $\bvec{x}_1$ and $\bvec{x}_2$ form a collision for
      $\circuit{C}$.
  \end{Enumerate}
\end{proof}

The above lemma naturally generalizes to any polynomial shrinkage $p(k)$ of the
input by repeating the same construction. For our purposes shrinking the input
by two bits is enough.

\begin{lemma}\label{lm:weakcSIS-hardness}
	$\weakcSIS$ is $\wPPP$-hard.
\end{lemma}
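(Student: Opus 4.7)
The plan is to reduce $\collision$ to $\weakcSIS$. Using Lemma~\ref{lm:shrinkByOneBit}, I may assume the input is a circuit $\circuit{C}: \binset^{\kappa} \to \binset^{\kappa-2}$, and by replacing $\circuit{C}$ with $\circuit{C}(\cdot) \gxor \circuit{C}(\vec{0})$ (which has the same set of collisions) I may further assume that $\circuit{C}$ is \emph{$0$-preserving}, i.e.\ $\circuit{C}(\vec{0}) = \vec{0}$. The high-level idea is then to mimic the $\cSIS$-hardness construction of Lemma~\ref{lm:hardness-cSIS} with $q=4$ and $\ell=2$, but restricted to the gate types whose encoding in Claim~\ref{claim:nand} is already homogeneous. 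Inspecting Table~\ref{tbl:equationsForHardnesscSIS}, these are exactly $\gand$ and $\gxor$, for which $b_j = 0$. If $\circuit{C}$ only used $\gand$ and $\gxor$ gates, the construction of Lemma~\ref{lm:hardness-cSIS} would directly produce a $\weakcSIS$ instance with $\vec{b} = \vec{0}$.

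The crux of the proof is therefore to transform $\circuit{C}$ into an equivalent polynomial-size circuit $\circuit{C}'$ that uses only $\gand$ and $\gxor$ gates, leveraging the $0$-preserving assumption. I will do this via a standard differential trick: for every wire $w$ of $\circuit{C}$, precompute at reduction time the constant $w_0 := w(\vec{0}) \in \binset$, and let $\circuit{C}'$ carry instead the differential $d_w := w \gxor w_0$. For each gate $g$ of $\circuit{C}$ with inputs $a,b$ and output $c$, the new value $d_c$, viewed as a two-variable Boolean function in $(d_a,d_b)$, equals $g(a_0 \gxor d_a,\, b_0 \gxor d_b) \gxor g(a_0,b_0)$, which vanishes at $(0,0)$ and therefore, by its algebraic normal form, can be implemented by $O(1)$ $\gand$ and $\gxor$ gates with the bits $a_0,b_0$ folded into the coefficients. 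This yields $\circuit{C}'$ of size $O(|\circuit{C}|)$; moreover, since $\circuit{C}(\vec{0}) = \vec{0}$, the differential at each output wire equals the corresponding bit of $\circuit{C}(\bvec{x})$, so $\circuit{C}'$ computes $\circuit{C}$.

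I then feed $\circuit{C}'$ into the construction of Lemma~\ref{lm:hardness-cSIS}, producing matrices $\matG \in \Z_4^{d \times m}$ and $\matA \in \Z_4^{n \times m}$ with $\vec{b} = \vec{0}$, where $d$ is the number of gates of $\circuit{C}'$, $m = 2d + \kappa$, and $\matA$ is built from the columns corresponding to the output wires of $\circuit{C}'$ so that $\BitDecomp(\matA \bvec{z})$ recovers the $\kappa-2$ output bits of $\circuit{C}'$. The matrix $\matG$ is binary invertible by the same argument as Claim~\ref{claim:ithGmatrixIsBinaryInvertible}, and the strict inequality $m > (n+d)\ell$ required by $\weakcSIS$ reduces to $\kappa > \kappa-2$. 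A solution of the resulting $\weakcSIS$ instance is a pair of distinct binary vectors $\bvec{z}_1,\bvec{z}_2$ with $\matG \bvec{z}_i = \vec{0}$ and $\matA(\bvec{z}_1 - \bvec{z}_2) = \vec{0} \pmod 4$; exactly as in the output analysis after Claim~\ref{claim:correctnessofGmatrix}, their input parts $\bvec{x}_1,\bvec{x}_2$ must differ and satisfy $\circuit{C}'(\bvec{x}_1) = \circuit{C}'(\bvec{x}_2)$, hence they form a collision of $\circuit{C}$.

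The main obstacle, which the differential step resolves, is that the $\cSIS$-style gate equations for $\gnand$ and $\gnor$ have $b_j \in \{2,3\}$ and so cannot be used when $\vec{b}$ is forced to be $\vec{0}$. The combination of the $\circuit{C}(\vec{0})$-shift preprocessing (which yields a $0$-preserving function) with the differential gadget (which realises any such function by a polynomial-size $\gand/\gxor$-only circuit) is exactly what removes this obstacle; the rest of the proof, including the verification that $\matG$ is binary invertible and that $\matA$ correctly extracts the output bits via $\BitDecomp$, is a direct adaptation of the argument of Lemma~\ref{lm:hardness-cSIS}.
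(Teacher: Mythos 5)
Your proof is correct in its central idea, but it takes a genuinely different route from the paper, and there is a dimension-count slip worth flagging.

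\paragraph{Comparison with the paper's route.} The paper rewrites $\circuit{C}$ over the basis $\{\gor, \gxor, \gone\}$, replaces every constant-$\gone$ gate by the node $z = x_1 \gor \cdots \gor x_n$, obtaining a $\{\gor, \gxor\}$-only circuit $\circuit{C}'$ that agrees with $\circuit{C}$ on all nonzero inputs but is forced to send $\vec{0} \mapsto \vec{0}$ — a potentially spurious collision — which the paper then blocks by appending $z$ as an extra output bit in $\circuit{C}''$. You instead normalize the function itself: precompose with an XOR-shift so that $\circuit{C}(\vec{0}) = \vec{0}$ (which preserves collisions exactly), and replace each wire $w$ by its differential $d_w = w \gxor w(\vec{0})$; each gate's differential $D_g(d_a, d_b) = g(a_0 \gxor d_a, b_0 \gxor d_b) \gxor g(a_0,b_0)$ vanishes at $(0,0)$, so its ANF is $c_1 d_a \gxor c_2 d_b \gxor c_3 (d_a \gand d_b)$ with no constant term and is realizable by $O(1)$ $\gand/\gxor$ gates, with $c_1,c_2,c_3$ determined at reduction time from $g$, $a_0$, $b_0$. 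Since the shifted circuit and $\circuit{C}$ have the same collision set, no spurious collision at $\vec{0}$ arises, and the paper's big-OR gadget and extra output bit are both avoided. Either homogeneous gate pair — $\{\gor,\gxor\}$ in the paper, $\{\gand,\gxor\}$ in yours — works, as both have $b_j = 0$ in Table~\ref{tbl:equationsForHardnesscSIS}. Your route is cleaner in eliminating the special treatment of $\vec{0}$, whereas the paper's avoids precomputing all wire values (a trivial cost).

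\paragraph{The dimension-count slip.} You assert $m = 2d + \kappa$ and that the $\weakcSIS$ inequality becomes $\kappa > \kappa - 2$. With $n = \kappa - 2$, $\ell = 2$ and your stated $m$, the condition $m > \ell(n + d)$ reads $2d + \kappa > 2\kappa - 4 + 2d$, i.e.\ $\kappa < 4$, which fails for all but trivially small $\kappa$. The construction of Lemma~\ref{lm:hardness-cSIS} in fact produces $m = 2d + \kappa + (\text{number of auxiliary input columns})$: after the $2d$ gadget columns, there are $\kappa$ input-value columns \emph{and} a block of auxiliary input columns paired with the block $\matA_3 = 2\matI$ of $\matA$. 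These auxiliary columns are not decorative. First, they are what makes $k = m - \ell d$ large enough that $r\ell < k$ holds, i.e.\ $\HcSIS$ is shrinking. Second, and more importantly, $\matA_3 = 2\matI$ is exactly what forces the auxiliary parts $\bvec{w}_1 = \bvec{w}_2$ in any collision, so that by the uniqueness part of Proposition~\ref{prop:bInv-property} the distinctness of the hash inputs $\bvec{z}_1 \neq \bvec{z}_2$ transfers to $\bvec{x}_1 \neq \bvec{x}_2$. You cite "the construction of Lemma~\ref{lm:hardness-cSIS}" and "the output analysis after Claim~\ref{claim:correctnessofGmatrix}", so you likely intended to keep these components, but the formula $m = 2d + \kappa$ and the derived inequality should be corrected (e.g.\ $m = 2d + \kappa + (\kappa-2)$ and $\matA_3 = 2\matI_{\kappa-2}$ give $2(\kappa-2) < 2\kappa - 2$, which always holds).
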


\begin{proof}
	From Lemma~\ref{lm:shrinkByOneBit}, it  suffices to show a Karp-reduction from
	$\collision_{\kappa - 2}$ to $\weakcSIS$.

	\noindent Let $\circuit{C}$ be an input of $\collision_{\kappa-2}$ with $n$
	inputs, $r = n-2$ outputs and $d$ gates. Also, let $q = 4$. We can generalize
	to any $q = 2^\ell$ for $\ell>1$ similarly to the proof of
	Lemma~\ref{lm:hardness-cSIS}.

	If we set $\bvec{u} \in \binset^{2 d}$ to be the unique binary vector such
	that $\matG \begin{bmatrix} \bvec{u} \\ \bvec{x} \end{bmatrix}   = 2\cdot
	\vec{1} \pmod{4}$ in the construction of $\crhf_{\bvec{s}}(\bvec{x})$, then
	the proof follows exactly the same steps as the one of
	Lemma~\ref{lm:hardness-cSIS}. However, in this case we can prove a stronger
	statement, where $ \begin{bmatrix} \bvec{u} \\ \bvec{x} \end{bmatrix}$
	belongs to the lattice $\lamperp(\matG)$. First, let us restate the part of
	the Claim~\ref{claim:nand} that we need for our proof.

	\begin{claim} \label{claim:XOR-OR}
		The following equivalences hold:
		\begin{Itemize}
			\item $x \gxor y = z \iff \exists w \in \{0,1\} \text{ s.t. } -x -
			y + w +  2z\equiv 0 \pmod 4$
			\item $x \gor y = z \iff \exists w \in \{0,1\} \text{ s.t. } x + y
			+ w + 2z\equiv 0 \pmod 4$
		\end{Itemize}
		\end{claim}

	If we could implement a circuit using $\{\gxor, \gor\}$, then combining
	Claim~\ref{claim:XOR-OR} and the proof techniques of
	Lemma~\ref{lm:hardness-cSIS}, we would have the desired result. Even though
	this is not possible, we note that we can implement a circuit using $\{\gxor,
	\gor, 1\}$, since the implementation of $\gnor$ suffices and $(x \gor y) \gxor
	1 = x \gnor y$. Inspired by the above observations, our approach in showing
	the hardness of $\weakcSIS$ is to construct a valid input $\bvec{s} = (\matA,
	\matG)$	for $\weakcSIS$ such that $\matG$ encodes a circuit consisting of only
	$\mathsf{XOR}$ and $\mathsf{OR}$ gates and every solution of this instance of
	$\weakcSIS$ gives a solution for $\collision_{\kappa-2}$ with input
	$\circuit{C}$.

  First, we observe that the fact that $r = n-2$  guarantees the existence of at
  least $6\floor{k/4}$ pairs of $\bvec{x}_1$ and $\bvec{x}_2$ such that
  $\bvec{x}_1 \neq \bvec{x}_2$ and $\crhf_{\bvec{s}}(\bvec{x}_1) =
  \crhf_{\bvec{s}}(\bvec{x}_2)$. In particular, there exist collision pairs such
  that $\bvec{x}_1 \neq \bvec{0}$ and $\bvec{x}_2 \neq \bvec{0}$.

  As we explained, without loss of generality, we assume that $\circuit{C}$
  consists of only $\{\gor, \gxor, \gone\}$ gates. Before starting with the
  reduction we change our circuit $\circuit{C}$ to a circuit $\circuit{C}'$ such
  that any collision of $\circuit{C}'$ corresponds to a collision of
  $\circuit{C}$. We construct a circuit $\circuit{C}'$ such that it first
  computes the $\mathsf{OR}$ of all the inputs, $z = x_1 \gor x_2 \gor \cdots
  \gor x_n$. Then, in the circuit's graph we substitute the outgoing edges of
  nodes with label ``$\gone$'' with outgoing edges from the node $z$, that we
  introduced. Hence, $\circuit{C}$' only contains $\{\gor, \gxor\}$ gates. The
  next claim follows by construction of $\circuit{C}'$.

  \begin{claim} \label{cl:change1byOR}
    For every $\bvec{x} \in \binset^k \setminus{\bvec{0}}$, it holds that
    $\circuit{C}'(\bvec{x}) = \circuit{C}(\bvec{x})$ and $\circuit{C}'(\bvec{0})
    = \bvec{0}$.
  \end{claim}

  \begin{proof}
    Let $\bvec{x} \in \binset^k \setminus{\bvec{0}}$, then for this input the
    value of the node $z$ that we introduced is certainly equal to $1$. Since we
    replaced all the uses of the constant gate $\gone$ with the output of $z$
    and $z$ outputs one, the results of the circuit will be exactly the same.
    Also, it is easy  to observe that any circuit with only $\{\gor, \gxor\}$
    gates, on input $\bvec{0}$ outputs always $\bvec{0}$, and hence
    $\circuit{C}'(0)= \bvec{0}$.
  \end{proof}

  Let $\matG \in \Z_q^{d \times 2(n + r)}$ be the binary invertible matrix that
  encodes $\circuit{C}'$ (as described in Lemma \ref{lm:hardness-cSIS}) and
  $\matA \in \Z_q^{r \times 2(n + r)}$ be as in Lemma~\ref{lm:hardness-cSIS}.
  Observe that since we have only used the gates $\{\gor, \gxor\}$ in
  $\circuit{C}'$ the vector $\vecb$ in the reduction described in Lemma
  \ref{lm:hardness-cSIS} is always equal to $\veczero$. Thus, $(\matA, \matG)$
  is a valid input for the problem $\weakcSIS$.

  Let $(\bvec{x}_1, \bvec{x}_2)$ be a solution of $\weakcSIS$ with $\bvec{x}_1
  \neq \bvec{0}$ and $\bvec{x}_2 \neq \bvec{0}$, then the pair $\bvec{x}_1$ and
  $\bvec{x}_2$ is a collision for $\circuit{C}$. But, we cannot guarantee that
  $\bvec{x}_1 \neq \bvec{0}$ and $\bvec{x}_2 \neq \bvec{0}$ when we use
  $\circuit{C}'$ as input for $\weakcSIS$. Thus, we need a modification of
  $\circuit{C}'$ that guarantees that $\weakcSIS$ will not return the
  $\bvec{0}$-vector as part of a solution. To achieve this, we construct a new
  circuit $\circuit{C}''$ that is exactly the same as $\circuit{C}'$ with one
  more output variable set to be equal to $z = x_1 \gor x_2 \gor \dots \gor
  x_n$. This last output of $\circuit{C}''$ is equal to $1$ if and only if
  $\bvec{x} \neq \bvec{0}$. Observe that the output of $\circuit{C}''$ consists
  of $n - 1$ bits, and hence $\circuit{C}''$ still compresses its domain by one
  bit. Let $(\matA', \matG')$ be the matrices corresponding to $\circuit{C}''$
  according to the construction of proof of Lemma~\ref{lm:hardness-cSIS}, then
  $\bvec{s}' = (\matA', \matG')$ is a valid input for $\weakcSIS$.

  We conclude the proof by observing that every output of $\weakcSIS$ with input
  $\bvec{s}'$ gives a collision $\circuit{C}''(\bvec{x}) =
  \circuit{C}''(\bvec{y})$, with $\bvec{x} \neq \bvec{y}$. If $\bvec{x} \neq
  \bvec{0}$ and $\bvec{y} \neq \bvec{0}$, then it follows from Claim
  \ref{cl:change1byOR} and the construction of $\circuit{C}''$ that
  $\circuit{C}(\bvec{x}) = \circuit{C}(\bvec{y})$, and hence the pair
  $\bvec{x}$, $\bvec{y}$ is a solution to our initial $\collision$ instance.
  Additionally, there is no collision of the form $\bvec{0}$ and $\bvec{x}$.
  Assume that there was, then since the last bit of $\circuit{C}''(\bvec{0})$ is
  $0$, it must be that $\circuit{C}''(\bvec{x})$ is also $0$. However, by
  construction of $\circuit{C}''$, $\bvec{0}$ is the unique binary vector for
  which the last bit of the output is $0$, so it must be that $\bvec{x} = 0$
  which is a contradiction.  Therefore, $\bvec{x} \neq \bvec{0}$ and $\bvec{y}
  \neq \bvec{0}$, and the lemma follows.
\end{proof}

By combining Lemma~\ref{lm:weakcSIS-inclusion} and
Lemma~\ref{lm:weakcSIS-hardness}, we get the following theorem.

\begin{theorem}
  The $\weakcSIS$ problem is $\wPPP$-complete.
\end{theorem}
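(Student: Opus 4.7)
The plan is to establish $\wPPP$-completeness by the standard two-part structure: membership of $\weakcSIS$ in $\wPPP$, and $\wPPP$-hardness of $\weakcSIS$. Both components are already isolated as Lemma~\ref{lm:weakcSIS-inclusion} and Lemma~\ref{lm:weakcSIS-hardness}, so the theorem itself follows by direct combination, but I sketch below how I would attack each half from scratch.

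For the membership direction I would give a Karp reduction from $\weakcSIS$ to $\collision$. Given an input $\bvec{s} = (\matA, \matG)$, the function $\crhf_{\bvec{s}}$ has a polynomial-size circuit description: on input $\bvec{x} \in \binset^k$, one uses the backward-substitution circuit guaranteed by Proposition~\ref{prop:bInv-property} (since $\matG$ is binary invertible) to recover the unique $\bvec{u} \in \binset^{\ell d}$ with $\matG[\bvec{u}\,;\bvec{x}] = \veczero \pmod{q}$, then outputs $\BitDecomp(\matA[\bvec{u}\,;\bvec{x}] \pmod{q})$. Because the parameters enforce $r\ell < k$, this circuit strictly shrinks its input, so it is a valid $\collision$ instance, and any collision returned is by construction a $\weakcSIS$ solution.

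For the hardness direction, which is the substantive half, I would reduce $\collision$ to $\weakcSIS$. By Lemma~\ref{lm:shrinkByOneBit} I can assume the $\collision$ instance is a circuit $\circuit{C}$ with $n$ inputs and $r = n-2$ outputs. The main obstacle is the \emph{homogeneity} of $\weakcSIS$: unlike $\cSIS$, one cannot use the nonzero vector $\vecb$ of Lemma~\ref{lm:hardness-cSIS} to encode gates like $\gnand$ whose modular equation has a constant term. The plan is therefore to rewrite $\circuit{C}$ using only $\{\gxor, \gor\}$ gates, exploiting Claim~\ref{claim:XOR-OR} which encodes both gates with $\vecb = \veczero \pmod 4$. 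Since $\{\gxor, \gor, \gone\}$ is functionally complete and the constant $\gone$ can be simulated by a fresh $\gor$-node computing $z = x_1 \gor \dots \gor x_n$, one can build an equivalent circuit $\circuit{C}'$ over $\{\gxor, \gor\}$ whose behavior agrees with $\circuit{C}$ except at the all-zero input (which any such circuit sends to $\veczero$).

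The final subtlety is that $\weakcSIS$ might return a ``trivial'' collision involving $\bvec{0}$. To rule this out, I augment $\circuit{C}'$ into $\circuit{C}''$ by appending one extra output bit computing $z = x_1 \gor \dots \gor x_n$; this distinguishes $\bvec{0}$ from every nonzero input, while still compressing from $n$ bits to $n-1$. Translating $\circuit{C}''$ into $(\matA', \matG')$ via the matrix construction of Lemma~\ref{lm:hardness-cSIS} (now with $\vecb = \veczero$) yields a valid $\weakcSIS$ input, and the case analysis shows every collision lifts to a collision of the original $\circuit{C}$. Combining the two lemmas gives the theorem.
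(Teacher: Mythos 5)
Your proposal follows the paper's proof essentially step for step: membership via a Karp reduction to $\collision$ using Proposition~\ref{prop:bInv-property}, and hardness via Lemma~\ref{lm:shrinkByOneBit}, rewriting $\circuit{C}$ over $\{\gxor,\gor\}$ with the $\gone$-gate simulated by an $\gor$ of all inputs (giving $\circuit{C}'$), appending a disambiguating output bit (giving $\circuit{C}''$), and then invoking the $\cSIS$ matrix construction with $\vecb = \veczero$. This is the same decomposition and the same key ideas as the paper's Lemmas~\ref{lm:weakcSIS-inclusion} and~\ref{lm:weakcSIS-hardness}.
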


\subsection{Average-case Hardness of \texorpdfstring{$\HcSIS$}{$\HcSIS$}}
\label{sec:ac-HcSIS}

In the previous section, we showed that $\weakcSIS$ defines a \emph{worst-case}
collision resistant hash function, similar to the result of \cite{Selman1992}
for one-way functions. However, the definition of collision-resistance in
cryptography requires a stronger property. More specifically, it says that given
a random chosen function in the family, it is hard to find a collision for this
function. In this section, we investigate the average-case hardness of $\HcSIS$
in the search of a construction for a candidate of a both \textit{natural} and
\emph{universal collision-resistant hash function}.

We have briefly mentioned the connection between $\HcSIS$ and $\sis$. Now, we
describe it in more detail. We start by defining the $\sis$ problem.

\medskip

\begin{nproblem}[\sis_{q,n,m,\beta, p}]
\textsc{Input:} A uniformly random matrix $\matA \in \Z_q^{n \times m}$, where
$m > 2n \log(q)$.\\
\textsc{Output:} A vector $\vecr \in \Z_q^m$ such that $\norm{\vecr}_p \leq
\beta$ and $\matA \vecr = \vec{0} \pmod{q}$.
\end{nproblem}
Whenever the parameters are clear from the context, we drop them from the
subscripts for ease of notation.
\medskip

It is easy to see that if $\sis$ is hard on the average, then $\HcSIS$ is
collision-resistant. Let $\matA \in \Z_q^{n \times m}$ be uniformly random and
$\bvec{x}_1$ and $\bvec{x}_2$ be a collision of $\crhf_{(\matA, \vec{0})}$.
Then, $\bvec{x}_1 - \bvec{x}_2$ is a solution of $\sis$ with input $\matA$. This
remark, combined with the known reduction from $\tilde{O}(n) \mdash \sivp$ to
$\sis_{\tilde{O}(n),n,\Omega(n \polylog(n)), \sqrt{n}, 2}$  \cite{MicRegev07},
directly implies a reduction from $\tilde{O}(n) \mdash \sivp$ to $\weakcSIS$.

\begin{corollary}
	$\tilde{O}(n) \mdash \sivp$ is reducible to $\weakcSIS$, and thus is contained
	in $\wPPP$.
\end{corollary}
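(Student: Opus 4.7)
The plan is to compose two reductions. The first is the Micciancio--Regev reduction \cite{MicRegev07} from $\tilde{O}(n) \mdash \sivp$ to $\sis_{q,n,m,\beta,2}$ with parameters $q = \tilde{O}(n)$, $m = \Omega(n \polylog n)$, and $\beta = \sqrt{n}$ (to be relaxed to $\tilde{O}(\sqrt n)$). The second is a direct reduction from $\sis$ to $\weakcSIS$, which is essentially spelled out in the paragraph preceding the corollary. Since Lemma~\ref{lm:weakcSIS-inclusion} places $\weakcSIS \in \wPPP$, composing the two reductions yields the containment of $\tilde{O}(n) \mdash \sivp$ in $\wPPP$.

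For the second reduction, given an $\sis$ instance $\matA' \in \Z_q^{n \times m}$, I would take $\matA \getsd \matA'$ and pick any binary invertible $\matG \in \Z_q^{d \times m}$ of minimal nontrivial size (e.g., $d = 1$), padding with zero columns only if necessary so that the dimension constraints of $\weakcSIS$ ($m = \ell d + k$ and $r \ell < k$ with $r = n$) are satisfied; the $\sis$ parameter regime $m = \Omega(n \polylog n)$ makes this immediate. Any $\weakcSIS$ solution $\bvec{x}_1 \neq \bvec{x}_2$ produces binary vectors $\bvec{z}_i = \begin{bmatrix} \bvec{u}_i \\ \bvec{x}_i \end{bmatrix}$ satisfying $\matG \bvec{z}_i = \vec{0} \pmod q$ and collision $\matA \bvec{z}_1 = \matA \bvec{z}_2 \pmod q$. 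The difference $\vecr \getsd \bvec{z}_1 - \bvec{z}_2 \in \{-1, 0, 1\}^m$ is nonzero (the $\bvec{x}$-components differ), has $\|\vecr\|_2 \leq \sqrt{m}$, and satisfies $\matA' \vecr = \vec{0} \pmod q$, which is a valid $\sis$ solution provided we allow $\beta = \sqrt{m} = \tilde{O}(\sqrt n)$; this polylogarithmic slack is absorbed into the $\tilde{O}(n)$ approximation factor of $\sivp$ by the standard tightness of the Micciancio--Regev reduction.

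I do not expect a serious obstacle. The conceptual content is the single-line observation already recorded above the corollary, namely that a collision in $\crhf_{(\matA,\matG)}$ yields a short $\{-1,0,1\}$-vector in $\lamperp_q(\matA)$. The only care required is bookkeeping: verifying that the $\weakcSIS$ dimension conditions can be met from the $\sis$ parameters, that the norm bound $\sqrt m$ on the derived vector is compatible with $\beta$ up to polylog factors, and that the final $\sivp$ approximation factor remains $\tilde O(n)$. Each of these is routine given the parameter regime of \cite{MicRegev07}.
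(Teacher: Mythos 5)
Your proposal is correct and follows the same route as the paper: compose the Micciancio--Regev reduction from $\tilde{O}(n)\mdash\sivp$ to $\sis$ with the one-line observation (stated just above the corollary) that any collision in $\crhf_{(\matA,\matG)}$ yields a nonzero vector in $\{-1,0,1\}^m \cap \lamperp_q(\matA)$. You are, if anything, a bit more careful than the paper's informal paragraph on two minor points: you track that the collision-derived vector has $\ell_2$-norm $\sqrt m = \tilde{O}(\sqrt n)$ rather than the $\sqrt n$ appearing in the paper's cited $\sis$ parameters (which is harmlessly absorbed into the $\tilde{O}(n)$ approximation factor), and you verify explicitly that a genuinely binary invertible $\matG$ meeting the $\weakcSIS$ dimension constraints $m=\ell d + k$, $n\ell < k$ can always be chosen in the $\sis$ parameter regime $m > 2n\log q$.
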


Finally we note that, since $\weakcSIS$ is $\wPPP$-complete and $\wPPP$ contains
all collision-resistant hash functions, the following statement is also true.

\begin{corollary} \label{cor:ac-HcSIS}
  If there exists a family of collision-resistant hash functions $\calH$, then
  there exists a distribution over keys $\bvec{s} \in \binset^{p(k)}$ for
  $\HcSIS$, where $\crhfgen$ draws key from this distribution and $\HcSIS$ is
  collision-resistant.
\end{corollary}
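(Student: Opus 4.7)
The plan is to exploit the $\wPPP$-completeness of $\weakcSIS$ (established in Lemma~\ref{lm:weakcSIS-hardness} and its membership counterpart Lemma~\ref{lm:weakcSIS-inclusion}) as a \emph{distribution-preserving} Karp reduction from any $\collision$-style instance to a $\weakcSIS$ instance, i.e.\ to a key for $\HcSIS$. Given an arbitrary CRH family $\calH = \{\crhf_{\bvec{t}}\}_{\bvec{t} \in \binset^{q(k)}}$ with associated sampler $\algo{Gen}_{\calH}$, I first turn each key $\bvec{t}$ into a polynomial-size boolean circuit $\circuit{C}_{\bvec{t}}$ computing $\crhf_{\bvec{t}}$. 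Since $\crhf_{\bvec{t}}$ is shrinking, after an easy padding (and, if necessary, the reduction of Lemma~\ref{lm:shrinkByOneBit}) one obtains a valid $\collision$ instance whose solutions are exactly the collisions of $\crhf_{\bvec{t}}$.

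Next, I would feed $\circuit{C}_{\bvec{t}}$ into the Karp reduction of Lemma~\ref{lm:weakcSIS-hardness} to obtain, in deterministic polynomial time, a $\weakcSIS$ input $\bvec{s}(\bvec{t}) = (\matA,\matG)$, which by construction is a key of $\HcSIS$. I then define $\crhfgen(1^k)$ to first run $\algo{Gen}_{\calH}(1^k)$ to produce $\bvec{t}$, and then output $\bvec{s}(\bvec{t})$. This gives an efficiently samplable distribution $\calD$ over keys of $\HcSIS$. The Karp reduction also supplies a polynomial-time ``solution-translation'' map $g$ that takes any collision $(\bvec{x}_1,\bvec{x}_2)$ for $\crhf_{\bvec{s}(\bvec{t})}$ and returns a collision for $\circuit{C}_{\bvec{t}}$, i.e.\ for $\crhf_{\bvec{t}}$.

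To establish collision-resistance of $(\calD,\HcSIS)$, I would argue by contrapositive. Suppose $\Adv$ is a PPT adversary that, given $\bvec{s} \gets \crhfgen(1^k)$, outputs a $\HcSIS$-collision with non-negligible probability $\epsilon(k)$. I would then build a PPT adversary $\Adv'$ for $\calH$ as follows: on input $\bvec{t} \gets \algo{Gen}_{\calH}(1^k)$, compute $\bvec{s}(\bvec{t})$ via the reduction, run $\Adv(1^k,\bvec{s}(\bvec{t}))$ to get a purported $\HcSIS$-collision, and apply the solution-translation map $g$ to obtain a $\calH$-collision. Because the reduction is deterministic, the distribution of $\bvec{s}(\bvec{t})$ seen by $\Adv$ is exactly $\calD$, so $\Adv'$ succeeds with probability at least $\epsilon(k)$, contradicting collision-resistance of $\calH$.

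The one step that requires real care is verifying the solution-translation map: the reduction of Lemma~\ref{lm:weakcSIS-hardness} outputs a $\weakcSIS$ instance whose collisions over $\binset^{\ell d + k}$ correspond, after projecting to the ``input variable'' coordinates, to collisions of the underlying circuit, but one must check that this projection never collapses two distinct $\HcSIS$ inputs into the same circuit input (this is ensured by the uniqueness clause in Proposition~\ref{prop:bInv-property}, which guarantees that the auxiliary vector $\bvec{u}$ is determined by $\bvec{x}$) and that the auxiliary output bit introduced in the proof of Lemma~\ref{lm:weakcSIS-hardness} rules out the trivial $\bvec{0}$-collision. Once this bookkeeping is in place, the corollary follows directly from the completeness of $\weakcSIS$ for $\wPPP$ together with the standard observation that every CRH family lies in $\wPPP$.
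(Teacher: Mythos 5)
Your proposal takes essentially the same route the paper intends: the paper asserts the corollary directly from the $\wPPP$-completeness of $\weakcSIS$ (Lemmas~\ref{lm:weakcSIS-inclusion} and~\ref{lm:weakcSIS-hardness}) together with the observation that every CRH family yields $\collision$ instances, and you correctly unpack this into a deterministic, distribution-preserving Karp reduction with a solution-translation map. One small inaccuracy in your bookkeeping paragraph: the reason a $\weakcSIS$-collision $(\bvec{x}_1,\bvec{x}_2)$ over the $k=2n$ coordinates (circuit-input bits together with the auxiliary bits $\bvec{w}$) projects to \emph{distinct} circuit inputs is not the uniqueness of $\bvec{u}$ from Proposition~\ref{prop:bInv-property} (that only makes $\bvec{x}\mapsto\bvec{u}$ well-defined), but the $2\matI$ block in $\matA$, which forces the $\bvec{w}$-coordinates of the two colliding $\HcSIS$ inputs to agree, so any difference must lie in the circuit-input coordinates.
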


\section{Lattice Problems and \texorpdfstring{$\boldsymbol{\PPP}$}{$\PPP$}}
\label{sec:lattices}

The lattice nature of the $\cSIS$ problem raises also the next question:
\begin{center}
  What is the connection between the class $\PPP$ and lattices?
\end{center}
\noindent In this section we present the known result for the lattice problems
that are contained in $\PPP$ and we also mention how some of these results are
implied by the completeness results that we presented in the previous sections.
We start with the formal definition of the lattice problems that we are
interested in.
\medskip

\paragr{Lattice Problems.} We recall some of the most important lattice
problems: the Shortest Vector Problem ($\gamma \mdash \svp$), the Shortest
Independent Vectors Problem ($\gamma \mdash \sivp$) and the Closest Vector
Problem ($\gamma \mdash \cvp$). We start by defining some important lattice
quantities. For a lattice $\lat$,
\begin{align} \label{eq:lamba1Definition}
  \dist(\vect, \lat)     & = \min_{\vecx \in \lat} ~ \norm{\vecx - \vect} \\
  \lambda_i(\lat(\matB)) & = \min_{\vecx \in \lat \setminus \spn(\vecv_1,\ldots,\vecv_{i-1}), \vecx \neq \vec{0}} \norm{\vecx}, \quad \text{for } i=1,\ldots,n
\end{align}

where $\norm{\vecv_i} = \lambda_i(\lat(\matB))$.

\begin{nproblem}[\gamma \mdash \svp]
\textsc{Input:} A $n$-dimensional basis $\matB \in \Z^{n \times n}$. \\
\textsc{Output:} A lattice vector $\vecv \in \lat$ such that
$\norm{\vecv} \leq \gamma(n) \cdot \lambda_1(\lat(\matB))$.
\end{nproblem}
\bigskip

\begin{nproblem}[\gamma \mdash \sivp]
\textsc{Input:} A $n$-dimensional basis $\matB \in \Z^{n \times n}$. \\
\textsc{Output:} A set of $n$ linearly independent lattice vectors $\vecv_1,
\ldots, \vecv_n$ such that $\max_{i} \norm{\vecv_i} \leq \gamma(n) \cdot
\lambda_n(\lat(\matB))$.
\end{nproblem}
\bigskip

\begin{nproblem}[\gamma \mdash \cvp]
\textsc{Input:} A $n$-dimensional basis $\matB \in \Z^{n \times n}$ and a
target vector $\vect \in \Q^n$. \\
\textsc{Output:} A lattice vector $\vecv$ such that
$\norm{\vecv - \vect} \leq \gamma(n) \cdot \dist(\vecv, \lat(\matB))$.
\end{nproblem}
\medskip

\begin{nproblem}[\frac{1}{\gamma} \mdash \bdd]
\textsc{Input:} A $n$-dimensional basis $\matB \in \Z^{n \times n}$ and a target
  vector $\vect \in \Q^n$, with $\vect \leq \frac{\lambda_1(\lat)}{2\gamma(n)}$.
  \\
\textsc{Output:} A lattice vector $\vecv$ such that $\norm{\vecv - \vect} =
  \dist(\vect, \lat(\matB))$.
\end{nproblem}
\medskip

Where $\gamma(n) \geq 1$ is a non-decreasing function in the lattice dimension
$n$. For $\gamma=1$ we get the exact version of the problem.

We now show that well-studied (approximation) lattice problems are contained in
$\PPP$. First, we define the $\Minkowski$ problem and show a reduction to
$\Blichfeldt$. Second, using known reductions, we conclude the membership of
other lattice problems to $\PPP$.

\begin{nproblem}[\Minkowski_p]
\textsc{Input:} A $n$-dimensional basis $\matB \in \Z^{n \times n}$ for a
  lattice $\lat = \lat(\matB)$.\\
\textsc{Output:} A vector $\vecx \in \lat$ such that $\norm{\vecx}_p \leq
  n^{1/p} \det(\lat)^{1/n}$.
\end{nproblem}

The authors in~\cite{BanJPPR15} give a reduction from $\Minkowski$ to $\pigeon$.
We follow a different approach by showing a reduction from $\Minkowski_p$ to
$\Blichfeldt$. We emphasize that, even though a proof of Minkowski's theorem
uses Blichfedlt's theorem, our reduction differs from this proof technique. We
restrict to subsets of integer points, and thus the inherently continuous
techniques used in the original proof of Minkowski's theorem via Blichfeldt's
theorem cannot be applied in our case.

\begin{lemma} \label{lm:Minkowski-inclusion}
  For $p \geq 1$ and $p=\infty$, $\Minkowski_p$ is in $\PPP$.
\end{lemma}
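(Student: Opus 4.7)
The plan is to reduce $\Minkowski_p$ to $\Blichfeldt$ by feeding it a cube of integer points that is just large enough to force a pigeonhole collision, yet small enough that any such collision already has the Minkowski bound. Concretely, given a basis $\matB$, I would compute $d = \det(\lat(\matB))$ in polynomial time, set $L = \floor*{d^{1/n}}$, and take $S' = ([0,L]^n \cap \Z^n) \setminus \{\vec{0}\}$. The two crucial counting facts are $L \leq d^{1/n}$ (so every element of $S'$, and every difference of two elements of $S'$, has $\ell_\infty$ norm at most $d^{1/n}$) and $(L+1)^n > d$ (so $|S'| = (L+1)^n - 1 \geq d$), which together make $(\matB, S')$ a valid $\Blichfeldt$ instance. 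Removing $\vec{0}$ from the cube is precisely what rules out the degenerate output $\vecx = \vec{0}$ in case~1 of $\Blichfeldt$.

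For the succinct representation of $S'$, I would invoke Lemma~\ref{lm:cubeSetSuccinctDescription} to obtain a polynomial-size value function for the full cube $[0,L]^n \cap \Z^n$, and then define $\Cvalue_{S'}$ by shifting the input index by one so as to skip the lex-first element $\vec{0}$. A solution returned by $\Blichfeldt$ is then either a non-zero $\vecv \in S' \cap \lat$ or a pair $\vecx \neq \vecy$ in $S'$ with $\vecv := \vecx - \vecy \in \lat \setminus \{\vec{0}\}$; in both cases $\vecv$ has coordinates in $\{-L,\ldots,L\}$, so $\norm{\vecv}_\infty \leq L \leq d^{1/n}$, and therefore $\norm{\vecv}_p \leq n^{1/p}\norm{\vecv}_\infty \leq n^{1/p} d^{1/n}$, as required. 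The case $p = \infty$ falls out of the same inequality with $n^{1/\infty} = 1$.

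The main subtlety will be calibrating $L$: it must be small enough to give the norm bound yet large enough to guarantee $|S'| \geq d$ after discarding $\vec{0}$. Choosing $L = \floor*{d^{1/n}}$ is essentially the unique workable choice, and verifying $(L+1)^n > d$ strictly (splitting on whether $d^{1/n}$ is an integer) is the one real counting step. Once this is in place, the remaining ingredients—computing $\det(\matB)$ in polynomial time, extracting $L$ by binary search, shifting the value-function index, and observing that case~0 of $\Blichfeldt$ cannot fire because the value function is correct by construction—are all routine. The reduction is then a clean Karp reduction to $\Blichfeldt$, and the lemma follows from Theorem~\ref{thm:BlichfeldtPPPcompleteness}.
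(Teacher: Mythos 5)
Your proposal follows the paper's proof essentially verbatim: you set $L = \floor*{\det(\lat)^{1/n}}$, take $S = ([0,L]^n \cap \Z^n)\setminus\{\vec{0}\}$ with the index-shifted value function from Lemma~\ref{lm:cubeSetSuccinctDescription}, verify $(L+1)^n \geq \det(\lat)+1$ by splitting on whether $\det(\lat)^{1/n}$ is an integer, and read off the Minkowski bound from either branch of $\Blichfeldt$'s output. The only cosmetic difference is that you inline the norm inequality $\norm{\vecv}_p \leq n^{1/p}\norm{\vecv}_\infty$, whereas the paper factors it out as a preliminary Karp-reduction from $\Minkowski_p$ to $\Minkowski_\infty$; the content is identical.
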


\begin{proof}
  For any $p \geq 1$ it holds that $\norm{\vecx}_\infty \leq n^{1/p}
  \norm{\vecx}_p$. This implies a Karp-reduction from $\Minkowski_p$ to
  $\Minkowski_\infty$. Hence, it suffices to show a Karp-reduction from
  $\Minkowski_\infty$ to $\Blichfeldt$.

  Let $\matB \in \Z^{n \times n}$ be an input to $\Minkowski_p$, i.e. a basis
  for $\lat = \lat(\matB)$. Let $\ell = \floor{\det(\lat)^{1/n}}$, $S =
  ([0,\ell]^n \cap \Z^n) \setminus \{\vec{0}\}$ and $s = \abs{S} = (\ell + 1)^n
  - 1$. Define the value function representation for $S$ to be $(s,\Cvalue_S(x))
  = (s, \Cvalue_{[0,\ell]^n}(x+1))$, where the circuit $\Cvalue_{[0,\ell]^n}$ is
  constructed as in Lemma~\ref{lm:cubeSetSuccinctDescription}.

  To show that $\matB$ and $S$ are a valid input for $\Blichfeldt$, we need to
  show $\abs{S} \geq \det(\lat) \Rightarrow (\floor{\det(\lat)^{1/n}} + 1)^n
  \geq \det(\lat)^{1/n} + 1$. This follows from the next claim for $x =
  \det(\lat)$.

  \begin{claim}
    For any $x \in \Z$ and $n \geq 1$, $(\floor{x^{1/n}} + 1)^{n} \geq x + 1$.
  \end{claim}

  \begin{proof}
    If $x = k^n$ for some $k\in \Z$, then $\floor{x^{1/n}} = k$. Hence, $(k +
    1)^n \geq k^n +1 \Rightarrow (\floor{x^{1/n}} + 1)^{n} \geq x + 1$.
    Otherwise, let $k \in \integer$ be the smallest integer such that $x < k^n$.
    Then, $\floor{x^{1/n}} = k-1$ and $x+1 \leq k^n$. Hence, $ (\floor{x^{1/n}}
    + 1)^{n}  = k^n \geq x+1$.
  \end{proof}

  Finally, $\Blichfeldt$ on input $\matB$ and $S$ will output one of following:
  \begin{Enumerate}
    \item a vector $\vecx$ such that $\vecx \in S \cap \lat$. Since
          $\vecx \in S$, we get $\norm{x}_\infty \leq \det(\lat)^{1/n}$. Hence,
          $\vec{x}$ is a solution to $\Minkowski_\infty$.
    \item two vectors $\vecx \neq \vecy$, such that $\vecx, \vecy \in S$ and
          $\vecx - \vecy \in \lat$. Since $\vecx, \vecy \in S$, we get  $\vecx	-
          \vecy \in  [-\ell,\ell]^n$. Hence, $\norm{\vecx-\vecy}_\infty \leq
          \det(\lat)^{1/n}$ and $\vecx - \vecy$ is a solution to
          $\Minkowski_\infty$.
  \end{Enumerate}
\end{proof}

A direct corollary of the above lemma is that the most common version of the
$\Minkowski$ problem in the $\ell_2$-norm, is in $\PPP$.
\begin{corollary} \label{cor:Minkowski2-inclusion}
  $\Minkowski_2$ is in $\PPP$.
\end{corollary}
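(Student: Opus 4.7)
The plan is extremely short, since the corollary is essentially an immediate specialization of the preceding Lemma \ref{lm:Minkowski-inclusion}. That lemma states that $\Minkowski_p$ belongs to $\PPP$ for every $p \geq 1$ and for $p = \infty$, so it suffices to instantiate the parameter $p = 2$ and invoke the lemma directly.

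More concretely, I would recall the Karp-reduction chain established in the proof of Lemma \ref{lm:Minkowski-inclusion}: first, the norm inequality $\norm{\vec{x}}_\infty \le n^{1/p}\,\norm{\vec{x}}_p$, valid for any $p \geq 1$, gives a Karp-reduction from $\Minkowski_p$ to $\Minkowski_\infty$; second, the reduction from $\Minkowski_\infty$ to $\Blichfeldt$ built in that proof places $\Minkowski_\infty$ in $\PPP$; third, $\Blichfeldt \in \PPP$ by Lemma \ref{lm:inclusion-Blichfeldt} (or equivalently by Theorem \ref{thm:BlichfeldtPPPcompleteness}). Composing these Karp-reductions at $p = 2$ yields the corollary.

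Since all three ingredients have already been proved, there is no real obstacle: the only thing to do is to state that setting $p = 2$ in Lemma \ref{lm:Minkowski-inclusion} immediately gives the claim, observing that the norm inequality $\norm{\vec{x}}_\infty \le \sqrt{n}\,\norm{\vec{x}}_2$ used in the reduction is well-known and requires no separate argument. Thus the proof can be stated as a single sentence citing Lemma \ref{lm:Minkowski-inclusion} with $p = 2$.
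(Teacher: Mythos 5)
Your proposal is correct and matches the paper exactly: the corollary is stated as a direct specialization of Lemma~\ref{lm:Minkowski-inclusion} to $p=2$, with no further argument needed. The extra recollection of the reduction chain (via $\Minkowski_\infty$ and $\Blichfeldt$) is harmless but not required.
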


Furthermore, there are known connections between $\gamma \mdash \svp$ with
polynomial approximation factor and the class $\PPP$. Specifically, it is known
that $n \mdash \svp$ is Cook-reducible to $\Minkowski_\infty$ (see
\cite[Theorem 1.23]{Rothvoss15}).

This, along with the reduction from $\gamma\mdash\svp$ to $\sqrt{n}\gamma^2
\mdash \cvp$ in~\cite{Noah15} (for $\gamma=n$) and
Lemma~\ref{lm:Minkowski-inclusion}, implies the following.

\begin{corollary}
  $n \mdash \svp$ and $n^{2.5}\mdash\cvp$ are Cook-reducible to $\Blichfeldt$.
\end{corollary}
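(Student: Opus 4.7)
The plan is to chain together Cook reductions, using the fact that Cook reducibility is transitive: if problem $A$ is Cook-reducible to $B$ and $B$ is Cook-reducible to $C$, then $A$ is Cook-reducible to $C$, since the composition of polynomial-time oracle Turing machines is again a polynomial-time oracle Turing machine (each oracle query to $B$ is answered by running the inner reduction, which issues polynomially many $C$-queries).

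For the $n\mdash\svp$ statement, the first step is to invoke the already-cited \cite[Theorem 1.23]{Rothvoss15}, which gives a Cook reduction from $n\mdash\svp$ to $\Minkowski_\infty$. The second step is Lemma \ref{lm:Minkowski-inclusion}, which provides a Karp reduction from $\Minkowski_p$ (in particular $\Minkowski_\infty$) to $\Blichfeldt$; a Karp reduction is a special case of a Cook reduction (one oracle call followed by simple post-processing). Composing these two reductions yields that $n\mdash\svp$ is Cook-reducible to $\Blichfeldt$.

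For the $n^{2.5}\mdash\cvp$ statement, I would additionally invoke the reduction of \cite{Noah15} relating $\gamma\mdash\svp$ and $\sqrt{n}\gamma^2\mdash\cvp$: setting $\gamma = n$ gives a Cook reduction from $n^{2.5}\mdash\cvp$ to $n\mdash\svp$, since $\sqrt{n}\cdot n^2 = n^{2.5}$ exactly matches the target approximation factor in the corollary. Composing with the reduction from $n\mdash\svp$ to $\Blichfeldt$ established just above, transitivity yields that $n^{2.5}\mdash\cvp$ is Cook-reducible to $\Blichfeldt$, as desired.

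Since each building block is already in place---Lemma \ref{lm:Minkowski-inclusion} of this paper, and the two external reductions of \cite{Rothvoss15} and \cite{Noah15}---the proof carries no new mathematical content beyond transitivity. The only real care required is bookkeeping: checking that the approximation factors combine as stated (the $\sqrt{n}\cdot n^2 = n^{2.5}$ arithmetic), and ensuring that the external reductions are invoked in the direction required (namely, reducing the harder-looking lattice problem to the Minkowski/Blichfeldt-style object which is known to lie in $\PPP$). That bookkeeping, rather than any deep obstacle, is the main thing to get right.
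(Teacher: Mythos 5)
Your proof is correct and follows precisely the chain the paper intends: $n\mdash\svp \to \Minkowski_\infty$ (Cook, via \cite[Theorem 1.23]{Rothvoss15}), $\Minkowski_\infty \to \Blichfeldt$ (Karp, via Lemma~\ref{lm:Minkowski-inclusion}), and $n^{2.5}\mdash\cvp \to n\mdash\svp$ (Cook, via \cite{Noah15} with $\gamma = n$), composed by transitivity. You also correctly resolve the direction of the \cite{Noah15} reduction, which the paper's informal phrasing (``the reduction from $\gamma\mdash\svp$ to $\sqrt{n}\gamma^2\mdash\cvp$'') leaves ambiguous but which must run from CVP to SVP for the corollary to go through.
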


A special type of lattices, that have gained a lot of attention due to their
efficiency in cryptographic applications, are \emph{ideal lattices}. The
definition and cryptographic applications of ideal lattices are outside the
scope of this work and can be found in \cite{LyubashevskyPR13}. But, we include
the following lemma, which needs only the basic fact that $\lambda_1(\lat) =
\lambda_2(\lat) = \dots = \lambda_n(\lat)$ in ideal lattices, where $\lambda_i$
is the length of the $i$-th linearly independent vector. Let us denote by
$\gamma \mdash \problem{iSVP}$ the shortest vector problem on ideal lattices.

\begin{lemma}\label{lm:nSVPinPPPforIdealLattices}
	$\sqrt{n} \mdash \problem{iSVP}$ is  in $\PPP$.
\end{lemma}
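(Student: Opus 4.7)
The plan is to give a short Karp-reduction from $\sqrt{n} \mdash \problem{iSVP}$ to $\Minkowski_2$, which is already known to lie in $\PPP$ by Corollary~\ref{cor:Minkowski2-inclusion}. On input a basis $\matB \in \Z^{n \times n}$ of an ideal lattice $\lat = \lat(\matB)$, I would simply forward $\matB$ verbatim to an oracle for $\Minkowski_2$; the oracle returns a nonzero lattice vector $\vecx \in \lat$ satisfying $\norm{\vecx}_2 \leq \sqrt{n} \det(\lat)^{1/n}$, and I would output this same $\vecx$ as my solution to $\sqrt{n} \mdash \problem{iSVP}$. (Recall that the $\Blichfeldt$-based proof of Lemma~\ref{lm:Minkowski-inclusion} uses the pointed set $S = ([0,\ell]^n \cap \Z^n) \setminus \{\vec{0}\}$, so the returned vector is always nonzero.)

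The substantive step is to verify the approximation factor. I would exploit the defining property of ideal lattices, namely $\lambda_1(\lat) = \lambda_2(\lat) = \cdots = \lambda_n(\lat)$, to show that
\begin{equation*}
  \det(\lat)^{1/n} \leq \lambda_1(\lat).
\end{equation*}
By the equality of successive minima, there exist $n$ linearly independent vectors $\vecv_1, \dots, \vecv_n \in \lat$ with $\norm{\vecv_i}_2 = \lambda_1(\lat)$ for every $i$. The sublattice $\lat' = \lat(\vecv_1, \dots, \vecv_n) \subseteq \lat$ has determinant at most $\prod_i \norm{\vecv_i}_2 = \lambda_1(\lat)^n$ by Hadamard's inequality, and since $\lat' \subseteq \lat$ we have $\det(\lat) \leq \det(\lat')$. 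Combining these two inequalities and taking $n$-th roots yields the claim.

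Putting the two pieces together, the vector returned by the $\Minkowski_2$ oracle obeys
\begin{equation*}
  \norm{\vecx}_2 \;\leq\; \sqrt{n} \cdot \det(\lat)^{1/n} \;\leq\; \sqrt{n} \cdot \lambda_1(\lat),
\end{equation*}
so it is a valid solution to the $\sqrt{n}$-approximate shortest vector problem on $\lat$. I do not anticipate a real obstacle in this proof: the content is entirely in the Hadamard-inequality observation that exploits the equality of successive minima in ideal lattices, and everything else is inherited for free from the $\PPP$ membership of $\Minkowski_2$.
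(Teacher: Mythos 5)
Your proof is correct and follows essentially the same route as the paper: forward the basis to $\Minkowski_2$, use $\lambda_1 = \cdots = \lambda_n$ for ideal lattices to obtain $\lambda_1 \geq \det(\lat)^{1/n}$, and combine with the $\Minkowski_2$ guarantee $\norm{\vecx}_2 \leq \sqrt{n}\det(\lat)^{1/n}$. The only cosmetic difference is that you derive the inequality $\prod_i \lambda_i \geq \det(\lat)$ from Hadamard's inequality and sublattice monotonicity, whereas the paper simply cites it as (the lower-bound half of) Minkowski's second theorem; your explicit note that the $\Blichfeldt$-based construction excludes $\vec{0}$ from $S$, ensuring nonzero output, is a sound observation the paper leaves implicit.
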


\begin{proof}
	For ideal lattices, it holds that  $\lambda_1(\lat) = \lambda_2(\lat) = \dots
	= \lambda_n(\lat)$. Minkowski's second Theorem states that
  \[\lambda_1 \cdot \lambda_2 \cdot \dots \cdot \lambda_n \geq \det(\lat).\]
  Hence, on ideal lattices	$\lambda_1 \geq \det(\lat)^{1/n}$. Combining this
  with the first Minkowski's theorem, that states that $ \lambda_1 \leq \sqrt{n}
  \det(\lat)^{1/n}$, we get that $\Minkowski_2$ with input $\lat$ solves
  $\sqrt{n} \mdash \svp$ of $\lat$.
\end{proof}

\section{Cryptographic Assumptions in \texorpdfstring{$\boldsymbol{\PPP}$}{$\PPP$}}
\label{sec:cryptoHard}

The fact that the class $\PPP$ exhibits strong connections to cryptography was
already known since its introduction \cite{Papadimitriou1994}.
Papadimitriou~\cite{Papadimitriou1994} showed that if $\PPP = \FP$, then one-way
permutations do not exist. Since there are constructions of one-way permutations
based on the discrete logarithm problem on $\Zp^*$ for a prime $p$, the problem
of factoring Blum integers\footnote{A Blum integer is the product of two
distinct primes $p,q$ such that $p \equiv 3 \pmod{4}$ and $q \equiv 3
\pmod{4}$.} ~\cite{Rabin79}, the RSA assumption~\cite{RSA78} and a special type
of elliptic curves~\cite{Kaliski1991}, all these cryptographic assumptions
become insecure if $\PPP = \FP$.

Moreover, if $\wPPP = \FP$, then collision-resistant hash functions do not
exist; this follows directly from the definition of the class $\wPPP$.
Collision-resistant hash function families can be constructed based, for
example, on the hardness assumption of the discrete logarithm problem over
$\Zp^*$ for a prime $p$ \cite{C:ChaVanPfi91}, and from the $\sis$
problem~\cite{MicRegev07} problem, which implies that these assumptions become
insecure if $\PPP = \FP$ or even if $\wPPP = \FP$.

Additionally, recently it was shown that integer factorization is in $\PPP$
\cite{Jerabek16}. In fact, Je\v{r}\'{a}bek~\cite{Jerabek16} showed that integer
factorization lies in $\wPPP \cap \PPA$, which gives strong evidence that
factoring is not $\PPP$-complete.

We extend the connections between cryptographic assumptions and $\PPP$ by
presenting a more general formulation of discrete logarithm and showing its
membership to $\PPP$.  We call this formulation \emph{discrete logarithm over
general groups}.

\begin{definition} \label{def:generalGroup}
	A general group $(\G,\star)$ is a cyclic group
	for which there exists a circuit describing a bijection
	$\Cindex_{\G}: \G \to [\ord(\G)]$, which we call the
	\textit{index function of \G} in analogy to the definition of
	Section~\ref{sec:prelims:setDescription}, and a circuit
	$f: [\ord(\G)] \times [\ord(\G)] \to [\ord(\G)]$, which we call group
  operation function, such that for all  $\mathbf{x},\mathbf{y} \in \G$,
$\Cindex_\G(\mathbf{x} \star \mathbf{y}) =
f(\Cindex_{\G}(\mathbf{x}),\Cindex_{\G}(\mathbf{y}))$. The group $(\G, \star)$
is defined by
\begin{Enumerate}
	\item a number $g$ in $[\ord(\G)]$ such that $\Cindex_{\G}(\mathbf{g}) =
	g$, where
	$\mathbf{g}$ is the generator of $\G$,
	\item a number $id$ in $[\ord(\G)]$ such that $\Cindex_{\G}(\mathbf{id}) =
	id$, where
	$\mathbf{id}$ is the identity element of $\G$,
	\item the polynomial-size circuit for the group operation function $f$.
\end{Enumerate}
\end{definition}

In order to illustrate the above definition we show that the multiplicative
group $\Z_p^*$ for prime $p$ is a general group. First, note that $\ord(\Z_p^*)
= p-1$. Let $c \in \Z_p^*$ be a generator of $\Z_p^*$, then we set $g = c - 1
\in [p-1]$, $id = 0$ and we define $f:[p-1] \times [p-1] \to [p-1]$ as follows:
let $x,y \in \Z_p^*$, then $f(x,y) = (x+1)\cdot (y+1) - 1 \pmod{q}$. Now, we are
ready to define the discrete logarithm problem over a general group.

\begin{nproblem}[\problem{DLOG}]
  \textsc{Input:} An alleged general group $(\G, \star)$ represented by $(g,id,
  f_\G, \Cindex_{\G})$\footnote{Observe that in this definition $g$ and $id$
	are just elements of $\G$ and $f_\G, \Cindex_{\G}$ are given in the form of
	circuits.} and a target $y \in [s]$.  \\
\textsc{Output:} One of the following:
\begin{Enumerate}
	\item an $x \in [s]$ such that  $\Cindex_{\G}(\mathbf{g}^x) =
y$,
	\item two $x,y \in [s]$ such that $x \neq y$ and
	$\Cindex_{\G}(\mathbf{g}^x) = \Cindex_{\G}(\mathbf{g}^y)$.
	\end{Enumerate}
\end{nproblem}

\begin{remark} We aim at defining only syntantic problems, so even though we
could define the promise version of the problem where $(\G, \star)$ is always a
general group, we choose to define it as above. In the first case, the output is
the discrete logarithm of the target element, whereas in the second case, the
output is a proof that the tuple $(g, id, f_\G)$ does not represent a general
group.

Also, we observe that given an element $x \in [\ord(\G)]$ and the representation
$(g,id, f_\G)$ of a general group, there exists an efficient procedure for
computing $\Cindex_{\G}(\mathbf{g}^x)$. For instance, using repeating squaring
starting with $f_\G(g,g)$.
\end{remark}

\begin{theorem}
	$\problem{DLOG}$ is in $\PPP$.
\end{theorem}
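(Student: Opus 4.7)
The plan is to give a Karp reduction from $\problem{DLOG}$ to $\pigeon$. Given an instance $(g, id, f_\G, \Cindex_\G)$ together with a target $y \in [s]$ (where $s$ is the size of the implicit range $[\ord(\G)]$), I set $n = \ceil{\log s}$ and build a circuit $\circuit{C} \colon \binset^n \to \binset^n$ as follows. On input $\bvec{x}$ let $x = \BitComp(\bvec{x})$. If $x \geq s$, then $\circuit{C}$ outputs $\bvec{x}$ unchanged. Otherwise, $\circuit{C}$ computes $z_x = \Cindex_\G(\mathbf{g}^x) \in [s]$ by repeated squaring entirely at the level of indices: starting from $g$, iteratively form $\Cindex_\G(\mathbf{g}^{2^i})$ for $i = 0,\dots,n-1$ using $f_\G$, and combine those whose bit in $x$ is $1$ into an accumulator initialized to $id$. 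Finally $\circuit{C}$ outputs $\BitDecomp\bigl((z_x - y) \bmod s\bigr)$, padded to $n$ bits.

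First I would argue that $\circuit{C}$ has polynomial size: the two circuits $f_\G$ and $\Cindex_\G$ are given as polynomial-size circuits by Definition~\ref{def:generalGroup}, repeated squaring invokes $f_\G$ only $O(n)$ times, and modular subtraction by $y$ together with the comparison $x \overset{?}{\geq} s$ are standard polynomial-size subcircuits. Thus $\circuit{C}$ is a valid $\pigeon$ instance of polynomial size in the description of the $\problem{DLOG}$ input. Next I would show that any $\pigeon$ solution for $\circuit{C}$ maps back to a $\problem{DLOG}$ solution. The crucial structural observation is that inputs with $x \geq s$ are sent to themselves (hence their images all satisfy $\BitComp \geq s$), while inputs with $x < s$ are sent to strings whose bit-composition lies in $[s]$; the two image regions are therefore disjoint, and $\circuit{C}$ is trivially injective on the high region. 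Consequently: (i) a zero preimage must satisfy $x < s$ and $z_x = y$, i.e. $\Cindex_\G(\mathbf{g}^x) = y$, yielding a type-$1$ $\problem{DLOG}$ output; (ii) a collision $\bvec{x}_1 \neq \bvec{x}_2$ must have $x_1, x_2 < s$ and, since $t \mapsto (t-y) \bmod s$ is a bijection on $[s]$, force $\Cindex_\G(\mathbf{g}^{x_1}) = \Cindex_\G(\mathbf{g}^{x_2})$, which is a type-$2$ output certifying that $(g, id, f_\G, \Cindex_\G)$ does not describe a general group.

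The main obstacle I expect is cosmetic rather than conceptual: accommodating the case $s < 2^n$, where the padding slice $\{x : x \geq s\}$ of the domain could in principle create spurious collisions or spurious zero preimages. Setting $\circuit{C}$ to act as the identity on the padding region, together with the disjoint-image argument above, resolves this cleanly, mirroring the padding trick already used in the proof of Lemma~\ref{lm:inclusion-Blichfeldt}. Everything else reduces to standard bookkeeping: a polynomial-size modular exponentiation circuit built from $f_\G$, and a modular subtraction by the fixed target $y$.
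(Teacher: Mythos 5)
Your proposal is correct and follows essentially the same route as the paper: reduce to $\pigeon$ by building a circuit that computes $\Cindex_\G(\mathbf{g}^x)$ via repeated squaring with $f_\G$ and then applies a shift by the target $y$, so that a zero preimage yields the discrete logarithm and a collision yields two exponents whose indices agree. The two places where you deviate from the paper's write-up are both improvements rather than detours. First, the paper's circuit outputs $\abs{\Cindex_\G(\mathbf{g}^{\BitComp(\bvec{x})}) - y}$, and its collision analysis jumps from $\circuit{C}(\bvec{x})=\circuit{C}(\bvec{y})$ to $\Cindex_\G(\mathbf{g}^{\BitComp(\bvec{x})})=\Cindex_\G(\mathbf{g}^{\BitComp(\bvec{y})})$; but $\abs{a-y}=\abs{b-y}$ only gives $a=b$ or $a+b=2y$, and the latter can occur even for a perfectly valid general group, in which case the $\pigeon$ collision does \emph{not} translate to a type-$2$ $\problem{DLOG}$ output. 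Your replacement $t \mapsto (t-y)\bmod s$ is a bijection on $[s]$, so it eliminates this case and makes the implication sound. Second, you explicitly handle the padding region $\BitComp(\bvec{x}) \geq s$ by making $\circuit{C}$ act as the identity there and then arguing the two image regions are disjoint; the paper's proof is silent on this, even though the analogous padding trick is spelled out in the proof of Lemma~\ref{lm:inclusion-Blichfeldt}. In short: same reduction, but your version closes two small gaps in the paper's own argument.
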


\begin{proof}
	Let $(g, id, f_\G, y)$ be an input for $\problem{DLOG}$ and $n =
	\ceil{\log(s)}$, then we construct a circuit $\circuit{C}: \binset^n \to
	\binset^n$ as follows: For a binary vector $\bvec{x} \in \binset^n$,
	$\circuit{C}(\bvec{x}) = \abs{\Cindex_{\G}(\mathbf{g}^{\BitComp(\bvec{x})}) -
	y}$. As we mentioned in the remark above, we can compute
	$\circuit{C}(\bvec{x})$ efficiently using only $g$, $id$ and $f_\G$.

	Then, the output of $\pigeon$ with input $\circuit{C}$ is one of the
	following:
	\begin{Enumerate}
		\item a binary vector $\bvec{x}$ such that $\circuit{C}(\bvec{x}) =
		\bvec{0}$.

		In this case, $x = \BitComp(\bvec{x})$ is the discrete logarithm of
		$y$. Namely, $\Cindex_{\G}(\mathbf{g}^{x})=y$.

		\item two binary vectors $\bvec{x}$, $\bvec{y}$ such that $\bvec{x}
		\neq \bvec{y}$ and $\circuit{C}(\bvec{x}) = \circuit{C}(\bvec{y})$.

		In this case, $\Cindex_{\G}(\mathbf{g}^{\BitComp(\bvec{x})})  =
		\Cindex_{\G}(\mathbf{g}^{\BitComp(\bvec{y})})$ implies one of the following:
		(a) $f_{\G}$ does not implement a valid group operation over $\G$, (b)
		$\Cindex_{\G}$ is not a valid index circuit of the set $\G$, or (c)
		$\mathbf{g}^{\BitComp(\bvec{x})}= \mathbf{g}^{\BitComp(\bvec{y})}$. In
		either case this is a certificate that the input $(g, id, f_\G,
		\Cindex_{\G})$ does  not represent a valid general group.
	\end{Enumerate}
\end{proof}

\begin{openProblem}
	\textit{Is $\problem{DLOG}$ $\PPP$-hard?}
\end{openProblem}

We note that it is not known if elliptic curves are general groups, and hence
discrete logarithm over general elliptic curves is not known to belong in $\PPP$
(see Open Problem \ref{op:ellipticCurves}).
 \clearpage

\section*{Acknowledgements}
We thank the anonymous FOCS reviewers for their helpful comments.
We thank an anonymous reviewer and Nico D\"{o}ttling for bringing in our
attention a universal CRH following Levin's paradigm.
We thank Vinod Vaikuntanathan, Daniel Wichs and Constantinos Daskalakis for
helpful and enlightening discussions. We thank Christos-Alexandros Psomas
and his coauthors for sharing their unpublished manuscript \cite{BanJPPR15}.
MZ also thanks Christos-Alexandros Psomas and Christos Papadimitriou for
many fruitful discussions during his visit to Simons Institute at Berkeley at
Fall 2015.

\bibliographystyle{alpha}
\bibliography{abbrev3,crypto,ref,giorgos}

\clearpage
\appendix

\section{Missing Proofs} \label{sec:app:missingProofs}

\subsection{Proof of Claim \ref{claim:nand}} \label{sec:app:proofOfClaimNand}

\begin{proof}[Proof of Claim \ref{claim:nand}.]
  The proof of this claim follows easily from the next Table
  \ref{tbl:proofOfClaimNand}, from which we conclude various relationships
  between boolean functions with two inputs and modular equations $\pmod{4}$.
  \begin{table}[!h]
    \centering
    \begin{tabular}{ c c c c | c | c | c }
               $\boldsymbol{x}$ & $\boldsymbol{y}$ & $\boldsymbol{z}$ & $\boldsymbol{w}$ & $\boldsymbol{w + 2z - x - y}$ & $\boldsymbol{z + 2w - x - y}$ & $\boldsymbol{w + 2z + x + y}$  \\[3.4pt]
      \hline
      \hline
               $0$              & $0$              & $0$              & $0$              & $0$                           & $0$                           & $0$                   \Tstruts \\[2pt]
               $0$              & $0$              & $0$              & $1$              & $1$                           & $2$                           & $1$                            \\[2pt]
               $0$              & $0$              & $1$              & $0$              & $2$                           & $1$                           & $2$                            \\[2pt]
               $0$              & $0$              & $1$              &
               $1$              & $-1$                           &
               $-1$                           & $-1$
               \\[2pt]
      \hline
               $0$              & $1$              & $0$              & $0$              & $-1$                          & $-1$                          & $1$                   \Tstruts \\[2pt]
               $0$              & $1$              & $0$              & $1$              & $0$                           & $1$                           & $2$                            \\[2pt]
               $0$              & $1$              & $1$              &
               $0$              & $1$                           &
               $0$                           & $-1$
               \\[2pt]
               $0$              & $1$              & $1$              &
               $1$              & $2$                           &
               $2$                           & $0$
               \\[2pt]
      \hline
               $1$              & $0$              & $0$              & $0$              & $-1$                          & $-1$                          & $1$                   \Tstruts \\[2pt]
               $1$              & $0$              & $0$              & $1$              & $0$                           & $1$                           & $2$                            \\[2pt]
               $1$              & $0$              & $1$              &
               $0$              & $1$                           &
               $0$                           & $-1$
               \\[2pt]
               $1$              & $0$              & $1$              &
               $1$              & $2$                           &
               $2$                           & $0$
               \\[2pt]
      \hline
               $1$              & $1$              & $0$              &
               $0$              & $2$                          &
               $2$                          & $2$                   \Tstruts
               \\[2pt]
               $1$              & $1$              & $0$              &
               $1$              & $-1$                          &
               $0$                           & $-1$
               \\[2pt]
               $1$              & $1$              & $1$              &
               $0$              & $0$                           &
               $-1$                          & $0$
               \\[2pt]
               $1$              & $1$              & $1$              &
               $1$              & $1$                           &
               $1$                           & $1$
               \\[2pt]
    \end{tabular}
    \caption{The values of specific modular expressions $\pmod{4}$ for all
    different binary
             values of the variables.}
    \label{tbl:proofOfClaimNand}
  \end{table}
\end{proof}

\section{Parameters of
  \texorpdfstring{$\boldsymbol{\cSIS}$}{$\cSIS$}
   and
  \texorpdfstring{$\boldsymbol{\weakcSIS}$}{$\weakcSIS$}}

We give a summary of the notation that we use for the parameters of
$\cSIS$ and $\weakcSIS$ (or $\HcSIS$).

\begin{table}[!h]
	\centering
	\begin{tabular}{ c c c }
		Notation &      Parameter 			 		& Problem 		\\[3.4pt]
	\hline
	\hline
		$n$ 	 &    circuit input size 		 	& both 					\\
		$k$		 & 	  $\HcSIS$ input size   	 	& $\weakcSIS$   		\\
		$d$ 	 &      rows of $\matG$ / gates of $\circuit{C}$	 	    & both 				    \\
		$m$ 	 &  columns of $\matA$ and $\matG$	& both					\\
	  $r\ell$  &    output of $\HcSIS$			& $\weakcSIS$
\end{tabular}
\label{tbl:notatioCSIS}
\end{table}

 Also, the known relations between these quantities for a valid $\cSIS$ or
 $\weakcSIS$ input are:
 \[ k = \ell \cdot n, ~~~~ m \geq d \ell + k = d \ell + n \ell, ~~~~ k > r \ell. \]

\section{Universal Collision Resistant Hash Function Family}
\label{sec:app:universalHash}

We sketch the construction of a universal (average-case hard) hash function,
following Levin's paradigm~\cite{Levin1987} for a universal one-way function.
Let $h$ be a hash function family that takes two inputs a key $k$ and a vector
$x \in \bit^n$, and compresses the input $x$ by one bit, i.e. $h(k,x) \in
\bit^{n-1}$. Let $p(\cdot)$ be a polynomial that bounds the running time of
$h(k, \cdot)$. First, using padding on the input we argue the existence of a
hash function family $h'$ that is defined as
\[ h'(k, x \circ y) = h(k, x) \circ y \]
such that $|x \circ y| = p (|x|)$. This implies that $h'(k,\cdot)$ runs in
quadratic time in $|x \circ y|$ (see~\cite[§2.4.1]{Goldreich:2006:FCV:1202577}).
Second, using standard domain extension we argue the existence of a hash
function family $h''$ that compresses the input for more than one bits, i.e.
$h'': \bit^{n'} \rightarrow \bit^{n - 1}$ (we exclude the key $k$ from the
description of the domain). Specifically, we require that the compressing ratio
is enough so that the concatenation of $m$ copies of $h''(k,\cdot)$ is smaller
than the input, meaning that $n' > m \cdot \left( p(n) - 1 \right)$. Let
$p''(\cdot)$ be a polynomial that bounds the running time of $h''$. The hash
function $h'(k,\cdot)$ runs in time quadratic to its input, and using this fact,
$p''(\cdot)$ can be made explicit depending on the length of the extended domain
that we require from $h''$, once that length is explicitly specified. This is
enough to get an upper bound for the running time of $h''(k, \cdot)$.

The universal hash function is described by a collection of $2 \cdot m + 1$
strings:
\[ h_{\mathsf{uni}}= (i_1, \ldots, i_m, k_1, \ldots, k_m, x). \]

The numbers $i_1,\ldots,i_m$ (represented as strings) are the indices to Turing
Machines (assume a canonical ordering of TMs) that describe $m$ different hash
function families $h''_{i_j}: \bit^{n'} \rightarrow \bit^{n - 1}$, $j = 1,
\ldots, m$. The keys $k_j$ define the hash functions $h''_{i_j}(k_j, \cdot)$.
Finally, for $j=1,\ldots,m$, we run each $h''_{i_j}(k_j, x)$ for at most
$p''(|x|)$ steps and output:
\[ h_{\mathsf{uni}}(x) = h''_{i_1}(k_1, x) \circ \cdots \circ h''_{i_m}(k_m, x).
\]

If $h''_{i_j}(k_j, x)$ does not terminate after $p''(|x|)$, we output $\bot$ for
that $j$. We can see that if at least one hash function family $h''_{i_j}$ is
collision-resistant, then so is $h_{\mathsf{uni}}$. At a high level, if at least
one hash function family $h_{i_j}$ is collision-resistant then so is $h'_{i_j}$,
and moreover so is $h''_{i_j}$  by the security of domain extension (e.g.
Merkle–Damgård). Without loss of generality we assume that all families
$h''_{i_j}$ are defined on the same domain and range. Finally, the simple hash
function combiner in which we concatenate the output of $m$ different hash
functions $h''_{i_j}(k_j, \cdot)$, is collision-resistant as long as at least
one hash function family $h''_{i_j}$ is collision-resistant.
 \end{document}